\DeclareMathAlphabet\mathbfcal{OMS}{cmsy}{b}{n}
\DeclareMathOperator{\SW}{SW}
\newcommand{\R}{\mathcal{R}}
\newcommand{\X}{\mathcal{X}}
\newcommand{\p}{\prime}
\newcommand{\C}{\mathcal{C}}
\newcommand{\K}{\mathcal{K}}
\newcommand{\real}{\mathbb{R}}
\newcommand{\bcO}{\mathbfcal{O}}
\newcommand{\bcA}{\mathbfcal{A}}
\newcommand{\bcB}{\mathbfcal{B}}
\newcommand{\cO}{\mathcal{O}}
\newcommand{\legal}{\mathscr{L}}
\newcommand{\va}{\mathbf{a}}
\newcommand{\vp}{\mathbf{p}}
\newtheorem{theorem}{Theorem}[section]
\newtheorem{definition}[theorem]{Definition}
\newtheorem{remark}[theorem]{Remark}
\newtheorem{lemma}[theorem]{Lemma}
\newtheorem{corollary}[theorem]{Corollary}
\newtheorem{claim}[theorem]{Claim}
\date{\today}
\title{A Two-Step Approach to Optimal Dynamic Pricing in Multi-Demand Combinatorial Markets}
\author{Kanstantsin Pashkovich and Xinyue Xie}
\affil{University of Waterloo\\
 Department of Combinatorics and Optimization\\
200 University Avenue West,
Waterloo, ON, Canada  N2L 3G1\\
	\emph{kpashkovich@uwaterloo.ca}, \emph{xinyue.xie@uwaterloo.ca}}
\begin{document}

\maketitle

\begin{abstract}
Online markets are a part of everyday life, and their rules are governed by algorithms. Assuming participants are inherently self-interested, well designed rules can help to increase social welfare. Many algorithms for online markets are based on prices: the seller is responsible for posting prices while buyers make purchases which are most profitable given the posted prices. To make adjustments to the market the seller is allowed to update prices at certain timepoints.  

Posted prices are an intuitive way to design a market. Despite the fact that each buyer acts selfishly, the seller's goal is often assumed to be that of social welfare maximization. Berger, Eden and Feldman recently considered the case of a market with only three buyers where each buyer has  a fixed number of goods to buy and the profit of a bought bundle of items is the sum of profits of the items in the bundle. For such markets, Berger et.\ al.\ showed that the seller can maximize social welfare by dynamically updating posted prices before arrival of each buyer. B\'{e}rczi, B\'{e}rczi-Kov\'{a}cs and Sz\"{o}gi showed that the social welfare can be maximized also when each buyer is ready to buy at most two items.

We study the power of posted prices with dynamical updates in more general cases. First, we show that the result of Berger et.\ al.\ can be generalized from three to four buyers. Then we show that the result of B\'{e}rczi, B\'{e}rczi-Kov\'{a}cs and Sz\"{o}gi  can be generalized to the case when each buyer is ready to buy up to three items. We also show that a dynamic pricing is possible whenever there are at most two allocations maximizing social welfare.\end{abstract}

\section{Introduction}
We study the dynamic pricing that optimizes social welfare in a multi-demand combinatorial market. A \emph{combinatorial market} consists of three components, namely, \emph{ items}, \emph{ buyers}, and \emph{valuation functions}. In particular, we consider a market with a set $\mathcal{X}$ of $m$ items and a set $I$ of $n$ \emph{buyers}, also called \emph{players}. The items are assumed to be heterogeneous and indivisible so that the buyers are not allowed to buy a fractional amount of items. Each buyer $i\in I$ is associated with a valuation function $v_i:2^{\X}\to\R_{\ge0}$, which maps subsets of items, also called \emph{bundles}, to non-negative values. In particular, we focus on multi-demand valuation functions. In this framework, every player $i\in I$ wants $k_i$ items and the value of a bundle is the sum of the values of the $k_i$ most valued items in the bundle.

For example, we can consider a market with four items $\{\alpha,\beta,\gamma,\delta\}$ and three players $\{1,2,3\}$, with the value of each individual item given in Table~\ref{example: valuations}. Suppose that player 1 is bi-demand and players 2 and 3 are unit-demand, that is $k_1=2$ and $k_2=k_3=1$.

\begin{table}[h!]
\begin{center}
\begin{tabular}{|c|c|c|c|c|}
\hline  & $\alpha$ & $\beta$ & $\gamma$ & $\delta$ \\
\hline 1 & 2 & 1 & 1 & 1\\
\hline 2 & 2 & 0 & 1 & 0\\
\hline 3 & 1 & 0 & 0 & 1\\
\hline
\end{tabular}
\end{center}
\caption{An example of valuations in a multi-demand combinatorial market.}\label{example: valuations}
\end{table}

Given a particular pricing of the items, we define the \emph{utility} of a bundle with respect to a player to be the difference between its valuation and price. Of course, a player always picks a bundle that maximizes the utility. Such a bundle is referred to as a \emph{bundle in demand}. 
Suppose the four items in the example above are respectively assigned prices $1.5$, $0.1$, $0.5$ and $0.9$. Then, for player 1, there are two bundles in demand, namely, $\{\alpha,\beta\}$ and $\{\beta,\gamma\}$, that achieve the maximal utility of $1.4$. 

The \emph{social welfare} of a market is defined to be the sum of the valuations of the received bundles across all buyers. In the above example, there are also two allocations that maximize the social welfare. We can either assign $\{\alpha,\beta\}$ to player 1, $\gamma$ to player 2 and $\delta$ to player 3; or assign $\{\beta,\gamma\}$ to player 1, $\alpha$ to player 2 and $\delta$ to player 3. Both these allocations yield a social welfare of $5$.

Our goal is to find an optimal pricing so that while the buyers are free to pick any bundle that maximizes their utility, their choices simultaneously maximize the social welfare. We make the following assumptions. The valuation function of each buyer is publicly known. However, there are two unknowns in the problem that increase the difficulty for finding an optimal pricing. Firstly, the buyers arrive one at a time, but the order of their arrival is unknown. Secondly, when two different bundles are both in demand for a player, that player can break the tie arbitrarily.

Due to the two unknowns mentioned above, the well-known Walrasian prices~\cite{Walras} are insufficient to achieve an optimal allocation~\cite{Hsu2016}. It can be shown that in general no static pricing can achieve more than $2/3$ of the optimal social welfare~\cite{Cohen2016}. Instead, a \emph{dynamic pricing} is necessary. The dynamic pricing is more powerful since after each buyer has picked a bundle in demand, we are allowed to adjust the prices of the remaining items in the market before the arrival of the next buyer. Since the prices can be updated, to show that a dynamic pricing is optimal, it suffices to check that no matter which player $i\in I$ arrives first, any bundle in demand for player $i$ can be extended to an optimal allocation. That is, any choice of any player can always be accommodated afterwards by changing the prices of the remaining items.

In our example, one can see that with the assigned prices the two bundles in demand for player 1 can be extend to an optimal allocation. Similarly, one can check that $\{\alpha\}$ and $\{\gamma\}$ are the only two bundles in demand for player 2 and both extend to an optimal allocation. Similarly, $\{\delta\}$ is the only bundle that is in demand for player 3 and indeed $\delta$ is assigned to player 3 in all optimal allocations.

Ezra et.\ al.\ ~\cite{Ezra2019} proved that dynamic pricing allows to achieve an optimal allocation in markets with homogeneous items and submodular valuations, which is a superclass of multi-demand valuations. As for markets with heterogeneous items, Cohen-Addad et.\ al.\ ~\cite{Cohen2016} developed algorithms for computing the dynamic pricing in unit-demand markets and in gross substitutes markets with a unique optimal allocation. A generalization of this result led to another algorithm by Berger et.\ al.\ which applies to multi-demand markets with at most three players ~\cite{Berger2020}. B{\'{e}}rczi et.\ al.\ showed that a dynamic pricing is possible in combinatorial multi-demand markets where each player is a bi-demand player~\cite{Berczi2021}. One can achieve the optimal social welfare by dynamic pricing in the case of two players where both players have valuation functions induced by certain classes of matroids~\cite{Kakimura2021}. The existence of dynamic pricing in a general multi-demand market remains an open problem.  In this paper, we show that a dynamic pricing exists in multi-demand markets with four players, multi-demand markets with at most two optimal allocations, and tri-demand markets. 

Our technique heavily relies on the concept of legality. An item $x\in\X$ is said to be legal to a player $i\in I$ if there exists an optimal allocation in which $x$ is assigned to $i$. A legal allocation refers to an allocation in which every player $i$ receives $k_i$ legal items. Berger et.\ al.\ proved that optimal allocations and legal allocations are equivalent~\cite{Berger2020}. 

Section~\ref{Combinatorial Market} provides the precise definitions of relevant concepts, including valuation functions, utility, social welfare, dynamic pricing and legality. Sections~\ref{Auxiliary Graph} and~\ref{Rough Price} describe first step of the process, where we impose rough prices in order to limit the choices of each player to their legal items. The algorithm for finding rough prices involves weighted directed graphs, which is a generalization of the technique used in~\cite{Cohen2016}   for unit-demand markets.

Section~\ref{Fine Price} and onward describes second step of the process, where we define fine prices such that the sum of rough prices and fine prices is a dynamic pricing. We do not have an algorithm that finds fine prices in all multi-demand markets. Instead, we focus on three particular types of multi-demand markets. Sections~\ref{Multi-Demand Market with Four Players}, \ref{Multi-Demand Market with Two Optimal Allocations}, and \ref{Market with 3-Demand Players} present algorithms for computing dynamic pricing in multi-demand markets with four players, in multi-demand markets with at most two optimal allocations and in tri-demand markets, respectively. The first two algorithms involve induction on the number of items in the market. We use direct graphs that are constructed based on information about legality to find a subset of items whose price can be easily determined and thus can be removed from the market. The last algorithm involves induction on the number of players and items. 

\subsection{Comparison to Previous Techniques}

Our results on rough prices were independently obtained by B{\'{e}}rczi et.\ al.\ ~\cite{Berczi2021}, in particular the results guaranteeing the existence of rough prices, see Theorem~\ref{thm:rough prices}, and positive utility of legal items for rough prices, see Lemma~\ref{positive utility}. Our techniques to obtain rough prices results are different from  the techniques in~\cite{Berczi2021}. While the paper of B{\'{e}}rczi et.\ al.\ ~\cite{Berczi2021} provides a novel approach based on the duality and careful modifications of an optimal dual solution, we do a construction through graphs generalizing the approach of~\cite{Cohen2016}.

Optimal allocations in a multi-demand market naturally correspond to perfect matchings in some bipartite graph.
In~\cite{Berczi2021} the techniques for finding fine prices in the case of a bi-demand market are inspired by the ideas from Hall's marriage theorem~\cite{Hall}. Similarly our approach for tri-demand market uses similar ideas inspired by Hall's marriage theorem. However, for our proof we need to combine fine prices in a careful way and for this we need to strengthen our statements.

\subsubsection*{Acknowledgement.} We would like to thank Jochen K\"onemann for many fruitful discussions. The first author would also like to thank Chaitanya Swamy for discussions that led to ideas used in design of rough prices.

\section{Combinatorial Market}\label{Combinatorial Market}

In this section, we provide a formal definition of the studied problem together with the central notions.

We consider a market with a set $\X=[m]$ of heterogeneous and indivisible \emph{items} and a set $I=[n]$ of \emph{buyers}, also called players. For all $i\in I$, the buyer $i$ is $k_i$-demand and has a \emph{multi-demand valuation function} $v_i: 2^{\X}\to\real_{\ge0}$, i.e.\ for all $\X^{\p}\subseteq\X$ we have
\[v_i(\mathcal{\X^{\p}}):=\max\left\{\sum\limits_{x\in\mathcal{\hat{\X}}}v_i(x):\mathcal{\hat{\X}}\subseteq\mathcal{\X^{\p}},|\mathcal{\hat{\X}}|\le k_i\right\}\,.\]
In the case when $k_i=1$, 2 or 3, the buyer $i$ is called a unit-demand, bi-demand or tri-demand buyer, respectively. Without loss of generality we can assume that $m=\sum_{i=1}^n k_i$.

Since we work with multi-demand buyers, we encode every valuation function $v_i$, $i\in I$ as an $m$-dimensional vector $\mathbf{v}^i=(v_x^i)_{x\in \X}\in\real_{\ge0}^m$ indexed by the items, where $v_x^i=v_i(x)$ for all $x\in\X$. Then for any subset $\X^{\p}\subseteq\X$ with $|\X^{\p}|\le k_i$, we have the relation $v_i(\X^{\p})=\langle\mathbf{v}_i,\chi(\X^{\p})\rangle$, where $\chi$ is the characteristic vector of $\X^{\p}$ on $\X$ and $\langle \mathbf{a}, \mathbf{b}\rangle$ is the scalar product of vectors $\mathbf{a}$ and $\mathbf{b}$.

\subsection{Utility and Optimal Allocation}

We consider the \emph{prices of the items} to be a vector with positive entries. To encode prices of items, we use the \emph{price vector}, which is a positive $m$-dimensional vector $\vp=(p_x)_{x\in\X}\in\real_{>0}^m$ indexed by the items in the market.  

The \emph{price of a bundle} is the sum of the prices of all items in the bundle. Given a price vector~$\vp\in\real_{>0}^m$, for every player we define the \emph{utility} of a bundle to be the difference between its valuation and price. Formally, the utility of a buyer $i\in I$ from a bundle $\X^{\p}\subseteq\X$ is defined as follows
\[u_i(\X^{\p},\vp)\coloneqq v_i(\X^{\p})-\langle\vp,\chi(\X^{\p})\rangle.\]
For simplicity, when $\X^{\p}=\{x\}$ is a singleton, we may omit the set brackets and write $u_i(x,\vp)$.

Every player prefers to pick a \emph{bundle in demand}, i.e.\ a bundle maximizing the utility. Formally speaking, given a price vector $\vp\in\real_{>0}^m$, the bundles in demand for a player $i\in I$ are exactly the ones in the set
\[D_{\vp}(v_i)\coloneqq\arg\max\limits_{\X^{\p}\subseteq\X}\{u_i(\X^{\p},\vp)\}\,.\]
In case where multiple bundles are in demand for a given player, the player can break the tie arbitrarily and pick any bundle in demand. 

An \emph{allocation} is a collection $\bcO=(\cO_i)_{i\in I}$ of disjoint subsets of items in $\X$, where the bundle~$\cO_i$ is allocated to player $i$. For each allocation, we can calculate the \emph{social welfare} by summing the valuations corresponding to each player and the bundle assigned to this player. In other words, the social welfare of an allocation $\bcO$ is defined as
\[\SW(\bcO)\coloneqq\sum\limits_{i=1}^n v_i(\cO_i).\]
An \emph{optimal allocation} is an allocation that achieves the maximal social welfare.

\textbf{We assume that in every optimal allocation, every item is assigned to some player.} Thus, we also assume that if an item $x\in\X$ is assigned to player $i\in I$ in some optimal allocation then $v_i(x)>0$. This ensures that it is possible for $x$ to contribute positively to the utility of player $i$.

\subsection{Legality and Dynamic Pricing}

Our goal is to find a \emph{dynamic pricing} that guarantees social welfare optimization, despite the fact that each individual buyer always maximizes their own utility. We introduce the useful concepts of legal items, legal bundles, and legal allocations as in~\cite{Berger2020}.

Since we are allowed to update prices after each purchased bundle, to achieve the maximal social welfare, it suffices to ensure that at any point in time, for each buyer  every bundle in demand can be completed to an optimal allocation.

\begin{definition}\label{dynamic pricing definition}
A price vector $\vp\in\real_{>0}^m$ is an (optimal) dynamic pricing if for any player $i\in I$ and any bundle $\X^{\p}\subseteq \X$ that is in demand for player $i$, there exists an optimal allocation in which the bundle $\X^{\p}$ is allocated to player $i$.
\end{definition}

To formalize the main results we use the notion of \emph{legality} introduced in~\cite{Berger2020}. An item $x$ is \emph{legal} for a player $i$ if there exists an optimal allocation in which $x$ is allocated to $i$. A bundle $\X^{\p}$ is \emph{legal}  for a player $i$ if $|\X^{\p}|=k_i$ and every item $x\in\X^{\p}$ is legal for player $i$. A \emph{legal allocation} $\bcO$ is an allocation in which $\cO_i$ is legal for player $i$ for all $i\in I$.

A very powerful Theorem~\ref{legality-optimality} by Berger et al. states that optimal allocations and legal allocations are equivalent~\cite{Berger2020}. Consequently, we use these terms interchangeably. 

\begin{theorem}\label{legality-optimality}
An allocation is legal if and only if it is optimal.
\end{theorem}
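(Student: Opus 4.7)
The plan is to model allocations as perfect matchings in a bipartite graph and prove the equivalence via LP duality. Build a bipartite graph $G = (\X \cup S, E)$ whose right-hand side $S$ contains $k_i$ identical ``slots'' for each player $i$, with an edge of weight $v^i_x$ between each item $x \in \X$ and each slot of player $i$. Since $m = \sum_i k_i$, the graph is balanced, and every allocation $\bcO$ satisfying $|\cO_i| = k_i$ for all $i$ corresponds to a perfect matching $M_{\bcO}$ (choose any bijection between $\cO_i$ and the slots of player $i$) of weight equal to $\SW(\bcO)$. Under the paper's standing assumption that every item is assigned in every optimal allocation, optimal allocations correspond exactly to maximum-weight perfect matchings of $G$; by the symmetry among the slots of any fixed player, an item $x$ is legal for $i$ if and only if every edge between $x$ and a slot of $i$ lies in some maximum-weight perfect matching.

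The forward direction (optimal implies legal) is then immediate from the definitions: an optimal $\bcO$ is itself a witness to the legality of each $x \in \cO_i$ for $i$. For the reverse direction, if $\bcO$ is legal, then every edge of $M_{\bcO}$ lies in some maximum-weight perfect matching, so it suffices to prove the following statement about bipartite matching: any perfect matching that uses only edges appearing in some maximum-weight perfect matching is itself of maximum weight. Once this is established, $M_{\bcO}$ has maximum weight and hence $\bcO$ is optimal.

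To prove the bipartite-matching fact, I would invoke LP duality. Let $y^*$ be an optimal dual solution to the assignment LP, that is, vertex potentials satisfying $y^*_u + y^*_v \ge w_{uv}$ for every edge $(u,v)$ and $\sum_v y^*_v = \mathrm{OPT}$. For any edge $(u,v)$ lying in some maximum-weight perfect matching $M^*$, complementary slackness applied to $M^*$ as an optimal primal and $y^*$ as an optimal dual forces $y^*_u + y^*_v = w_{uv}$. Summing this equality over the edges of a legal perfect matching $M$, and using that each vertex is saturated exactly once, yields $w(M) = \sum_v y^*_v = \mathrm{OPT}$. The main technical step is this complementary-slackness argument, which relies on the bipartiteness of $G$ so that the matching LP is integral and a strong dual exists; the remaining work is routine bookkeeping on the correspondence between allocations and matchings and on the slot-symmetry characterization of legal edges.
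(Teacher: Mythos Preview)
Your argument is correct. The reduction to bipartite perfect matchings is exactly the right model under the paper's standing assumptions ($m=\sum_i k_i$ and every item is assigned in every optimal allocation), and the LP-duality step---every edge used by some optimum is tight against any optimal dual, hence any perfect matching built from such edges attains the dual value---is a valid and standard way to conclude.

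As for comparison: the paper does not give its own proof of this theorem. It is stated and attributed to Berger, Eden and Feldman~\cite{Berger2020}, and then used as a black box throughout. So there is no in-paper argument to compare against; your LP-duality proof is a self-contained justification that the paper simply omits by citation. It is worth noting that essentially the same dual-tightness reasoning reappears later in the paper in a different guise: the construction of rough prices via the auxiliary graph (Sections~\ref{Auxiliary Graph}--\ref{Rough Price}) is driven by the fact that there are no negative-weight cycles, which is again a complementary-slackness phenomenon with respect to Walrasian (i.e., dual) prices. Your proof isolates exactly this mechanism and applies it directly to the statement at hand.
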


Theorem~\ref{legality-optimality} allows us to restate the definition of dynamic pricing in terms of legality. Namely, we can rewrite the definition of dynamic pricing as follows. A price $\vp\in\real_{>0}^m$ is a dynamic pricing if for any player $i\in I$ and any bundle $\X^{\p}\subseteq \X$ that is in demand for player $i$, there exists a legal allocation in which the bundle $\X^{\p}$ is allocated to player $i$.

\section{Auxiliary Graph}\label{Auxiliary Graph}

In this section, we define the \emph{auxiliary graph of a market}. Later, we also use existence of Walrasian prices to prove certain properties of the auxiliary graph. Indeed, since multi-demand valuations is a subclass of gross-substitutes valuations, Walrasian prices always exist~\cite{Gul1999}. In the next sections, we use auxiliary graph  to compute prices.

\subsection{Construction of the  Auxiliary Graph}\label{Construction of auxiliary graph}

The auxiliary graph can be seen as a generalization of the relation graph in Cohen-Addad et.\ al.\ algorithm~\cite{Cohen2016}. The relation graph is constructed based on a single optimal allocation, while our auxiliary graph captures all optimal allocations.

We list all optimal allocations in the market as $\bcO^1,\bcO^2,...,\bcO^k$ and denote the maximum social welfare by $\widehat{\SW}$. For each player $i\in I$, we define a vector
\[\va^i\coloneqq\frac{1}{k}\sum\limits_{t=1}^k\chi(\cO_i^t)\,,\]
which averages the bundles assigned to player $i$ in all listed optimal allocations. Intuitively, this corresponds to a ``fractional bundle'' assigned to the player $i$; and thus $\bcA\coloneqq(\va^i)_{i\in I}$ is an average of all optimal allocations. In fact, had we allowed items to be divisible, $\bcA$ itself would be a valid ``fractional allocation'' that achieves the social welfare $\widehat{\SW}$. This intuition is captured by the following lemma. Note that by ``fractional allocation'', we mean that all items are assigned completely to the players, i.e. $\sum_{i=1}^n\va^i=\mathbf{1}$, and that the total amount of items received by each player is equal to their demand, i.e. $\langle\va^i,\mathbf{1}\rangle=k_i$. We define the value of a ``fractional bundle'' to be the linear combination of the values of the items in the bundle, i.e.\ for all players $i\in I$ and ``fractional bundle'' $\va$ we have
\[v_i(\va)=\langle\mathbf{v}^i,\va\rangle\] whenever $\langle\va,\mathbf{1}\rangle \leq k_i$, i.e.\ whenever the ``fractional number'' of items allocated to player $i$ by $\va$ is at most the demand.

\begin{lemma}\label{A is a fractional allocation}
$\bcA$ is a fractional allocation with social welfare $\widehat{\SW}$.
\end{lemma}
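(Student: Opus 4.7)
The plan is to verify the two defining conditions of a fractional allocation and then compute social welfare, all by direct averaging from the definitions, once a small preliminary fact is established: in every optimal allocation $\bcO^t$, each player $i$ receives exactly $k_i$ items, that is, $|\cO_i^t|=k_i$.

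First I would prove the preliminary fact by a swap argument combining the two standing assumptions ($m=\sum_i k_i$ and that every item is assigned in every optimal allocation, with positive value to its recipient). Since $\bcO^t$ partitions $\X$, we have $\sum_i|\cO_i^t|=m=\sum_i k_i$. If some $|\cO_j^t|>k_j$, then some $|\cO_i^t|<k_i$; taking an item $x\in\cO_j^t$ that is not among the top $k_j$ items for player $j$ and transferring it to player $i$ leaves $v_j$ unchanged, while $v_i$ either strictly increases (if $v_i(x)>0$, contradicting optimality of $\bcO^t$) or stays the same (if $v_i(x)=0$, then the new allocation is still optimal but assigns $x$ to $i$ with value zero, contradicting the positive-value assumption).

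Given $|\cO_i^t|=k_i$, both fractional-allocation conditions follow by averaging characteristic vectors. Disjointness and coverage within each $\bcO^t$ yield $\sum_i\chi(\cO_i^t)=\mathbf{1}$, hence $\sum_i\va^i=\mathbf{1}$; and $\langle\va^i,\mathbf{1}\rangle=\tfrac{1}{k}\sum_t|\cO_i^t|=k_i$. Since $\langle\va^i,\mathbf{1}\rangle=k_i$, the stated rule for the value of a fractional bundle gives $v_i(\va^i)=\langle\mathbf{v}^i,\va^i\rangle$, and swapping the order of summation produces
\[\SW(\bcA)=\sum_{i=1}^n\langle\mathbf{v}^i,\va^i\rangle=\frac{1}{k}\sum_{t=1}^k\sum_{i=1}^n v_i(\cO_i^t)=\frac{1}{k}\sum_{t=1}^k\widehat{\SW}=\widehat{\SW}.\]

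The main obstacle is the preliminary swap argument establishing $|\cO_i^t|=k_i$; without it the condition $\langle\va^i,\mathbf{1}\rangle=k_i$ could fail even though $\sum_i\va^i=\mathbf{1}$ still holds. Everything else is mechanical linear algebra on the characteristic-vector representation of the optimal allocations.
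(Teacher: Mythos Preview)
Your proof is correct and follows the same route as the paper's: both verify $\sum_i \va^i=\mathbf{1}$ and $\langle\va^i,\mathbf{1}\rangle=k_i$ directly from the averaging definition, then swap the order of summation to obtain $\SW(\bcA)=\widehat{\SW}$. The only difference is that you supply a swap argument for the preliminary fact $|\cO_i^t|=k_i$, which the paper simply asserts without proof.
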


\begin{proof}
Each optimal allocation $\bcO^t$, $t=1,\ldots, k$, allocates every item to exactly one player, i.e.\ $\bcO^t$ is a partition of $\X$. Thus we have
$$\sum\limits_{i=1}^n\chi(\cO_i^t)=\mathbf{1}\,.$$

Summing over all $k$ optimal allocations, we have
$$\sum\limits_{i=1}^n\va^i=\sum\limits_{i=1}^n\frac{1}{k}\sum\limits_{t=1}^k\chi(\cO_i^t)=\frac{1}{k}\sum\limits_{t=1}^k\sum\limits_{i=1}^n\chi(\cO_i^t)=\frac{1}{k} \sum\limits_{t=1}^k\mathbf{1}=\mathbf{1}.$$

Since every player $i$ receives exactly $k_i$ items in any optimal allocation, for every player $i\in I$ we have
$$
\langle\va^i,\mathbf{1}\rangle=\langle\frac{1}{k}\sum\limits_{t=1}^k\chi(\cO_i^t),\mathbf{1}\rangle=\frac{1}{k}\sum\limits_{t=1}^k\langle\chi(\cO_i^t),\mathbf{1}\rangle=\frac{1}{k}\sum\limits_{t=1}^kk_i= k_i.
$$

So, $\bcA$ is a fractional allocation and we can calculate the social welfare of $\bcA$ as follows
\begin{align*}
\SW(\bcA)&=\sum\limits_{i=1}^n\langle\mathbf{v}^i,\va^i\rangle=\sum\limits_{i=1}^n\langle\mathbf{v}^i,\frac{1}{k}\sum\limits_{t=1}^k\chi(\cO_i^t)\rangle=\frac{1}{k}\sum\limits_{t=1}^k\sum\limits_{i=1}^n\langle\mathbf{v}^i,\chi(\cO_i^t)\rangle\\&=\frac{1}{k}\sum\limits_{t=1}^k\SW(\bcO^t)
=\frac{1}{k}\sum\limits_{t=1}^k\widehat{\SW}=\widehat{\SW}\,,
\end{align*}

finishing the proof.
\end{proof}

To facilitate the discussion of legal items, we introduce the following notations. For each player $i\in I$, let us define
\[\K_i:=\bigcup\limits_{t=1}^k\cO_i^t\subseteq\X\quad\text{and}\quad\R_i:=\bigcap\limits_{t=1}^k\cO_i^t\subseteq\X\,.\]

Then, $\K_i$ denotes the set of all items that are legal to player $i$ and $\R_i$ denotes the set of all items that are legal only  to player $i$.

The fractional allocation $\bcA$ is closely related to legality. We can learn whether an item $x$ is legal to a player $i$ and whether $x$ is legal to players other than player $i$ by comparing the value of $\va_x^i$ with $0$ and $1$.

\begin{remark}\label{R-K-a}
For every item $x\in\X$ and every player $i \in I$, we have the following equivalences.
\begin{enumerate}
\item$x\in \R_i$ if and only if $\va_x^i=1$.
\item$x\in \K_i$ if and only if $ 0<\va_x^i\le1$.
\item$x\notin \K_i$ if and only if $\va_x^i=0$.
\end{enumerate}
\end{remark}

Finally, we define the auxiliary graph using the fractional allocation $\bcA$.  Intuitively speaking, the edges in the graph represent preferences that we aim to enforce by rough prices. The desired properties of rough prices are discussed in Section~$\ref{Rough Price}$.

\begin{definition}\label{auxiliary graph definition}
An auxiliary graph is a graph $H=(\X,E)$, such that for all player $i\in I$ and items $x,y\in\X$, whenever $\va_x^i>0$ and $\va_y^i<1$, we have an edge $e_i=x\to y\in E$ induced by player $i$ with weight
\[w(x\to y)=v_i(x)-v_i(y)\,.\]
\end{definition}

Note that for $x,y\in\X$ the auxiliary graph $H$ can potentially have multiple edges $x\to y$.

\subsection{Cycles in the Auxiliary Graph}

In this section, we prove that there are no negative-weight cycles in the auxiliary graph. This fact ensures that shortest paths in the auxiliary graph are well defined. Such paths are studied later. We also find the necessary and sufficient condition for the occurrence of a $0$-weight cycle in the auxiliary graph.

To prove that negative-weight cycles do not exist, we consider the utility of bundles and ``fractional bundles'' with respect to the \emph{Walrasian prices}. The concept of \emph{Walrasian equilibrium} is closely related to optimal allocations and bundles in demand. In particular, Paes Leme and Wong proved the Second Welfare Theorem, which implies that in any optimal allocation the bundle received by each player is a bundle in demand with respect to Walrasian prices~\cite{PaesLeme2020}. 

\begin{definition}
A Walrasian equilibrium consists of a price vector $\vp^W\in\real^m$ and an allocation $\bcO$ such that for any player $i$ the bundle $\cO_i$ is in demand. $\vp^W$ is called a Walrasian price and $\bcO$ is called a Walrasian allocation induced by $\vp^W$.
\end{definition}

As mentioned before, for a market with multi-demand players, Walrasian equilibrium always exists~\cite{Gul1999}. In the remaining parts of the paper, we denote Walrasian prices by $\vp^W\in\real^m$.

\begin{theorem}[Second Welfare Theorem]
Any Walrasian price and any optimal allocation form a Walrasian equilibrium.
\end{theorem}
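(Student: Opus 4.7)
The plan is to compare an arbitrary optimal allocation $\bcO$ against a known Walrasian allocation $\tilde{\bcO}$ that comes paired with $\vp^W$, and to show via an averaging/sandwich argument that $\bcO$ must consist of bundles in demand at $\vp^W$. Concretely, since $(\vp^W, \tilde{\bcO})$ is a Walrasian equilibrium, each $\tilde{\cO}_i$ maximizes player $i$'s utility at $\vp^W$; in particular,
\[
u_i(\cO_i, \vp^W) \;\le\; u_i(\tilde{\cO}_i, \vp^W) \qquad \text{for every } i \in I.
\]
Summing over $i$ and using that both allocations use every item (the paper's standing assumption for the optimal allocation $\bcO$, and the market-clearing property inherent to a Walrasian equilibrium for $\tilde{\bcO}$), the price terms telescope to the common constant $\sum_{x\in\X} p_x^W$, so the inequality of utility-sums becomes $\SW(\bcO) \le \SW(\tilde{\bcO})$.

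Next I would combine this with the hypothesis that $\bcO$ is optimal, which gives $\SW(\bcO) \ge \SW(\tilde{\bcO})$, hence equality. Feeding this back, the sum of the termwise inequalities $u_i(\cO_i, \vp^W) \le u_i(\tilde{\cO}_i, \vp^W)$ is tight, so each individual inequality must be tight. Therefore
\[
u_i(\cO_i, \vp^W) \;=\; u_i(\tilde{\cO}_i, \vp^W) \;=\; \max_{\X'\subseteq\X} u_i(\X', \vp^W),
\]
which is precisely the statement that $\cO_i \in D_{\vp^W}(v_i)$ for every player $i$. This exhibits $(\vp^W, \bcO)$ as a Walrasian equilibrium.

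The only delicate step is justifying the cancellation of the $\vp^W$-mass. For $\bcO$ this is direct from the paper's assumption that every optimal allocation assigns every item to some player. For $\tilde{\bcO}$ I would rely on the market-clearing convention that is implicit in the definition of Walrasian equilibrium (items not allocated to any player would cause the prices/allocation to fail equilibrium in the standard sense); alternatively one can note that the sandwich argument above simultaneously proves $\tilde{\bcO}$ optimal, hence covered by the same standing assumption. I expect this bookkeeping about allocated items to be the main subtlety; the rest is a routine sum-of-utilities argument.
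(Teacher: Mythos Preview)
The paper does not supply a proof of this theorem; it is stated without argument and attributed to Paes Leme and Wong. Your proof is the standard one---sum the per-player utility inequalities $u_i(\cO_i,\vp^W)\le u_i(\tilde{\cO}_i,\vp^W)$, cancel the total price mass $\sum_{x\in\X}p_x^W$ on both sides, use optimality of $\bcO$ to force equality, and conclude that each inequality is tight---and it is correct.

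The only caveat is the one you already flag: the paper's definition of Walrasian equilibrium is terse and does not state market clearing explicitly, so one has to justify that $\tilde{\bcO}$ assigns every item. Under the usual convention that a Walrasian allocation clears the market this is immediate; alternatively, with nonnegative Walrasian prices the inequality $\langle\vp^W,\chi(\bigcup_i\tilde{\cO}_i)\rangle\le\langle\vp^W,\mathbf{1}\rangle$ still gives $\SW(\bcO)\le\SW(\tilde{\bcO})$, and then optimality of $\bcO$ forces equality throughout, including in the price term. Either way the bookkeeping is routine, and since the paper offers no proof of its own there is nothing further to compare against.
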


 We first show that any small enough compatible perturbation of the ``fractional bundle'' assigned by $\bcA$ to a certain player is a ``convex combination'' of integral bundles, so that later we can apply the Second Welfare Theorem to compare the utility of these integral bundles with the utility of the bundles that are in demand for the same player.

\begin{lemma}\label{convex combination}
If $a_x^i>0$ and $a_y^i<1$ for some player $i\in I$ and items $x,y\in\X$, then $\va^i-\delta\boldsymbol{\alpha}$ is a convex combination of characteristic vectors of sets with size $k_i$, where $\boldsymbol{\alpha}=\mathbf{\hat{e}}_x-\mathbf{\hat{e}}_y\in\real^m$ and $0\le\delta\le\frac{1}{k}$. Here, $\mathbf{\hat{e}}_x$ and $\mathbf{\hat{e}}_y$ are vectors in $\real^m$ with the entry corresponding to $x$ and $y$ equal to $1$, respectively, and all other entries equal to $0$.
\end{lemma}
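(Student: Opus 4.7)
The plan is to exploit the fact that the vector $\va^i$ already admits an explicit representation as a convex combination of characteristic vectors of size-$k_i$ sets, namely $\va^i = \frac{1}{k}\sum_{t=1}^{k}\chi(\cO_i^t)$. The task then reduces to locally perturbing only one or two of these $k$ summands so that the net effect on the sum is exactly $-\delta\boldsymbol{\alpha}$, while keeping each resulting set of size $k_i$ and every coefficient nonnegative. The bound $\delta \le 1/k$ is the natural budget, since each summand carries weight $1/k$.

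First I would use the hypotheses to pick witnesses: since $a_x^i > 0$, some optimal allocation $\bcO^{t_1}$ has $x \in \cO_i^{t_1}$, and since $a_y^i < 1$, some optimal allocation $\bcO^{t_2}$ has $y \notin \cO_i^{t_2}$. In the \textbf{easy case}, a single index $t$ witnesses both conditions; then $\tilde{\cO}_i^t := (\cO_i^t \setminus \{x\}) \cup \{y\}$ is a size-$k_i$ set with $\chi(\tilde{\cO}_i^t) = \chi(\cO_i^t) - \boldsymbol{\alpha}$, and replacing the single coefficient $1/k$ on $\chi(\cO_i^t)$ by the pair $(1/k - \delta)$ on $\chi(\cO_i^t)$ and $\delta$ on $\chi(\tilde{\cO}_i^t)$ directly exhibits $\va^i - \delta\boldsymbol{\alpha}$ as the required convex combination. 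In the \textbf{harder case}, every optimal allocation with $x \in \cO_i^t$ also has $y \in \cO_i^t$, so in particular $x, y \in \cO_i^{t_1}$ and $x, y \notin \cO_i^{t_2}$; because $|\cO_i^{t_1}| = |\cO_i^{t_2}| = k_i$ yet $x$ already lies in the symmetric difference, there must exist a pivot item $z \in \cO_i^{t_2} \setminus \cO_i^{t_1}$. I then swap in parallel: $\tilde{\cO}_i^{t_1} := (\cO_i^{t_1} \setminus \{x\}) \cup \{z\}$ and $\tilde{\cO}_i^{t_2} := (\cO_i^{t_2} \setminus \{z\}) \cup \{y\}$, each still of size $k_i$, with $\chi(\tilde{\cO}_i^{t_1}) + \chi(\tilde{\cO}_i^{t_2}) = \chi(\cO_i^{t_1}) + \chi(\cO_i^{t_2}) - \boldsymbol{\alpha}$ after $z$ cancels. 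A parallel rebalancing of the two coefficients $1/k$ on $\chi(\cO_i^{t_1})$ and $\chi(\cO_i^{t_2})$ into $(1/k - \delta)$ on each original vector plus $\delta$ on each new vector gives the statement.

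The only real obstacle is ensuring the pivot $z$ exists in the harder case, which I would justify by the pigeonhole observation that $\cO_i^{t_1}$ and $\cO_i^{t_2}$ are distinct size-$k_i$ subsets of $\X$ and $x$ already belongs to their symmetric difference, so the symmetric difference cannot consist solely of $\{x\}$. Everything else is routine verification that the coefficients I produce sum to $1$ and lie in $[0,1]$ whenever $0 \le \delta \le 1/k$.
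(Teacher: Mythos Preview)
Your argument is correct, but it proceeds quite differently from the paper. The paper's proof is purely polyhedral: it observes that the polytope
\[
P=\{\mathbf{b}\in\real^m:\mathbf{0}\le\mathbf{b}\le\mathbf{1},\ \langle\mathbf{b},\mathbf{1}\rangle=k_i\}
\]
is defined by a totally unimodular system, so its extreme points are exactly the characteristic vectors of $k_i$-subsets of $\X$; it then simply checks that $\va^i-\delta\boldsymbol{\alpha}$ satisfies the defining constraints of $P$ whenever $0\le\delta\le 1/k$, and the conclusion follows immediately from Carath\'eodory/Minkowski. No explicit decomposition is ever written down.

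Your route is more elementary and constructive: you exploit the specific convex representation $\va^i=\tfrac{1}{k}\sum_t\chi(\cO_i^t)$ and modify one or two terms by single-element swaps to realise the perturbation $-\delta\boldsymbol{\alpha}$. The two-case split and the pivot $z$ in the harder case are sound; the key point that $z\notin\{x,y\}$ follows because $x\notin\cO_i^{t_2}$ and $y\in\cO_i^{t_1}$, which you have. What the paper's approach buys is brevity and robustness (it would apply to any point of $P$, not just to vectors arising as averages of optimal allocations); what your approach buys is an explicit, self-contained construction that avoids invoking total unimodularity or integral-polytope machinery.
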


\begin{proof}
Consider a player $i\in I$ such that $a_x^i>0$ and $a_y^i<1$ and consider $0\le\delta\le\frac{1}{k}$. Consider the polytope
\[P:=\{\mathbf{b}\in\real^m:\mathbf{0}\le\mathbf{b}\le\mathbf{1},\,\sum_{x=1}^m b_x= k_i\}\,.\]

From Lemma~\ref{A is a fractional allocation}, we have $\langle\va^i,\mathbf{1}\rangle=k_i$ and so we have
\[\langle\va^i-\delta\boldsymbol{\alpha},\mathbf{1}\rangle=\langle\va^i,\mathbf{1}\rangle-\delta\langle\boldsymbol{\alpha},\mathbf{1}\rangle= k_i-\delta(\langle\mathbf{\hat{e}}_x,\mathbf{1}\rangle-\langle\mathbf{\hat{e}}_y,\mathbf{1}\rangle)=k_i\,.\]

By definition of $\va^i$, we have that $\va^i-\delta\boldsymbol{\alpha}$ satisfies the constraints defining the polytope $P$ whenever  $0\le\delta\le\frac{1}{k}$.
Since the polytope $P$ is defined by a totally unimodular matrix, all extreme points of $P$ are integral. This implies that $\va^i-\delta\boldsymbol{\alpha}$ can be written as a convex combination of the extreme points of $P$, but the extreme points of $P$ are the vectors in $\mathbb{Z}^m$ with exactly $k_i$ ones. These extreme points are exactly the characteristic vectors of sets with size $k_i$.
\end{proof}

When there is a cycle in the auxiliary graph, we can reallocate items along the cycle to obtain another fractional allocation by assigning a larger fraction of item $y$ and a smaller fraction of item $x$ to player $i$ for every edge $e_i=x\to y$ in the cycle. The total change in utility is given by the weight of the cycle. In particular, in the following lemma, we show that if there is a negative-weight cycle, then we can perturb the fractional allocation $\bcA$ to obtain an allocation with a strictly higher utility. We later combine the results from Lemmas $\ref{convex combination}$ and $\ref{negative cycle}$ to show that the auxiliary graph does not contain any negative-weight cycle.

\begin{lemma}\label{negative cycle}
If the auxiliary graph $H$ has a negative-weight cycle $\C=x_0e_{i_0}x_1e_{i_1}x_2...x_le_{i_l}x_0$, where $x_0,x_1,...,x_l\in\X$ are distinct and $e_{i_0},...,e_{i_l}\in E$ are induced by players $i_0,...,i_l$ respectively, then
$$\sum\limits_{j=0}^l\langle\mathbf{v}^{i_j}-\vp^W,\va^{i_j}-\delta\boldsymbol{\alpha}^j\rangle>\sum\limits_{j=0}^l\langle\mathbf{v}^{i_j}-\vp^W,\va^{i_j}\rangle,$$
where $0<\delta\le\frac{1}{k}$ and $\boldsymbol{\alpha}^j=\mathbf{\hat{e}}_{x_j}-\mathbf{\hat{e}}_{x_{j+1}}$ for all $j=0,\ldots, l$. Here, we identify $x_{l+1}$ with $x_0$.
\end{lemma}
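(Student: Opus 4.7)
My plan is to prove the lemma by a direct algebraic manipulation: I will rearrange the desired inequality into a statement about the weight of the cycle, and then show that the Walrasian price terms telescope and drop out, leaving precisely the (negative) cycle weight.

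First I would subtract the right-hand side from the left-hand side, so the claim becomes
\[
-\delta \sum_{j=0}^l \langle \mathbf{v}^{i_j}-\vp^W,\boldsymbol{\alpha}^j\rangle \;>\; 0.
\]
Since $\delta>0$, it suffices to prove $\sum_{j=0}^l \langle \mathbf{v}^{i_j}-\vp^W,\boldsymbol{\alpha}^j\rangle < 0$. I would then split this into the ``valuation part'' $\sum_j \langle \mathbf{v}^{i_j},\boldsymbol{\alpha}^j\rangle$ and the ``price part'' $\sum_j \langle \vp^W,\boldsymbol{\alpha}^j\rangle$, and handle them separately.

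For the valuation part, I would use the definition $\boldsymbol{\alpha}^j=\hat{\mathbf{e}}_{x_j}-\hat{\mathbf{e}}_{x_{j+1}}$ to rewrite each term as $v_{i_j}(x_j)-v_{i_j}(x_{j+1})$, which by Definition~\ref{auxiliary graph definition} is exactly the weight $w(e_{i_j})$ of the $j$-th edge of the cycle. Summing over $j$ therefore yields the total weight $w(\mathcal{C})$, which is negative by hypothesis. For the price part, the key observation is that $\vp^W$ does not depend on $j$, so $\sum_j \langle \vp^W,\boldsymbol{\alpha}^j\rangle = \sum_j (p^W_{x_j}-p^W_{x_{j+1}})$, a telescoping sum around the closed walk $x_0,x_1,\ldots,x_l,x_0$, which collapses to $0$.

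Combining the two computations gives $\sum_{j=0}^l \langle \mathbf{v}^{i_j}-\vp^W,\boldsymbol{\alpha}^j\rangle = w(\mathcal{C}) - 0 < 0$, which after multiplying by $-\delta$ yields the required strict inequality. I do not foresee a real obstacle here: the lemma is essentially a bookkeeping statement, and the only mildly subtle point is recognizing that the Walrasian prices cancel by telescoping, which is why the same bound can later be combined with the Second Welfare Theorem in the lemmas that follow. The hypothesis $0<\delta\le \tfrac{1}{k}$ plays no role in this particular inequality (it will matter only when Lemma~\ref{convex combination} is invoked elsewhere to guarantee that $\va^{i_j}-\delta\boldsymbol{\alpha}^j$ remains a convex combination of feasible integral bundles).
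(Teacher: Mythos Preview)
Your proof is correct and essentially identical to the paper's: both subtract the right-hand side, split into a valuation part that equals $\delta\, w(\mathcal{C})<0$ and a price part that vanishes, and conclude. The only cosmetic difference is that the paper phrases the vanishing of the price part as $\sum_{j}\boldsymbol{\alpha}^j=\mathbf{0}$ rather than writing out the telescoping sum, and your observation that the bound $\delta\le\tfrac{1}{k}$ is irrelevant here (only $\delta>0$ matters) is accurate.
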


\begin{proof}
Consider the total change in valuation after the allocation is perturbed by $\boldsymbol{\alpha}^j$ for $j=0,\ldots, l$,
\[
\sum\limits_{j=0}^l\langle\mathbf{v}^{i_j},\delta\boldsymbol{\alpha}^j\rangle=\delta\sum\limits_{j=0}^l\langle\mathbf{v}^{i_j},\mathbf{\hat{e}}_{x_j}-\mathbf{\hat{e}}_{x_{j+1}}\rangle=\delta\sum\limits_{j=0}^lw(e_{i_j})=\delta w(\C)<0\,.
\]

Since $\sum\limits_{j=0}^l\va^j=0$, we get
\[
\sum\limits_{j=0}^l\langle\vp^W,\delta\boldsymbol{\alpha}^j\rangle=0\,.
\]

Putting these equalities together, we get
\[\sum\limits_{j=0}^l\langle\mathbf{v}^{i_j}-\vp^W,\va^{i_j}-\delta\boldsymbol{\alpha}^j\rangle=\sum\limits_{j=0}^l\langle\mathbf{v}^{i_j}-\vp^W,\va^{i_j}\rangle-\sum\limits_{j=0}^l\langle\mathbf{v}^{i_j},\delta\boldsymbol{\alpha}^j\rangle+\sum\limits_{j=0}^l\langle\vp^W,\delta\boldsymbol{\alpha}^j\rangle>\sum\limits_{j=0}^l\langle\mathbf{v}^{i_j}-\vp^W,\va^{i_j}\rangle\,,\]
finishing the proof.
\end{proof}

Using Lemma $\ref{convex combination}$ and the Second Welfare Theorem, we can show that for any player $i_j$, $j=0,\ldots, l$, perturbations by $\delta\boldsymbol{\alpha}_j$ cannot achieve higher utility with respect to the prices $\vp^W$ than every integral bundle that is in demand for player $i_j$. Together with the result of Lemma~\ref{negative cycle}, this proves that no negative-weight cycles can exist in the auxiliary graph $H$. This fact is captured in the following theorem.

\begin{theorem}\label{no negative cycle in auxiliary graph}
There is no negative-weight cycle in the auxiliary graph $H$.
\end{theorem}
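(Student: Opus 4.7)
The plan is to argue by contradiction: suppose $H$ contains a negative-weight cycle $\C=x_0 e_{i_0} x_1 e_{i_1}\cdots x_l e_{i_l} x_0$. I will combine Lemma~\ref{convex combination}, Lemma~\ref{negative cycle}, and the Second Welfare Theorem to derive two incompatible estimates on the total utility with respect to the Walrasian prices $\vp^W$ of the perturbed fractional bundles $\va^{i_j}-\delta\boldsymbol{\alpha}^j$.

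First, I would fix any $0<\delta\le 1/k$. For each edge $e_{i_j}=x_j\to x_{j+1}$ of $\C$, the definition of the auxiliary graph gives $a_{x_j}^{i_j}>0$ and $a_{x_{j+1}}^{i_j}<1$, so Lemma~\ref{convex combination} applies and writes $\va^{i_j}-\delta\boldsymbol{\alpha}^j$ as a convex combination $\sum_s\lambda_{j,s}\,\chi(B_{j,s})$ of characteristic vectors of integer bundles $B_{j,s}\subseteq\X$ of size $k_{i_j}$. By linearity, the Walrasian utility of the perturbed fractional bundle is the corresponding convex combination of integer utilities: $\langle \mathbf{v}^{i_j}-\vp^W,\va^{i_j}-\delta\boldsymbol{\alpha}^j\rangle=\sum_s\lambda_{j,s}\,u_{i_j}(B_{j,s},\vp^W)$. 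Hence this quantity is bounded above by the maximum utility $U_j^\star:=\max_{\X'\subseteq\X}u_{i_j}(\X',\vp^W)$ that any bundle in demand attains for player $i_j$.

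Second, I would show the unperturbed quantity is exactly $U_j^\star$. By the Second Welfare Theorem, for every optimal allocation $\bcO^t$ and every player $i$, the bundle $\cO^t_{i}$ is in demand for $i$ with respect to $\vp^W$; in particular $u_{i_j}(\cO^t_{i_j},\vp^W)=U_j^\star$ for every $t$. Since $\va^{i_j}=\frac1k\sum_{t=1}^k\chi(\cO^t_{i_j})$ is a convex combination of these characteristic vectors, linearity gives $\langle\mathbf{v}^{i_j}-\vp^W,\va^{i_j}\rangle=U_j^\star$.

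Summing the two estimates over $j=0,\ldots,l$ yields
\[
\sum_{j=0}^l\langle\mathbf{v}^{i_j}-\vp^W,\va^{i_j}-\delta\boldsymbol{\alpha}^j\rangle\;\le\;\sum_{j=0}^l U_j^\star\;=\;\sum_{j=0}^l\langle\mathbf{v}^{i_j}-\vp^W,\va^{i_j}\rangle,
\]
which directly contradicts the strict inequality produced by Lemma~\ref{negative cycle}. This contradiction rules out any negative-weight cycle in $H$. The only substantive step is the verification that Lemma~\ref{convex combination} actually applies at every edge of $\C$, which is immediate from Definition~\ref{auxiliary graph definition}; once that is in hand the rest is a careful bookkeeping of inequalities, so I do not anticipate a genuine obstacle.
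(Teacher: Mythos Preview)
Your proposal is correct and follows essentially the same approach as the paper: both argue by contradiction, invoke Lemma~\ref{convex combination} to decompose each perturbed fractional bundle into integral bundles of the right size, use the Second Welfare Theorem to bound their Walrasian utility by that of a bundle in demand, and then sum over the cycle to contradict Lemma~\ref{negative cycle}. The only cosmetic difference is that you introduce the maximum utility $U_j^\star$ explicitly and observe $\langle\mathbf{v}^{i_j}-\vp^W,\va^{i_j}\rangle=U_j^\star$, whereas the paper compares directly against $\langle\mathbf{v}^{i_j}-\vp^W,\chi(\cO_{i_j}^t)\rangle$ and then averages over $t$; these are equivalent rephrasings of the same inequality.
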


\begin{proof}
Suppose, for contradiction, that there is a negative-weight cycle $\C=x_0e_{i_0}x_1e_{i_1}x_2...x_le_{i_l}x_0$. We use the same notation as in Lemma $\ref{negative cycle}$. For $j=0,\ldots, l$, let us consider $\va^{i_j}-\delta\boldsymbol{\alpha}^j$. By Lemma $\ref{convex combination}$, we can write
$$\va^{i_j}-\delta\boldsymbol{\alpha}^j=\sum\limits_{q=1}^uc_q\chi(\X_q),$$
for some $u>1$ and $c_q\ge0$, such that $\sum_{q=1}^uc_q=1$ and for each $1\le q\le u$, $|\X_q|= k_{i_j}$. For all $t=1,\ldots, k$, by the Second Welfare Theorem the bundle $\cO_{i_j}^t$ is in demand for player $i_j$. So,
$$\langle\mathbf{v}^{i_j}-\vp^W,\va^{i_j}-\delta\boldsymbol{\alpha}^j\rangle=\langle\mathbf{v}^{i_j}-\vp^W,\sum\limits_{q=1}^uc_q\chi(\X_q)\rangle=\sum\limits_{q=1}^uc_q\langle\mathbf{v}^{i_j}-\vp^W,\chi(\X_q)\rangle\le\langle\mathbf{v}^{i_j}-\vp^W,\chi(O_{i_j}^t)\rangle$$

Summing over all $k$ optimal allocations, we have
\begin{align*}
\langle\mathbf{v}^{i_j}-\vp^W,\va^{i_j}-\delta\boldsymbol{\alpha}^j\rangle=\frac{1}{k}\sum\limits_{t=1}^k\langle\mathbf{v}^{i_j}-\vp^W,\va^{i_j}-\delta\boldsymbol{\alpha}^j\rangle\leq\frac{1}{k}\sum\limits_{t=1}^k\langle\mathbf{v}^{i_j}-\vp^W,\chi(O_{i_j}^t)\rangle=\langle\mathbf{v}^{i_j}-\vp^W,\va^{i_j}\rangle\,.
\end{align*}

Since the above inequality holds for all $j=1,\ldots, l$, summing over all  $j=1,\ldots, l$ we get
\[\sum\limits_{j=1}^l\langle\mathbf{v}^{i_j}-\vp^W,\va^{i_j}-\delta\boldsymbol{\alpha}^j\rangle\le\sum\limits_{j=1}^l\langle\mathbf{v}^{i_j}-\vp^W,\va^{i_j}\rangle\,,\]
contradicting Lemma~\ref{negative cycle} and finishing the proof.
\end{proof}

Let us consider a cycle with weight 0. By a similar argument as in Lemma~\ref{negative cycle}, if we apply similar perturbations to a 0-weight cycle, there should be no change in the utility or the social welfare. Since ``fractional bundles'' can be considered as convex combinations of integral bundles, this indicates that for every edge $e_i=x\to y$ in the cycle, both items $x$ and $y$ are allocated to the player $i$ in some optimal allocation, and hence are legal to $i$. This motivates the following theorem, where we show that for edge $e_i=x\to y$ to be in a 0-weight cycle, $x,y\in\K_i\setminus\R_i$ is a necessary and sufficient condition.

\begin{lemma}\label{a and 0-weight cycle}
For any two items $x,y\in\X$, if in the auxiliary graph $H$ there is an edge $e_{\hat{i}}$ from $x$ to $y$ induced by player $\hat{i}$, then $e_{\hat{i}}$ is in a 0-weight cycle if and only if $0<\va_x^{\hat{i}}<1$ and $0<\va_y^{\hat{i}}<1$.
\end{lemma}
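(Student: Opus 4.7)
For the forward direction, suppose $\C = x_0 e_{i_0} x_1 \cdots x_l e_{i_l} x_0$ has weight $0$ and contains $e_{\hat{i}}$ at some position $j^\star$, so $i_{j^\star} = \hat{i}$, $x_{j^\star} = x$, and $x_{j^\star + 1} = y$. I plan to perturb $\bcA$ along $\C$ exactly as in Lemmas~\ref{convex combination} and~\ref{negative cycle}: each player $i_j$ moves by $\delta(\mathbf{\hat{e}}_{x_{j+1}} - \mathbf{\hat{e}}_{x_j})$. For small $\delta > 0$ this yields a valid fractional allocation $\bcA'$, and since $w(\C) = 0$ the computation in the proof of Lemma~\ref{negative cycle} shows $\SW(\bcA') = \widehat{\SW}$. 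Hence $\bcA'$ lies in the face $F$ of optimal fractional allocations, whose extreme points are the integer optimal allocations, so $\bcA'$ is a convex combination of these. Since $x_0, \ldots, x_l$ are distinct, the net perturbation at $\hat{i}$'s $y$-coordinate equals $\delta$ if $i_{j^\star + 1} \neq \hat{i}$ and $0$ if $i_{j^\star + 1} = \hat{i}$. In the latter case, the edge $y \to x_{j^\star + 2}$ is induced by $\hat{i}$, so $\va_y^{\hat{i}} > 0$ directly by Definition~\ref{auxiliary graph definition}. In the former case, $\va'^{\hat{i}}_y = \va_y^{\hat{i}} + \delta > 0$; since $\bcA'$ is a convex combination of optimal integer allocations, Remark~\ref{R-K-a} forces $\va_y^{\hat{i}} > 0$ (else no optimal assigns $y$ to $\hat{i}$, which would give $\va'^{\hat{i}}_y = 0$). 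A symmetric analysis of the predecessor edge at $x$ gives $\va_x^{\hat{i}} < 1$; the remaining inequalities $\va_x^{\hat{i}} > 0$ and $\va_y^{\hat{i}} < 1$ hold by the definition of $e_{\hat{i}}$.

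For the backward direction, assume $0 < \va_x^{\hat{i}}, \va_y^{\hat{i}} < 1$. My plan is to produce two optimal allocations $\bcO_1, \bcO_2$ with $x \in \cO_{1, \hat{i}}$, $y \notin \cO_{1, \hat{i}}$, $y \in \cO_{2, \hat{i}}$, and $x \notin \cO_{2, \hat{i}}$, and then analyze their symmetric difference $\bcO_1 \triangle \bcO_2$ as a $2$-edge-colored bipartite multigraph on players and items. At $\hat{i}$ both the $\bcO_1$-edge to $x$ and the $\bcO_2$-edge to $y$ are present, and the alternating-cycle decomposition of $\bcO_1 \triangle \bcO_2$ is not unique, so I can choose one that pairs these two edges at $\hat{i}$. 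This produces a bipartite cycle $C$ whose corresponding auxiliary-graph cycle, traversed in the $\bcO_1 \to \bcO_2$ direction, contains $e_{\hat{i}} = x \to y$. Because $\SW(\bcO_1) = \SW(\bcO_2) = \widehat{\SW}$, the sum of the auxiliary-cycle weights across the entire decomposition is $0$; each weight is non-negative by Theorem~\ref{no negative cycle in auxiliary graph}, so each---and in particular the one through $e_{\hat{i}}$---is exactly $0$.

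The hardest step is establishing the existence of $\bcO_1, \bcO_2$: a priori, it could happen that in every optimal allocation the items $x$ and $y$ co-occur in $\hat{i}$'s bundle (both in or both out), blocking the above construction. I would rule this out using the Second Welfare Theorem: in every optimal allocation $\cO_{\hat{i}}$ is in demand under Walrasian prices $\vp^W$, so every swap of an item in $\cO_{\hat{i}}$ for one outside cannot strictly increase social welfare, yielding inequalities on the utilities $v_{\hat{i}}(z) - p_z^W$ for $z \in \K_{\hat{i}}$. Comparing these inequalities between an optimal allocation with $\{x, y\} \subseteq \cO_{\hat{i}}$ and one with $\cO_{\hat{i}} \cap \{x, y\} = \emptyset$, the ``always co-occur'' scenario forces strict swap inequalities inconsistent with the hypothesis $0 < \va_x^{\hat{i}}, \va_y^{\hat{i}} < 1$.
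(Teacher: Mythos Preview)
Your forward direction is correct and takes a genuinely different route from the paper. You argue via the geometry of the transportation polytope: the perturbed allocation $\bcA'$ remains optimal (since $w(\C)=0$), hence lies in the face of optimal fractional allocations, whose vertices are the integer optimal allocations; this forces $y\in\K_{\hat i}$ and $x\notin\R_{\hat i}$. The paper instead perturbs the \emph{valuations}: assuming $\va_x^{\hat i}=1$ (resp.\ $\va_y^{\hat i}=0$), it lowers $v_{\hat i}(x)$ (resp.\ raises $v_{\hat i}(y)$) by a quantity smaller than the optimality gap, so the set of optimal allocations---and hence the auxiliary graph---is unchanged, but the cycle $\C$ now has negative weight, contradicting Theorem~\ref{no negative cycle in auxiliary graph}. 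Your polytope argument is arguably cleaner once one knows the transportation polytope is integral; the paper's argument is more elementary but requires controlling the optimality gap.

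Your backward direction, however, is badly overcomplicated and contains a real gap. The ``hardest step''---producing optimal allocations $\bcO_1,\bcO_2$ with $x\in\cO_{1,\hat i}\setminus\cO_{2,\hat i}$ and $y\in\cO_{2,\hat i}\setminus\cO_{1,\hat i}$---is not established by your Walrasian-price sketch. The hypotheses $0<\va_x^{\hat i},\va_y^{\hat i}<1$ only guarantee allocations where $x$ (resp.\ $y$) is in and allocations where it is out; they do not by themselves rule out that $x$ and $y$ always co-occur in $\hat i$'s bundle. The swap inequalities you allude to yield only $u_{\hat i}(x,\vp^W)=u_{\hat i}(z',\vp^W)$ for items $z'$ that trade places with $x$, which says nothing about whether the \emph{global} allocation obtained by swapping remains optimal.

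More importantly, the entire construction is unnecessary: the backward direction is a one-liner. Under the hypothesis $0<\va_y^{\hat i}$ and $\va_x^{\hat i}<1$, Definition~\ref{auxiliary graph definition} immediately gives a second edge $e'_{\hat i}\colon y\to x$ induced by the same player $\hat i$, and the $2$-cycle $x\,e_{\hat i}\,y\,e'_{\hat i}\,x$ has weight $(v_{\hat i}(x)-v_{\hat i}(y))+(v_{\hat i}(y)-v_{\hat i}(x))=0$. This is exactly the paper's proof of the converse.
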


\begin{proof}

Let us do a direct proof for both directions. Consider two items $x,y\in\X$ and suppose there is an edge $e_{\hat{i}}=x\to y$ with weight $v_{\hat{i}}(x)-v_{\hat{i}}(y)$.

Suppose that the edge $e_{\hat{i}}$ is in a 0-weight cycle $\C$. Let $\widehat{\SW}$ and $\widehat{\SW}_2$ be the maximal and the second maximal social welfare, respectively. Define
$
\Delta_{\SW}:=\frac{1}{2}(\widehat{\SW}-\widehat{\SW}_2)$. Note that if $\widehat{\SW}_2$ is not well defined then the statement of the lemma holds, so we can assume that $\widehat{\SW}_2$ is well defined and $\Delta_{\SW}>0$.
By definition of the auxiliary graph $H$, the existence of $e_{\hat{i}}$ implies $\va_x^{\hat{i}}>0$ and $\va_y^{\hat{i}}<1$, so it suffices to show $\va_x^{\hat{i}}\neq1$ and $\va_y^{\hat{i}}\neq0$. Suppose, for contradiction, that $\va_x^{\hat{i}}=1$ or $\va_y^{\hat{i}}=0$. We consider the following two cases.

\textbf{Case $\va_x^{\hat{i}}=1$.} 
Consider a new set of valuation functions $(v_i^{\p})_{i\in I}$ derived from $(v_i)_{i\in I}$ by defining $v_{\hat{i}}^{\p}(x):=v_{\hat{i}}(x)-\Delta_{\SW}$ and $v_i^{\p}:=v_i$ otherwise. By the definition of $\Delta_{\SW}$ and by the assumption that $\va_x^{\hat{i}}=1$, we can prove that the set of optimal allocations is the same for the original and the new valuation functions. So, $\{\bcO^t\}_{1\le t\le k}$ is exactly the set of all optimal allocations corresponding to the new valuation functions $(v_i^{\p})_{i\in I}$. Hence, the new auxiliary graph $H^{\p}$ is the same as $H$ and $\mathcal{C}$ is a cycle in $H^{\p}$. The weights of all edges in $\mathcal{C}$ remain unchanged, except that the weight on edge $e_{\hat{i}}$ becomes
\[w^{\p}(e_{\hat{i}})=v_{\hat{i}}^{\p}(x)-v_{\hat{i}}^{\p}(y)=v_{\hat{i}}(x)-\Delta_{\SW}-v_{\hat{i}}(y)=w(e_{\hat{i}})-\Delta_{\SW}\,,\]
where $w^{\p}$ denotes the weights in $H^{\p}$. So,
\[w^{\p}(\C)=w(\C)-\Delta_{\SW}=0-\Delta_{\SW}<0\,.\]
So $\C$ is a negative-weight cycle in the new auxiliary graph $H^{\p}$, contradicting Theorem~\ref{no negative cycle in auxiliary graph}.

\textbf{Case $\va_y^{\hat{i}}=0$.}  
This case is analogous to the above case.
We can consider a new set of valuation functions $(v_i^{\p\p})_{i\in I}$ derived from $(v_i)_{i\in I}$ by defining $v_{\hat{i}}^{\p\p}(y):=v_{\hat{i}}(y)+\Delta_{\SW}$ and $v_i^{\p\p}:=v_i$ otherwise to obtain a similar contradiction.

Conversely, suppose $0<\va_x^{\hat{i}}<1$ and $0<\va_y^{\hat{i}}<1$ and let us show that the edge $e_{\hat{i}}$ is in a 0-weight cycle $\C$. By the definition of the auxiliary graph, there is an edge from $y$ to $x$ induced by player $\hat{i}$. Let us denote it by $e_{\hat{i}}^{\p}$. The cycle $xe_{\hat{i}}ye_{\hat{i}}^{\p}x$ is a 0-weight cycle since
\[w(xe_{\hat{i}}ye_{\hat{i}}^{\p}x)=w(e_{\hat{i}})+w(e_{\hat{i}}^{\p})=v_{\hat{i}}(x)-v_{\hat{i}}(y)+v_{\hat{i}}(y)-v_{\hat{i}}(x)=0\,.\]
\end{proof}

\section{Rough Price}\label{Rough Price}

In this section, we provide the exact definition of \emph{rough prices} and describe an algorithm for finding rough prices. The algorithm is based on the auxiliary graph discussed in the previous section. 

\begin{definition}\label{rough price definition}
A price $\vp^R\in\real_{>0}^m$ is called a rough price if for every player $i\in I$, it satisfies the following conditions:
\begin{enumerate}[ref={\arabic*}]
\item For all $x\in \R_i$, $y\notin \R_i$, $u_i(x,\vp^R)>u_i(y,\vp^R)$.\label{p^R-prop.1}
\item For all $x,y\in \K_i\setminus \R_i$, $u_i(x,\vp^R)=u_i(y,\vp^R)$.\label{p^R-prop.2}
\item For all $x\in \K_i$, $y\notin \K_i$, $u_i(x,\vp^R)>u_i(y,\vp^R)$.\label{p^R-prop.3}
\end{enumerate}
\end{definition}

 From this definition, it is more apparent why the edges in the auxiliary graph represent preference constraints. Let us take two items $x,y\in\X$ and a player $i\in I$. If $x,y\in\K_i\setminus\R_i$ then we want player $i$ to be indifferent between $x$ and $y$; and so we include both edges $x\to y$ and $y\to x$ in $H$. If $x\in\R_i$ and $y\notin\R_i$, or $x\in\K_i$ and $y\notin\K_i$, then we want player $i$ to strongly prefer $x$ to $y$; and so we only include the edge $x\to y$ in $H$. Using Remark $\ref{R-K-a}$, one can check that this is equivalent to Definition~\ref{auxiliary graph definition} of the auxiliary graph.

The weight of every edge $x\to y$ in $H$ is defined to be the difference between the values of $x$ and $y$ for the corresponding player. For each item, we define the rough price to be the negative of the weight of a shortest path from some artificial source vertex to that item. Such a price counterbalances all the difference in the valuation functions. We also introduce a small perturbation of $\epsilon$ on certain edges to ensure that the inequalities in~\ref{p^R-prop.1} and~\ref{p^R-prop.3} in Definition~\ref{rough price definition} are strict. The details of our construction are outlined in the following algorithm.

\begin{algorithm}[H]
\caption{Finding rough prices in multi-demand markets.}\label{rough price}
\begin{algorithmic}[1]
\State Construct the auxiliary graph $H=(\X,E)$.
\State Define $E^{\p}:=\{e\in E:e\text{ is in some 0-weight cycle in } H\}$.
\State Define $\epsilon_c:=\min\{w(\C):\text{$\C$ is a cycle in $H$ and }w(\C)>0\}>0$.
\State Define $\epsilon_v:=\min\{v_i(x):i\in I,x\in\X\}>0$.
\State Define $\epsilon:=\frac{\min\{\epsilon_c,\epsilon_v\}}{m+1}>0$.
\State Define a new graph $H^{\p}$ based on $H$ by adding a source vertex $s$ and adding an edge $s\to x$ for each item $x\in\X$.
\State For every edge $e=x\to y$ in $H^{\p}$, define the weight $w^{\p}$ in $H^{\p}$ by
\[w^{\p}(e)=\begin{cases}
w(e)\quad&\text{if }e\in E^{\p}\\
w(e)-\epsilon\quad&\text{if }e\in E\setminus E^{\p}\\
0\quad&\text{if }x=s\,.
\end{cases}
\]
\State For every $x\in\X$, define $\delta(s,x)$ to be the weight of a shortest path from $s$ to $x$ in $H^{\p}$.
\State Define $\vp^R\in\real^m$ as $p^R_x:=-\delta(s,x)+\epsilon$ for every $x\in\X$.
\State \Return $\vp^R$.
\end{algorithmic}
\end{algorithm}

Through the choice of $\epsilon_c$, we ensure that $\epsilon$ is small enough so that all cycles in $H^{\p}$ have nonnegative weight as shown in the following lemma. This guarantees the existence of a shortest path from $s$ to every item. Through the choice of $\epsilon_v$, we ensure that the utility of any legal item remains positive as shown in Lemma~\ref{positive utility}. We see in the next section that this guarantees that a bundle in demand for player $i$ always contains $k_i$ items.

\begin{lemma}\label{cycle in H'}
There is no negative-weight cycle in the  graph $H^{\p}$.
\end{lemma}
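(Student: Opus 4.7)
The plan is to reduce cycles in $H'$ to cycles in $H$ and then use Theorem~\ref{no negative cycle in auxiliary graph} together with the careful choice of $\epsilon$. First I would observe that the source vertex $s$ has no incoming edges in $H'$: the only edges incident to $s$ are the outgoing edges $s\to x$ added in the construction. Therefore every directed cycle in $H'$ lies entirely in $H$, i.e.\ it is a cycle $\C$ of the original auxiliary graph whose weight in $H'$ satisfies
\[
w^{\p}(\C) \;=\; w(\C) \;-\; \epsilon\cdot\bigl|\C \setminus E^{\p}\bigr|\,,
\]
where $|\C\setminus E^{\p}|$ denotes the number of edges of $\C$ that are not in $E^{\p}$.

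Next I would split into two cases based on whether $\C$ uses any edge outside $E^{\p}$. If every edge of $\C$ lies in $E^{\p}$, then $w^{\p}(\C)=w(\C)$, which is nonnegative by Theorem~\ref{no negative cycle in auxiliary graph}, so there is nothing to prove. Otherwise, $\C$ contains at least one edge $e\notin E^{\p}$; by the definition of $E^{\p}$, this means $\C$ itself is not a $0$-weight cycle in $H$ (since all edges of any $0$-weight cycle belong to $E^{\p}$). Combining this with Theorem~\ref{no negative cycle in auxiliary graph} forces $w(\C)>0$, and hence $w(\C)\geq \epsilon_c$ by the definition of $\epsilon_c$.

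Now I would estimate $w^{\p}(\C)$. A simple cycle in $H$ has at most $m$ edges, so $|\C\setminus E^{\p}|\leq m$, and thus
\[
w^{\p}(\C) \;\geq\; \epsilon_c - m\epsilon \;\geq\; \epsilon_c - m\cdot\frac{\epsilon_c}{m+1} \;=\; \frac{\epsilon_c}{m+1} \;>\; 0\,,
\]
where I used $\epsilon\leq \epsilon_c/(m+1)$. This completes both cases and shows $w^{\p}(\C)\geq 0$ for every cycle of $H^{\p}$.

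The only conceptual step is recognizing the role of $E^{\p}$: the perturbation $-\epsilon$ is applied precisely to those edges whose original cycles have strictly positive weight, so the bound $\epsilon<\epsilon_c/m$ is enough to prevent a sign flip, while the $0$-weight cycles are left untouched so they stay at weight $0$ instead of becoming negative. I expect this to be the main (and essentially only) subtlety; everything else is a short case analysis.
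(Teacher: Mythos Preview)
Your proof is correct and follows essentially the same approach as the paper: reduce cycles in $H'$ to cycles in $H$ via the observation that $s$ has no incoming edges, then use Theorem~\ref{no negative cycle in auxiliary graph} together with the bound $\epsilon\le \epsilon_c/(m+1)$ and the length-$m$ bound on simple cycles. The only cosmetic difference is that the paper splits on whether $w(\C)=0$ or $w(\C)>0$, while you split on whether $\C\subseteq E'$ or not; these are equivalent case distinctions, and you should state explicitly (as the paper does) that it suffices to consider simple cycles so that the bound $|\C\setminus E'|\le m$ is justified.
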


\begin{proof}
Let $\C$ be a simple cycle in $H^{\p}$. Since the source vertex $s$ has no incoming edge, $\C$ does not include $s$ and hence is also a cycle in $H$. Since there are $m$ items, $\C$ has length at most $m$. 

If $\C$ is a $0$-weight cycle in $H$, then it is also a $0$-weight cycle in $H^{\p}$, because the weights of edges in~$\C$ are the same in $H$ and $H^{\p}$. Otherwise, $\C$ is a positive-weight cycle in $H$. By construction, we have $\epsilon\le\frac{\epsilon_c}{m+1}$ where $\epsilon_c$ is the minimum weight among all positive-weight cycles in~$H$. Thus, we have
\[w^{\p}(\C)\geq w(\C)-m\epsilon>w(\C)-\epsilon_c\geq0\,.\]
\end{proof}

From Lemma~\ref{cycle in H'} we know that the prices $\vp^R$ in Algorithm~\ref{rough price} are well-defined. We now show that these are indeed the desired rough prices.

\begin{theorem}\label{thm:rough prices}
The prices $\vp^R\in\real^m$ given by Algorithm $\ref{rough price}$ are rough prices.
\end{theorem}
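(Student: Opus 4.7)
The plan is to verify positivity of $\vp^R$ and then check each of the three conditions in Definition~\ref{rough price definition} by a case analysis, using the shortest-path triangle inequality together with Remark~\ref{R-K-a} and Lemma~\ref{a and 0-weight cycle} to decide, for each relevant edge, whether its weight in $H^{\p}$ was perturbed by $-\epsilon$ or left unchanged.

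First I would note that $\vp^R$ is well-defined and positive: by Lemma~\ref{cycle in H'} the shortest paths exist, and because $H^{\p}$ contains the edge $s\to x$ of weight $0$ for every $x\in\X$, we have $\delta(s,x)\le 0$, so $p^R_x = -\delta(s,x)+\epsilon \ge \epsilon > 0$. The main engine of everything else is the standard relaxation property: for every edge $x\to y$ in $H^{\p}$,
\[
\delta(s,y) \le \delta(s,x) + w^{\p}(x\to y),
\]
which translates, after rearranging $p^R_x = -\delta(s,x)+\epsilon$, into
\[
u_i(x,\vp^R) - u_i(y,\vp^R) \;=\; \bigl(v_i(x)-v_i(y)\bigr) + \bigl(p^R_y - p^R_x\bigr) \;\ge\; \bigl(v_i(x)-v_i(y)\bigr) - w^{\p}(x\to y),
\]
for any edge $x\to y$ in $H$ induced by player $i$.

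For condition~\ref{p^R-prop.1}, take $x\in\R_i$ and $y\notin\R_i$. By Remark~\ref{R-K-a}, $\va^i_x=1>0$ and $\va^i_y<1$, so $H$ contains an edge $x\to y$ induced by $i$ with weight $v_i(x)-v_i(y)$. Since $\va^i_x=1$, Lemma~\ref{a and 0-weight cycle} implies this edge is not in any $0$-weight cycle, hence it lies in $E\setminus E^{\p}$ and its weight in $H^{\p}$ is $v_i(x)-v_i(y)-\epsilon$. The displayed inequality above then gives $u_i(x,\vp^R)-u_i(y,\vp^R)\ge \epsilon > 0$. Condition~\ref{p^R-prop.3} is handled identically, using $\va^i_x>0$ and $\va^i_y=0$ and again invoking Lemma~\ref{a and 0-weight cycle} on the endpoint with $\va^i_y=0$ to place the edge in $E\setminus E^{\p}$.

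For condition~\ref{p^R-prop.2}, take $x,y\in\K_i\setminus\R_i$, so $0<\va^i_x<1$ and $0<\va^i_y<1$. Then both edges $x\to y$ and $y\to x$ induced by $i$ exist in $H$, and together they form a $0$-weight cycle (of length two); hence both belong to $E^{\p}$ and retain their original weights $v_i(x)-v_i(y)$ and $v_i(y)-v_i(x)$ in $H^{\p}$. Applying the shortest-path relaxation in both directions yields $\delta(s,y)-\delta(s,x) \le v_i(x)-v_i(y)$ and $\delta(s,x)-\delta(s,y)\le v_i(y)-v_i(x)$, which force equality. Substituting back, $p^R_y-p^R_x = v_i(y)-v_i(x)$, i.e.\ $u_i(x,\vp^R) = u_i(y,\vp^R)$. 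The main obstacle is precisely the edge-classification step in each case — making sure Lemma~\ref{a and 0-weight cycle} correctly distinguishes the perturbed edges (contributing the strict gap $\epsilon$) from the unperturbed ones (contributing the exact equality) — but once that is in place the three conditions fall out uniformly from the triangle inequality.
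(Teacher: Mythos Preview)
Your proposal is correct and matches the paper's own proof essentially step for step: positivity via the $s\to x$ edge, then the three cases handled by combining Remark~\ref{R-K-a} with Lemma~\ref{a and 0-weight cycle} to classify the edge as perturbed or not, followed by the shortest-path triangle inequality. The only cosmetic difference is that for condition~\ref{p^R-prop.2} you observe the length-two $0$-weight cycle directly rather than citing Lemma~\ref{a and 0-weight cycle}, which amounts to the same thing.
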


\begin{proof}
For all $x\in\X$, $s\to x$ is a path from $s$ to $x$, so
\[p_x=-\delta(s,x)+\epsilon\ge-w^{\p}(s\to x)+\epsilon=0+\epsilon>0\]
So $\vp^R$ is indeed a vector in $\real_{>0}^m$.

Now let us show that $\vp^R$ satisfies Definition~\ref{rough price definition}. Consider a player $i\in I$ and two items $x,y\in\X$. We consider the following three cases, each corresponding to a property in Definition~\ref{rough price definition}.

\textbf{Case $x\in \R_i$ and $y\notin \R_i$.}
By Remark $\ref{R-K-a}$, $\va_x^i=1$ and $\va_y^i<1$. So, there is an edge $e_i=x\to y$ with weight $w(e_i)=v_i(x)-v_i(y)$ in $H$. By Lemma~\ref{a and 0-weight cycle} we have $e_i\notin E^{\p}$, so
\[w^{\p}(e_i)=w(e_i)-\epsilon=v_i(x)-v_i(y)-\epsilon\,.\] 
Notice that a shortest path from $s$ to $x$ together with the edge $e_i$ is a path from $s$ to $y$, so we have
\[\left(-p_x^R+\epsilon\right)+(v_i(x)-v_i(y)-\epsilon)=\delta(s,x)+w^{\p}(e_i)\ge\delta(s,y)=-p_y^R+\epsilon\,,\]
and thus
\[u_i(x,\vp^R)=v_i(x)-p_x^R\ge v_i(y)-p_y^R+\epsilon=u_i(y,\vp^R)+\epsilon>u_i(y,\vp^R)\,.
\]

\textbf{Case $x,y\in \K_i\setminus\R_i$.}
By Remark~\ref{R-K-a}, $0<\va_x^i<1$ and $0<\va_y^i<1$. So, there is an edge $e_i=x\to y$ with weight $w(e_i)=v_i(x)-v_i(y)$ in $H$. By Lemma~\ref{a and 0-weight cycle} we have $e_i\in E^{\p}$, so
\[w^{\p}(e_i)=w(e_i)=v_i(x)-v_i(y).\]
By the same argument as in the above case, we have
\[\left(-p_x^R+\epsilon\right)+(v_i(x)-v_i(y))=\delta(s,x)+w^{\p}(e_i)\ge\delta(s,y)=-p_y^R+\epsilon,\]
and thus
\[u_i(x,\vp^R)=v_i(x)-p_x^R\ge v_i(y)-p_y^R=u_i(y,\vp^R)\,.
\]
Similarly, we have $u_i(y,\vp^R)\ge u_i(x,\vp^R)$; and  therefore, $u_i(x,\vp^R)=u_i(y,\vp^R)$.

\textbf{Case $x\in \K_i$ and $y\notin \K_i$.}
By Remark~\ref{R-K-a}, $0<\va_x^i\le1$ and $\va_y^i=0$. This case is similar to the case $x\in \R_i$ and $y\notin \R_i$.
\end{proof}

Now let us show that the found rough prices guarantee positive utilities for all legal items.

\begin{lemma}\label{positive utility}
For every player $i\in I$ and for every legal item $x\in \K_i$, we have $u_i(x,\vp^R)>0$.
\end{lemma}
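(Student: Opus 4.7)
The plan is to prove the equivalent fact that, for every simple path $P : s \to x_0 \to \cdots \to x_l = x$ in $H^{\p}$, we have $w^{\p}(P) + v_i(x) > \epsilon$. This is enough, because $u_i(x, \vp^R) = v_i(x) + \delta(s,x) - \epsilon$, because $\delta(s,x)$ equals the minimum of $w^{\p}(P)$ over all paths $P$ from $s$ to $x$, and because Lemma~\ref{cycle in H'} lets me restrict to simple paths, for which $l \le m - 1$. The case $l = 0$ is immediate from $v_i(x) \ge \epsilon_v > \epsilon$, which holds since $x \in \K_i$. For $l \ge 1$, I decompose $w^{\p}(P) = w(P) - c\epsilon$, where $0 \le c \le l \le m - 1$ counts the edges of $P$ that do not belong to $E^{\p}$, and then split on $\va_{x_0}^i$.

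The main case is $\va_{x_0}^i < 1$. Since $\va_x^i > 0$, the edge $x \to x_0$ induced by player $i$ exists in $H$; closing $P$ with it gives a cycle whose weight is nonnegative by Theorem~\ref{no negative cycle in auxiliary graph}, so $w^{\p}(P) + v_i(x) \ge v_i(x_0) - c\epsilon$. If $x_0 \in \K_i$, then $v_i(x_0) \ge \epsilon_v$ and the arithmetic bound $\epsilon_v - (m-1)\epsilon > \epsilon$, which is built into the definition $\epsilon = \min(\epsilon_c, \epsilon_v) / (m+1)$, finishes the argument. If $x_0 \notin \K_i$, then $\va_{x_0}^i = 0$, and Lemma~\ref{a and 0-weight cycle} puts the closing edge outside $E^{\p}$; reapplying the lemma, the whole cycle cannot be zero-weight, so its weight is at least $\epsilon_c$, yielding the sharper estimate $w^{\p}(P) + v_i(x) \ge \epsilon_c - (m-1)\epsilon > \epsilon$.

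The hardest case, and the principal obstacle, is $\va_{x_0}^i = 1$, i.e.\ $x_0 \in \R_i$: no player-$i$ edge can close $P$ at $x_0$. My plan leverages three structural observations. First, $\va_{x_0}^j = 0$ for every $j \ne i$ forces the first edge $x_0 \to x_1$ to be induced by player $i$. Second, Lemma~\ref{a and 0-weight cycle} together with $\va_{x_0}^i = 1 \notin (0,1)$ places this edge outside $E^{\p}$, so it contributes exactly $v_i(x_0) - v_i(x_1) - \epsilon$ to $w^{\p}(P)$. Third, the existence of this edge requires $\va_{x_1}^i < 1$, so $x_1 \notin \R_i$ and in particular the player-$i$ edge $x \to x_1$ exists. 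For $l = 1$ the direct computation $w^{\p}(P) + v_i(x) = v_i(x_0) - \epsilon \ge \epsilon_v - \epsilon > \epsilon$ already works. For $l \ge 2$, closing the sub-path $x_1 \to \cdots \to x_l = x$ with $x \to x_1$ produces a cycle in $H$ whose nonnegative weight, combined with the first edge's exact contribution, gives $w^{\p}(P) + v_i(x) \ge v_i(x_0) - (m-1)\epsilon \ge \epsilon_v - (m-1)\epsilon > \epsilon$, using $x_0 \in \R_i \subseteq \K_i$.
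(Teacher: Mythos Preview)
Your proof is correct and takes a genuinely different route from the paper's. The paper reuses the machinery behind Theorem~\ref{no negative cycle in auxiliary graph}: it builds a fractional reallocation $\bcB$ along the shortest path (treating $i_l=i$ and cycling $x_{l+1}=x_1$), reinvokes the convex-combination and Second-Welfare-Theorem arguments to get $\SW(\bcB)\le\SW(\bcA)$, and extracts the bound $u_i(x,\vp^R)\ge v_{i}(x_1)-m\epsilon$. You instead treat Theorem~\ref{no negative cycle in auxiliary graph} and Lemma~\ref{a and 0-weight cycle} as black boxes and close (a portion of) the path into a cycle directly, splitting on whether $\va_{x_0}^i<1$ and whether $x_0\in\K_i$. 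What you gain is a self-contained argument that never reopens the Walrasian-price toolbox; what you pay is the three-way case split, including the delicate $x_0\in\R_i$ branch where no player-$i$ edge can close at $x_0$. Your handling of that branch---observing that the first edge must be induced by $i$, lies outside $E'$, and forces $\va_{x_1}^i<1$ so that $x\to x_1$ exists---is exactly the structural fact the paper's reallocation argument needs but does not spell out (its $\bcB$ implicitly requires $\va_{x_1}^i<1$), so your treatment is arguably more careful there. One minor phrasing point: when you write that $c$ ``counts the edges of $P$ that do not belong to $E'$'', make explicit that the source edge $s\to x_0$ is excluded; your stated bound $c\le l$ already encodes this, and the arithmetic $\epsilon_v-(m-1)\epsilon>\epsilon$ needs the $m-1$ rather than $m$.
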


\begin{proof}
Take player $i\in I$ and item $x\in \K_i$. Let $se_sx_1e_{i_1}x_2...x_{l-1}e_{i_{l-1}}x_l=x$ be a shortest path from $s$ to $x$.
By Lemma~\ref{cycle in H'}, all cycles in $H^{\p}$ have nonnegative weight. So, any cycle in the above path must be a 0-weight cycle. Removing such cycles does not change the weight of the path, so we assume that $x_1,x_2,...x_l$ are all distinct. Therefore, we have $l\le m$.

For all $j=1,\ldots, l$, we have $\va_{x_j}^{i_j}>0$ since $e_{i_j}$ exists, so there exists $t_j\in \{1,\ldots, k\}$, such that~$x_j\in\cO_{i_j}^{t_j}$. We consider the following lower bound on the utility of item $x$ with respect to player $i$. Note that we identify player $i_l$ with player $i$ and item $x_{l+1}$ with item $x_1$.
\begin{align*}
u_i(x,\vp^R)&=v_{i}(x)-p_x^R=v_{i_l}(x_l)+\delta(s,x)-\epsilon=v_{i_l}(x_l)+w^{\p}(e_s)+\sum\limits_{j=1}^{l-1}w^{\p}(e_{i_j})-\epsilon\\
&\ge v_{i_l}(x_l)+0+\sum\limits_{j=1}^{l-1}(v_{i_j}(x_j)-v_{i_j}(x_{j+1})-\epsilon)-\epsilon\\
&=\sum\limits_{j=1}^{l}v_{i_j}(x_j)-\sum\limits_{j=1}^{l-1}v_{i_j}(x_{j+1})-l\epsilon\\
&=k\left(\frac{1}{k}\sum\limits_{j=1}^{l}v_{i_j}(x_j)-\frac{1}{k}\sum\limits_{j=1}^{l}v_{i_j}(x_{j+1})\right)+v_{i_l}(x_1)-l\epsilon\\
&\ge k(\SW(\bcA)-\SW(\bcB))+v_{i_l}(x_1)-m\epsilon,
\end{align*}
where $\bcB=(\mathbf{b}^i)_{i\in I}$ is the fractional allocation obtained from $\bcA$ by reallocating $\frac{1}{k}$ of item $x_{j+1}$ to player $i_j$ instead of player $i_{j+1}$ for $j=1,\ldots, l$. By construction, we have $\va_{x_j}^{i_j}\ge\frac{1}{k}$ for all $j=1,\ldots, l$. So $\bcB$ is indeed a valid fractional allocation. Formally speaking, 
for all $j=1,\ldots, l$ we have
\[\mathbf{b}^{i_j}=\va^{i_j}-\frac{1}{k}\boldsymbol{\alpha}^j\quad\text{where}\quad\boldsymbol{\alpha}^j=\mathbf{\hat{e}}_{x_j}-\mathbf{\hat{e}}_{x_{j+1}},\]
and for all other players $i$ we have $\mathbf{b}^i=\va^i$.

Using the same arguments as in Theorem~\ref{no negative cycle in auxiliary graph}, we get 
\[\sum\limits_{j=1}^l\langle\mathbf{v}^{i_j}-\vp^W,\mathbf{b}^{i_j}\rangle=\sum\limits_{j=1}^l\langle\mathbf{v}^{i_j}-\vp^W,\va^{i_j}-\frac{1}{k}\boldsymbol{\alpha}^j\rangle\le\sum\limits_{j=1}^l\langle\mathbf{v}^{i_j}-\vp^W,\va^{i_j}\rangle\,.\]

Using $\sum\limits_{j=1}^l\boldsymbol{\alpha}^j=\mathbf0$, we get
\[\sum\limits_{j=1}^l\langle\vp^W,\frac{1}{k}\boldsymbol{\alpha}^j\rangle=0\,.\]

Thus, we get
\[\sum\limits_{j=1}^l\langle\mathbf{v}^{i_j},\mathbf{b}^{i_j}\rangle=\sum\limits_{j=1}^l\langle\mathbf{v}^{i_j},\va^{i_j}-\frac{1}{k}\boldsymbol{\alpha}^j\rangle\le\sum\limits_{j=1}^l\langle\mathbf{v}^{i_j},\va^{i_j}\rangle\,.\]

Since $\SW(\bcB)\le\SW(\bcA)$, we get
\begin{align*}
u_i(x,\vp^R)\ge k(\SW(\bcA)-\SW(\bcB))+v_{i_l}(x_1)-m\epsilon\ge v_{i_l}(x_1)-m\epsilon\ge v_{i_l}(x_1)-\frac{m}{m+1}\epsilon_v>0\,.
\end{align*}
\end{proof}

\section{Fine Price}\label{Fine Price}
We have introduced Algorithm $\ref{rough price}$ for finding rough prices $\vp^R$ in a multi-demand market. In this section, we define \emph{fine prices} and show that the sum of rough and fine prices is a dynamic pricing. We call the market that we have been working with the original market, or \emph{Market \#1}. We now define a market induced by Market \#1, which we call \emph{Market \#2}.

Market \#2 has items $\X^2:=\X\setminus\left(\bigcup\limits_{i\in I}\R_i\right)$ and buyers $I_2:=\{i\in I\,:\,\R_i\neq \K_i\}$, where each player $i\in I_2$ is a $\hat{k}_i$-demand player with $\hat{k}_i\coloneqq k_i-|\R_i|$ and valuation function
\[\hat{v}_i(x):=\begin{cases}
u_i^*\quad&\text{if }x\in  \widehat{\K}_i:=\K_i\setminus \R_i\\
0\quad&\text{if }x\in\X^2\setminus \widehat{\K}_i\,,
\end{cases}\]
where $u_i^*\coloneqq u_i(x,\vp^R)$ for $x\in \widehat{\K}_i$. Note that by~\ref{p^R-prop.2} in Definition~\ref{rough price definition} of rough prices, $u_i^*$ is well-defined; and by Lemma~\ref{positive utility}, we have $u_i^*>0$ for every player $i\in I_2$.

For $i\in I_2$, let us define
\[\Delta_i\coloneqq u_i^*-\max(\{0\}\cup \{u_i(x,\vp^R)\,:\,x\notin \K_i\}).\]

Note that $\Delta_i>0$ for each  $i\in I_2$ by Lemma~\ref{positive utility} and~\ref{p^R-prop.3} in Definition~\ref{rough price definition} of rough prices. Let us also define $\Delta:=\min\{\Delta_i:i\in I_2\}>0$, where we assume that $I_2$ is not empty since otherwise Market~\#2 is trivial.

\begin{definition}\label{definition fine prices}
A price $\vp^F\in\real_{>0}^{|\X^2|}$ is called a fine price of Market \#1 if it satisfies the following conditions.
\begin{enumerate}[ref={\arabic*}]
\item $\vp^F$ is a dynamic pricing in Market \#2.
\item For all $x\in\X^2$ and for all $i\in I_2$, $0<p_x^F<\Delta\le\Delta_i$ .\label{range of p^F}
\end{enumerate}
\end{definition}

Let us assume that we are given a fine price $\vp^F$ and  rough price $\vp^R$ for Market \#1, then we can combine them into a dynamic pricing $\vp^D\in\real_{>0}^m$ for Market \#1 as follows
\[
p_x^D:=\begin{cases}
p_x^R+p_x^F\quad&\text{if }x\in \X^2\\
p_x^R\quad&\text{if }x\in\X\setminus \X^2\,.
\end{cases}
\]

We first note that $\vp^D$ is within the desired range. That is, $\vp^D$ is positive; and from~\ref{range of p^F} in Definition~\ref{definition fine prices} and the construction of $\Delta$, one can show that for each player every legal item yields a positive utility. We now show that $\vp^D$ is indeed a dynamic pricing in Market \#1. In the next section, we proceed by finding fine prices in certain types of markets.

Intuitively, the purpose of setting $\vp^F$ is to motivate each player $i$ to differentiate the items in $\K_i\setminus\R_i$. We further note that as long as $\vp^F$ is a dynamic pricing in Market \#2 that satisfies ~\ref{range of p^F} in Definition~\ref{definition fine prices}, the numerical values of $\vp^F$ are irrelevant. By construction of the rough prices, a player $i\in I_2$ only picks items from $\widehat{\K}_i$ and all such items have the same value in Market \#2. So the player $i$ always picks the $\hat{k}_i$ items in $\widehat{\K}_i$ with the lowest prices, regardless of the numerical value of the prices. As a result, it suffices to consider the ordering of the prices in $\vp^F$. Most importantly, the values of $u_i^*$, $i\in I_2$, are irrelevant, we can simply consider a  valuation function that maps any legal item to $2$ and any illegal item to $1$. We call such Market \#2 a \emph{simplified} multi-demand market. The exact definition  is provided in the next section in Definition~\ref{simplified multi-demand market}.

Hereafter, we focus on Market \#2 and construct an algorithm for finding the dynamic pricing $\vp^F$ in Market \#2. In particular, we consider three types of markets, namely, multi-demand markets with four players, multi-demand markets with two optimal allocations, and tri-demand markets.

\section{Legality Graph}\label{Legality Graph}

In this section, we define a weighted directed graph called \emph{legality graph}. As its name suggests, this graph is constructed based on legality information about the items. We also discuss the properties of this graph, especially the existence of cycles. Later, we use legality graphs to find a dynamic pricing in markets with four players and in markets with two optimal allocations.

Firstly, we formalize our observation regarding the Market \#2 in the previous section into the following definition. To make the notation clearer, we also define a \emph{legality} function which replaces the role of valuation functions.

\begin{definition}\label{simplified multi-demand market}
Consider a set $\X$ of items and a set $I$ of players such that for all $i\in I$, player $i$ is $k_i$-demand, where $k_i\ge1$. Let the vector of demands be defined as $\mathbf{k}:=(k_i)_{i\in I}$. Let $\legal:I\to2^\X$ be the function given by $\legal(i)=\{x\in\X\,:\,x\text{ is legal to }i\}$ for all $i\in I$. Then, $A:=(\X,I,\mathbf{k},\legal)$ is called a simplified multi-demand market. We refer to the function $\legal$ as the legality function of this simplified market. If $|I|\ge 2$ and $|\X|=\sum_{i\in I}k_i$, then $A$ is said to be saturated.
\end{definition}

Hereafter, we assume that the market  is a simplified multi-demand market. We consider a saturated market $A=(\X_A,I_A,\mathbf{k},\legal)$. Let $m=|\X_A|$ and $n=|I_A|$. We assume that every item is legal to at least two players. Using the notation from previous sections, we can write $\R_i=\varnothing$ and $\K_i=\legal(i)$ for all $i\in I$. Whenever necessary, for $i\in I$, we consider every item in $\legal(i)$ as having valuation 2 and every item in $\X_A\setminus\legal(i)$ as having valuation 1. Notice that these definitions and assumptions are consistent with the properties of Market \#2 as discussed in the previous section. Let $\bcO$ be a legal allocation in market $A$.

Our approach to finding a dynamic pricing in multi-demand market with four players or two optimal allocation is by induction on the number of items. So, it is useful to make the following definition of a \emph{submarket}.

\begin{definition}
Given a subset $\X_B\subseteq\X_A$. For all $i\in I_A$, let $k_i^{\p}:=|O_i\cap\X_B|$. Let $I_B:=\{i\in I_A:k_i^{\p}>0\}\subseteq I_A$. Define a new market $B=(\X_B,I_B,(k_i^{\p})_{i\in I_B},\legal)$. Note that here $\legal$  represents the legality with respect to market $A$. Suppose market $B$ has the same valuation functions as market $A$ with the domain restricted to $\X_B$. Then we say that market $B$ is a submarket of market $A$ induced by the legal allocation $\bcO$.
\end{definition}

Whenever it is not relevant for our argumentation, we refer to $B$ as a submarket of $A$ without specifying a particular legal allocation.

We now define a legality graph $G$ with respect to a legal allocation $\bcO$. This is related to the preference graph in Cohen-Addad et.\ al.\ paper~\cite{Cohen2016}, but we use legality, instead of valuation functions while constructing this graph.

\begin{definition}
The legality graph $G$ with respect to $\bcO$ is defined to be the directed graph $(\X_A,E)$, such that for any two items $x,y\in\X_A$ the edge $x\to y$ is in $E$ if and only if $x\in\cO_i$ and $y\in\legal(i)\setminus\cO_i$ for some player $i\in I_A$.
\end{definition}

Now that we have introduced the concepts of submarkets and legality graphs, we show how submarkets of $A$ relate to market $A$ in terms of legality. Remark~\ref{legality in B implies legality in A} states that if an item is legal to a player in some submarket of $A$, then that item is also legal to the same player in market $A$. Consequently, Remark~\ref{submarket implies subgraph} states that the legality graph with respect to the restriction of $\bcO$ to a submarket $A$ induced by $\bcO$ must be a subgraph of $G$.

\begin{remark}\label{legality in B implies legality in A}
Let $B$ be a submarket of market $A$ with items $\X_B$ and players $I_B$. For each player $i\in I_B$, if an item $x\in\X_B$ is legal to player $i$ in market $B$ then $x$ is also legal to player $i$ in market~$A$.
\end{remark}

\begin{remark}\label{submarket implies subgraph}
Let market $B$ be the submarket of $A$ induced by $\bcO$ with items $\X_B$. Let $\bcO^B$ be the restriction of $\bcO$ to $B$. Let $G^B$ be the legality graph corresponding to $\bcO^B$. Then, $G_B$ is a subgraph of $G$.
\end{remark}

Next, we provide two main results about the cycles in the legality graph. Firstly, we define the \emph{reallocation of items with respect to a cycle} in Definition~\ref{reallocation along a cycle}, which always yields another legal allocation. We use this result to move between any two legal allocations and to move between any two legality graphs. Secondly, every edge and vertex in the legality graph is in some cycle. This guarantees the existence of cycles containing a certain edge or a certain vertex. These properties are useful in the next two sections.

\begin{definition}\label{reallocation along a cycle}
Suppose $\C=x_1\to x_2\to...\to x_k=x_1$ is a cycle in the legality graph $G$  with respect to $\bcO$, and $x_j\in\cO_{i_j}$ for $j=1,\ldots, k-1$. The allocation obtained from $\bcO$ by assigning item $x_{j+1}$ to player $i_j$ for $j=1,\ldots, k-1$, is called the reallocation of $\bcO$ with respect to $\C$.
\end{definition}

Using Theorem~\ref{legality-optimality}, we see that optimal allocations are related to each other through disjoint simple cycles in the legality graph. As a consequence, we can show that every edge and every vertex in the legality graph are in a cycle.

\begin{remark}\label{optimal allocations are related by disjoint cycles}
Let $\bcO^{\p}$ be an optimal allocation in market $A$ different from the optimal allocation~$\bcO$. Then, there exists a set of disjoint simple cycles in $G$
such that we can obtain $\bcO^{\p}$ from $\bcO$ by reallocating items with respect to each of these cycles. 
\end{remark}

\begin{remark}\label{reallocation along cycle}
An allocation is legal if and only if it can be obtained from $\bcO$ by reallocations  with respect to a disjoint set of simple cycles in $G$.
\end{remark}

Using Remark $\ref{optimal allocations are related by disjoint cycles}$, we now show that given an edge $x\to y$, we can either define a bijection on the items that maps $x$ to $y$ or find two intersecting cycles in the legality graph such that one contains $x$ and the other contains $y$. We then conclude that every edge in $G$ is in a cycle. It follows directly that every vertex in $G$ is in a cycle.

\begin{corollary}\label{every edge is in a cycle}
Every edge in the legality graph $G$ is in some cycle.
\end{corollary}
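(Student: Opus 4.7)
The plan is to show that for every edge $x \to y$ in $G$, the vertex $x$ is reachable from $y$ by a directed path in $G$; concatenating such a path with the edge $x \to y$ then immediately produces a cycle through this edge. So the task reduces to proving that $x$ is reachable from $y$.

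I would argue this by contradiction. Fix the edge $x \to y$ with $x \in \cO_i$ and $y \in \legal(i) \setminus \cO_i$, and suppose $x$ is not reachable from $y$ in $G$. Let $R \subseteq \X_A$ denote the set of vertices reachable from $y$, so $y \in R$ and $x \notin R$. The key closure property to record is this: because the bundles $(\cO_j)_{j \in I_A}$ partition $\X_A$, each $z \in R$ lies in a unique $\cO_j$, and by the definition of $G$ the edges out of $z$ point exactly to $\legal(j) \setminus \cO_j$; hence this entire set is contained in $R$. Set $J := \{j \in I_A : \cO_j \cap R \neq \emptyset\}$ and $S := \bigcup_{j \in J} \cO_j$. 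Then $R \subseteq S$ (every item of $R$ sits in $\cO_j$ for some $j \in J$), and for every $j \in J$ the closure property gives $\legal(j) \subseteq \cO_j \cup R \subseteq S$.

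The next step is a Hall-type counting. Since $A$ is saturated, $|S| = \sum_{j \in J} k_j$, and since each player in $J$ can legally receive only items from $S$, in any legal allocation the players of $J$ receive exactly the items of $S$. By Theorem~\ref{legality-optimality} the same holds for optimal allocations, so every item of $S$ is legal only to players in $J$. Now $y \in R \subseteq S$ together with $y \in \legal(i)$ forces $i \in J$, and therefore $x \in \cO_i \subseteq S$. The standing assumption that every item is legal to at least two players yields some $l \neq i$ with $x \in \legal(l)$; since $x \in S$ this gives $l \in J$, so one can pick $w \in \cO_l \cap R$. Then $w \to x$ is an edge of $G$ (because $x \in \legal(l) \setminus \cO_l$), and combining it with the $y$-to-$w$ path inside $R$ witnesses $x \in R$, contradicting our assumption.

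The main obstacle I anticipate is isolating the right closure property of $R$ so that pure reachability in $G$ upgrades to a saturated subsystem $(S, J)$ amenable to Hall-style counting. Once the closure is phrased in terms of the unique bundle containing the tail and the legality of the head, the identity $|S| = \sum_{j \in J} k_j$ turns a graph-theoretic reachability hypothesis into a legality statement and pulls the auxiliary player $l$ into $J$. The argument makes essential use of the standing assumption that each item of $\X_A$ is legal to at least two players; without it, the final edge $w \to x$ would not be available, and an item legal to only one player could indeed fail to lie on any cycle.
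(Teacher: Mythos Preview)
Your proof is correct and takes a genuinely different route from the paper's. The paper argues constructively via legal allocations: it first handles the case where some legal allocation $\bcO'$ swaps $x$ out of and $y$ into player $i$'s bundle (then bijections between symmetric differences, as in Remark~\ref{optimal allocations are related by disjoint cycles}, yield a cycle through $x\to y$); in the remaining case it builds two auxiliary cycles $\C_1$ (through an edge $x_1\to y$) and $\C_2$ (through an edge $x\to y_2$), shows by a reallocation argument that they must intersect, and splices them into a cycle through $x\to y$. Your argument instead is a clean reachability-plus-counting proof: you take $R$ to be the forward reachable set from $y$, observe the closure $\legal(j)\setminus\cO_j\subseteq R$ whenever $\cO_j$ meets $R$, and then use saturation to deduce that the block $S=\bigcup_{j\in J}\cO_j$ is exactly filled by $J$ in every legal allocation, forcing $i\in J$ and hence $x\in S$; the two-players-per-item assumption then produces an edge $w\to x$ with $w\in R$. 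This avoids the case split and the reliance on Remarks~\ref{optimal allocations are related by disjoint cycles} and~\ref{reallocation along cycle}, and is more self-contained; the paper's approach, on the other hand, is more explicitly constructive and foreshadows the reallocation-along-cycles machinery used later. Both proofs genuinely need the standing assumption that every item is legal to at least two players.
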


\begin{proof}
Consider an edge $x\to y$ in the legality graph $G$. Suppose $x\in\cO_i$ for some player $i$, then $y\in\legal(i)\setminus\cO_i$.
If there exists a legal allocation $\bcO^{\p}$ such that $x\in \cO_i\setminus \cO_i^{\p}$ and $y\in \cO_i^{\p}\setminus \cO_i$, then we can let $f_i:\cO_i\setminus \cO_i^{\p}\to \cO_i^{\p}\setminus \cO_i$ be a bijection such that $f_i(x)=y$. Similarly we can fix an arbitrary bijection $f_j:\cO_j\setminus \cO_j^{\p}\to \cO_j^{\p}\setminus \cO_j$ for every player $j$ different from $i$.  Consider bijections $f_j$ for every player $j$, we find a set of simple cycles as in Remark~\ref{optimal allocations are related by disjoint cycles} such that  $x\to y$ is in  one of these simple cycles.

Otherwise, for all legal allocations $\bcO^{\p}$, we have that $y\in \cO_i^{\p}\setminus \cO_i$ implies $x\in \cO_i\cap \cO_i^{\p}$. Suppose $y\in \cO_i^1\setminus \cO_i$ for some legal allocation $\bcO^1$. Consider an item $x_1\in\cO_i\setminus \cO_i^1$, then $x_1\to y$ is an edge in $G$ and is in some simple cycle $\C_1$, by arguments above. Let $\bcO^2$ be a legal allocation in which $x\in\cO_i\setminus\cO_i^2$. Such $\bcO^2$ exists since $x$ is legal to at least two different players. Consider an item $y_2\in\cO_i^2\setminus\cO_i$, then $x\to y_2$ is an edge in $G$ and is in some simple cycle $\C_2$.

Suppose, for contradiction that $\C_1$ and $\C_2$ do not intersect. Let $\bcO^{\p}$ be the reallocation of $\bcO$ with respect to both cycles $\C_1$ and $\C_2$. By Remark~\ref{reallocation along cycle}, the allocation $\bcO^{\p}$ is legal. In the allocation $\bcO^{\p}$, the item $x$ is no longer allocated to $i$, but $y$ is allocated to $i$. This contradicts the hypothesis that $y\in\cO_i^{\p}\setminus\cO_i$ implies $ x\in\cO_i\cap\cO_i^{\p}$ for every legal allocation $\cO_i^{\p}$.
So, we have that $\C_1$ and $\C_2$ intersect at some item $\hat{x}$, then the path from $\hat{x}$ to $x$ in $\C_2$, edge $x\to y$ and the path from $y$ to $\hat{x}$ in $\C_1$ form a cycle which contains the edge $x\to y$.
\end{proof}

\begin{corollary}\label{every vertex is in a cycle}
Every vertex in the legality graph $G$ is in some cycle.
\end{corollary}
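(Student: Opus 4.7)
The plan is to derive this corollary directly from Corollary~\ref{every edge is in a cycle} by showing that every vertex of $G$ is incident to at least one edge. Once we have that, we invoke the previous corollary on any incident edge and the resulting cycle automatically contains the chosen vertex.

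First, I would fix an arbitrary vertex $x \in \X_A$. Since $\bcO$ is an allocation of all items (the market is saturated), there is a unique player $i \in I_A$ with $x \in \cO_i$. I now want to exhibit a player $j \neq i$ such that $x \in \legal(j)$, because then any item $y \in \cO_j$ (the set $\cO_j$ is non-empty since $k_j \geq 1$ in a simplified multi-demand market) yields an edge $y \to x$ of $G$: indeed $y \in \cO_j$ and $x \in \legal(j) \setminus \cO_j$, the latter because $x \in \cO_i$ is not in $\cO_j$ while being legal to $j$. The existence of such a $j$ is exactly the standing assumption (stated just after Definition~\ref{simplified multi-demand market}) that every item is legal to at least two players.

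Having produced an edge $y \to x$ of $G$, I invoke Corollary~\ref{every edge is in a cycle} to conclude that $y \to x$ lies on a cycle $\C$ of $G$. Since $x$ is an endpoint of that edge, $x$ is a vertex of $\C$, which is the desired conclusion.

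There is no real obstacle here; the statement is an immediate corollary of the preceding one, and the only thing that must be carefully invoked is the running assumption that every item is legal to at least two players, which is precisely what guarantees that each vertex is not isolated in $G$.
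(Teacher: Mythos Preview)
Your argument is correct and matches the paper's intended approach: the paper states this corollary without proof, remarking only that it ``follows directly'' from Corollary~\ref{every edge is in a cycle}, and you have filled in precisely the detail needed---namely that the standing assumption (every item is legal to at least two players) guarantees each vertex has an incident edge.
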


Finally, for the sake of exposition we usually consider a cycle in which no two items are assigned to the same player. Therefore, we make the following definition of a \emph{uniquely assiged cycle} and prove in Lemma~\ref{cycle of items allocated to different players} that every item is contained in a cycle with this property.

\begin{definition}
Let $\C=x_0\to x_1\to x_2...x_{l-1}\to x_l=x_0$ be a simple cycle in $G$. We say that $\C$ is a uniquely assigned cycle if for all $i\in I_A$ we have
\[|\{x_j\,:\,j=1,\ldots, l\}\cap\cO_i|\in\{0,1\}\,.\]
\end{definition}

\begin{lemma}\label{cycle of items allocated to different players}
Every vertex in the legality graph $G$ is contained in some uniquely assigned cycle.
\end{lemma}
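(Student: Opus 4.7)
The plan is to invoke Corollary~\ref{every vertex is in a cycle} and then ``shortcut'' any repeated bundle. Fix a vertex $x \in \X_A$. I would first apply Corollary~\ref{every vertex is in a cycle} to obtain a simple cycle of $G$ through $x$, and then among all simple cycles of $G$ containing $x$ pick one
\[
\C : x_0 \to x_1 \to \cdots \to x_{l-1} \to x_0
\]
with the fewest vertices. The aim is to show that a minimal such $\C$ is automatically uniquely assigned.

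For the main step I would suppose, for contradiction, that two distinct vertices $x_a, x_b$ of $\C$ (with $0 \le a < b \le l-1$) both lie in the same bundle $\cO_i$. Because $\bcO$ is an allocation, every edge of $\C$ leaving an element of $\cO_i$ is induced by player $i$: the edge $x_a \to x_{a+1}$ together with $x_a \in \cO_i$ forces $x_{a+1} \in \legal(i) \setminus \cO_i$, and analogously $x_{b+1} \in \legal(i) \setminus \cO_i$ (indices mod $l$). By the definition of $G$ this yields both ``shortcut'' edges $x_a \to x_{b+1}$ and $x_b \to x_{a+1}$.

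Using these shortcuts I would split $\C$ into the two simple cycles
\[
\C_A : x_0 \to \cdots \to x_a \to x_{b+1} \to \cdots \to x_{l-1} \to x_0, \qquad \C_B : x_{a+1} \to \cdots \to x_b \to x_{a+1},
\]
whose vertex sets partition the vertex set of $\C$. Hence $x$ lies on exactly one of $\C_A, \C_B$, and that cycle has strictly fewer vertices than $\C$, contradicting minimality. So $\C$ must be uniquely assigned.

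The only real obstacle is verifying that the two shortcut edges actually belong to $G$, which boils down to the observation that the head of every edge of $\C$ leaving $\cO_i$ automatically sits in $\legal(i) \setminus \cO_i$; once this is noted, the rest is bookkeeping. I would also double-check that $\C_A$ and $\C_B$ are genuinely \emph{simple} (which follows because their vertex sequences are disjoint sub-sequences of the distinct vertices of $\C$) and that the cyclic indexing causes no issue in the boundary case $b = l-1$, where $x_{b+1} = x_0$ and the construction of $\C_A, \C_B$ goes through unchanged.
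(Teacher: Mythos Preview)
Your proof is correct and follows essentially the same minimality-plus-shortcut argument as the paper. The only cosmetic difference is that the paper fixes $x=x_0$ and uses the single shortcut $x_j\to x_{j'+1}$ (so the shortened cycle automatically contains $x_0$), whereas you construct both shortcuts and argue that $x$ lies on one of the two resulting cycles; the underlying idea is identical.
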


\begin{proof}
Consider an item $x_0\in\X_A$. By Corollary~\ref{every vertex is in a cycle}, $x_0$ is in some cycle $\C=x_0\to x_1\to x_2...x_{l-1}\to x_l=x_0$ in $G$. Let us select such cycle $\C$ of minimum possible length. Suppose $x_j\in O_{i_j}$ for $j=0,\ldots,l$. Then, for $j=0,\ldots, l-1$, we have that $x_{j+1}$ is legal but not allocated to player $i_j$. If all $i_j$ for $j=0,\ldots, l-1$ are distinct, then $\C$ is a desired uniquely assigned cycle. Otherwise, suppose $i_j=i_{j^{\p}}$ for some $j$ and $j^\p$ such that $0\le j<j^{\p}\le l-1$. Then $x_{j+1}$ and $x_{j^{\p}+1}$ are legal but not allocated to $i_j=i_{j^{\p}}$. This means that the edge $x_j\to x_{j^{\p}+1}$ exists in $G$ and $x_0\to x_1\to\ldots\to x_j\to x_{j^{\p}+1}\to x_{j^{\p}+2}\to\ldots\to x_{l-1}\to x_l=x_0$ is a cycle in $G$. This cycle with length $l-j^{\p}+j$ is strictly smaller than the cycle $\C$ we started with and it contains the item $x_0$, contradicting the selection of $\C$. 
\end{proof}

\section{Multi-Demand Market with Four Players}\label{Multi-Demand Market with Four Players}

To prove that dynamic prices exist for markets with four buyers, we would like to use induction on the number of items in the market. For this, we would like to find an appropriate set of items $\X_r\subseteq\X_A$ to remove from the market. By inductive hypothesis, a market with items $\X_A\setminus\X_r$ has a dynamic pricing; so by assigning either the ``lowest'' or the ``highest'' prices to the items in $\X_r$, we plan to obtain the dynamic pricing for the original market $A$. 

As mentioned at the end of Section~\ref{Fine Price}, we are not concerned with the numerical values of the prices but rather with their ordering. In particular, we assume that the original market $A$ is a \emph{simplified multi-demand market}. We assume that all prices are within the appropriate range given by~\ref{range of p^F} in Definition~\ref{definition fine prices} and focus on the ordering of the prices. Under these assumptions, a bundle in demand is a legal bundle with the lowest fine prices. So, it suffices to find the proper ordering of the items in terms of prices, so that whenever a player picks a lowest priced legal bundle, it is extendable to a legal allocation.

We first consider a special case that is easily solvable by induction on the number of items in the market.

\begin{lemma}\label{item legal to all players}
Suppose an item $\hat{x}\in\X_A$ is legal to all players. Let $B$ be the submarket of $A$ induced by $\bcO$ with items $\X_A\setminus\{\hat{x}\}$. Then any pricing $\vp\in\real_m$ satisfying the following conditions is a dynamic pricing in market $A$.
\begin{enumerate}
\item For all $x\in\X_A\setminus\{\hat{x}\}$, $p_{\hat{x}}>p_x$.\label{item legal to all players.1}
\item $(p_x)_{x\in\X_A\setminus\{\hat{x}\}}$ is a dynamic pricing in market $B$. \label{item legal to all players.2}
\end{enumerate}
\end{lemma}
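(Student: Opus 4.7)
The plan is to verify the dynamic-pricing property of $\vp$ in $A$ one player at a time. Fix $i\in I_A$ and a bundle $\X^{\p}$ in demand for $i$ under $\vp$; the goal is to exhibit a legal allocation of $A$ assigning $\X^{\p}$ to $i$. In the simplified setup, legal items are worth $2$ and illegal ones are worth $1$, so every bundle in demand is a $k_i$-subset of $\legal(i)$ of minimum total price. By condition~\ref{item legal to all players.1} the item $\hat x$ is strictly the priciest in $A$, so a one-item swap argument yields a clean dichotomy: either $|\legal(i)|=k_i$, in which case $\X^{\p}=\legal(i)=\cO_i$ and the given $\bcO$ already witnesses the extension, or $|\legal(i)|\ge k_i+1$ and $\hat x\notin\X^{\p}\subseteq\X_B$.

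In the latter case I would branch on whether $\hat x\in\cO_i$. If $\hat x\notin\cO_i$, then $i$'s demand in $B$ equals $k_i$, and removing $\hat x$ from the pool of $i$-legal items preserves the identity of the $k_i$ cheapest, so $\X^{\p}$ is a bundle in demand for $i$ in $B$ as well. Condition~\ref{item legal to all players.2} then produces a legal allocation $\bcO^B$ of $B$ with $\cO^B_i=\X^{\p}$, and I would extend it to $A$ by appending $\hat x$ to the bundle of the unique player $j$ with $\hat x\in\cO_j$; legality of the resulting allocation in $A$ follows from $\hat x\in\legal(j)$ and Remark~\ref{legality in B implies legality in A}.

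The delicate case is $\hat x\in\cO_i$, where $i$'s demand drops to $k_i-1$ in $B$ and $\X^{\p}$ becomes too large to be a bundle there. I will dispose of the corner case $k_i=1$ directly, since $\X^{\p}=\{y\}$ with $y\in\legal(i)$ immediately yields a legal allocation of $A$ assigning $y$ to $i$ by the definition of legality. For $k_i\ge 2$, I would pick a priciest item $x^*\in\X^{\p}$; then $\X^{\p}\setminus\{x^*\}$ consists of the $k_i-1$ cheapest items in $\legal(i)\cap\X_B$ and is a bundle in demand for $i$ in $B$. Condition~\ref{item legal to all players.2} supplies a legal allocation $\bcO^B$ of $B$ with $\cO^B_i=\X^{\p}\setminus\{x^*\}$; by saturation of $B$ the extension to $A$ is forced to add $\hat x$ back to $i$'s bundle, so the resulting $\tilde\bcO$ has $\tilde\cO_i=(\X^{\p}\setminus\{x^*\})\cup\{\hat x\}$ rather than $\X^{\p}$. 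To repair this, I would swap $\hat x$ with $x^*$ between $i$ and the unique player $j\ne i$ holding $x^*$ in $\bcO^B$: since $x^*\in\legal(j)$ by legality of $\bcO^B$ and $\hat x\in\legal(j)$ by hypothesis, the swap preserves legality on both sides, and the resulting allocation of $A$ assigns exactly $\X^{\p}$ to $i$.

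The main obstacle is this last repair step. The demand mismatch at $i$ between $A$ and $B$ prevents a direct appeal to condition~\ref{item legal to all players.2} for the bundle $\X^{\p}$, and we must route through an intermediate allocation whose $i$-bundle differs from $\X^{\p}$ by exactly one item; the universal legality of $\hat x$ is what makes the corrective swap with $x^*$ compatible with legality, which is the crux of the argument.
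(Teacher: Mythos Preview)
Your argument is correct and mirrors the paper's: branch on whether $i=\hat i$ (your test $\hat x\in\cO_i$), drop to the dynamic pricing in $B$, and in the $i=\hat i$ case repair the one-item demand mismatch by swapping $\hat x$ with $x^*$ using the universal legality of $\hat x$. One simplification you missed: in the simplified-market setting of Section~\ref{Legality Graph} we have $\R_i=\varnothing$, which forces $|\legal(i)|\ge k_i+1$, so your first branch $|\legal(i)|=k_i$ is vacuous and the paper simply omits it.
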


\begin{proof}
Suppose that $\hat{x}\in\cO_{\hat{i}}$ for some $\hat{i}\in I_A$. Every player $i$ different from $\hat{i}$ is a $k_i$-demand player  in market $B$. Condition~\ref{item legal to all players.1} ensures that player $i$ only picks items from market $B$ and condition~\ref{item legal to all players.2} ensures that such choices can be extended to an optimal allocation in market $B$. This guarantees that any bundle in demand for player $i$ in the market $B$ with respect to the price $\vp$ can be extended to a legal allocation in the market $A$ by assigning $\hat{x}$ to $\hat{i}$. 

On the other hand, any bundle in demand for player $\hat{i}$ with respect to the price $\vp$ has size $k_{\hat{i}}$. Conditions~\ref{item legal to all players.1} and~\ref{item legal to all players.2}  guarantee that $k_{\hat{i}}^{\p}=k_{\hat{i}}-1$ of these items can be extended to some legal allocation $\bcO^B$ in market~$B$; and the $k_{\hat{i}}$-th item, call it $x^{\p}$, can be either $\hat{x}$ or some item in market~$B$. If $x^{\p}=\hat{x}$, then we finish the proof since this bundle can be extended  to a legal allocation in the market $A$. Otherwise, suppose that $x^{\p}\in\cO_{i^{\p}}^B$ for some $i^{\p}$ different from $\hat{i}$. We can compensate the player $i^{\p}$ by assigning $\hat{x}$ to $i^{\p}$ in order to obtain a legal allocation in market $A$.
\end{proof}

Lemma~\ref{item legal to all players} allows us to assume that, in market $A$, there does not exist an item that is legal to all players. We define the following two types of \emph{removable subsets} of items. We show that when a removable subset exists, we can always find a dynamic pricing by assuming that a dynamic pricing can be found in a submarket with less than $m$ items.

\begin{definition}
Suppose there is an item $x_c\in\cO_{i_c}$ for some player $i_c$ such that for every player $ i\in I_A$ who is different from $i_c$ and for whom $x_c$ is legal, there exists an item $x_i\in\X_A\setminus\{x_c\}$ such that $x_i\in\cO_i\cap\legal(i_c)$. That is, for every player $i$ such that $x_c$ is legal but not allocated to $i$, there is an item $x_i$ that is allocated to player $i$ in $\bcO$ and is legal to player $i_c$. Let $I_r:=\{i\in I_A:x_c\in\legal(i)\}$. Then, the set $\X_r=\{x_c\}\cup\{x_i:i\in I_r\setminus\{i_c\}\}$ is called a removable set of type I with respect to legal allocation $\bcO$; for this removable set the item $x_c$ is called a central item.
\end{definition}

Intuitively, when a removable set of type I exists, we can set the price of $x_c$ to be ``low'' and the prices of all other items in $\X_r$ to be ``high'' so that no player picks items from $\X_r\setminus\{x_c\}$. For players not in $I_r$ the item $x_c$ is illegal, so their behaviour is solely determined by the dynamic pricing in the submarket with items $\X_A\setminus\X_r$. For the players in $I_r$ the item $x_c$ is legal. By constructing $X_r$ to contain items that are allocated to $I_r$ and legal for $i_c$, we make sure if a player picks $x_c$, we can always compensate by assigning some other item in $\X_r$ to player $i_c$ if needed. This leads to the following lemma.

\begin{lemma}\label{prices of removable set of type I}
Let $\X_r\subseteq\X_A$ be a removable set of type I with a central item $x_c\in\X_r$. Let $B=(\X_B,I_B,(k_i^{\p})_{i\in I_B},\legal)$ be the submarket of $A$ induced by $\bcO$ with $\X_B=\X\setminus\X_r$. Then any pricing $\vp\in\real^m$ satisfying the following conditions is a dynamic pricing in market $A$.
\begin{enumerate}
\item For all items $x\in\X_r\setminus\{x_c\}$ and $y\in\X_B$, $p_x>p_y>p_{x_c}$.
\item $(p_x)_{x\in\X_B}$ is a dynamic pricing in market $B$.\label{prices of removable set of type I.2}
\end{enumerate}
\end{lemma}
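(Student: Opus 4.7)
The plan is to show that for any player $i \in I_A$ and any bundle $S \subseteq \X_A$ in demand for $i$ under $\vp$, $S$ extends to a legal allocation in market $A$. Since $A$ is a simplified multi-demand market and the fine prices lie in the narrow range stipulated by~\ref{range of p^F} in Definition~\ref{definition fine prices}, the first observation is that any bundle in demand for $i$ is a $k_i$-element subset of $\legal(i)$ minimizing the total price.

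The second step is to use condition~1 of the statement, together with the structure of $\X_r$, to pin down the shape of $S$. If $i \notin I_r$, then $x_c \notin \legal(i)$; moreover $\cO_i \cap \X_r = \emptyset$, because $x_c \in \cO_{i_c}$ and each $x_j \in \cO_j$ with $j \neq i$, so $\cO_i \subseteq \X_B$ already supplies $k_i$ items in $\X_B$ legal to $i$. Since every item in $\X_r \setminus \{x_c\}$ is strictly more expensive than every item in $\X_B$, this forces $S \subseteq \X_B$. If $i \in I_r$, then $x_c$ is the strictly cheapest item of the market and is legal to $i$, so $x_c \in S$; and $\cO_i$ still contains at least $k_i - 1$ items in $\X_B$ legal to $i$, so the remaining $k_i - 1$ items of $S$ must also lie in $\X_B$. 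In both cases, $S \cap \X_B$ is a bundle in demand for $i$ in the submarket $B$ with the reduced demand $k_i^{\p}$.

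The third step invokes condition~2: the restriction $(p_x)_{x \in \X_B}$ being a dynamic pricing in $B$ lets us extend $S \cap \X_B$ to a legal allocation $\bcO^B = (\cO^B_j)_{j \in I_B}$ of $B$, whose items are automatically legal for the same players in $A$ by Remark~\ref{legality in B implies legality in A}. To complete $\bcO^B$ into a legal allocation of $A$, I would distribute the items of $\X_r$ among the players. The natural default is to keep the original assignment from $\bcO$, namely $x_c \mapsto i_c$ and $x_j \mapsto j$ for each $j \in I_r \setminus \{i_c\}$; this works directly when $i \notin I_r$ or $i = i_c$. In the remaining subcase $i \in I_r \setminus \{i_c\}$, player $i$ has taken $x_c$, so I would instead assign $x_c \mapsto i$, $x_i \mapsto i_c$, and $x_j \mapsto j$ for $j \in I_r \setminus \{i_c, i\}$. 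A legality check (using that $x_c \in \legal(i_c)$ and $x_c \in \legal(i)$, that $x_i \in \legal(i_c)$ by the very definition of a removable set of type~I, and that each $x_j \in \legal(j)$) together with a size count (using $k_{i_c}^{\p} = k_{i_c} - 1$ and $k_j^{\p} = k_j - 1$ for $j \in I_r \setminus \{i_c\}$) then certifies that the resulting $\bcO^A$ is a legal allocation of $A$ extending $S$.

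The only truly delicate step is the subcase $i \in I_r \setminus \{i_c\}$, where the player ``steals'' $x_c$ from $i_c$'s original bundle and a legal replacement for $i_c$ must be produced. The defining property of a removable set of type~I, which requires $x_i \in \cO_i \cap \legal(i_c)$ for every $i \in I_r \setminus \{i_c\}$, is tailored precisely to resolve this conflict; the remainder of the argument is case analysis and bookkeeping.
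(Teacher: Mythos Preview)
Your proposal is correct and follows essentially the same approach as the paper's proof: the same case split on whether the arriving player is in $I_r$, the same identification of $S\cap\X_B$ as a bundle in demand in the submarket $B$, and the same completion of the resulting legal allocation of $B$ by either restoring the $\bcO$-assignment on $\X_r$ or, in the subcase $i\in I_r\setminus\{i_c\}$, swapping $x_c\mapsto i$ and $x_i\mapsto i_c$.
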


\begin{proof}
Let $I_r:=\{i\in I_A:x_c\in\legal(i)\}$. Consider a player $\hat{i}\in I_A$, by Definition~\ref{dynamic pricing definition} it suffices to show that any bundle in demand for this player can be extended to a legal allocation in market $A$.

\textbf{Case  $\hat{i}\notin I_r$.}
By construction, no item in $\X_r$ is allocated to $\hat{i}$ by the legal allocation $\bcO$. So, we have $\cO_{\hat{i}}\subseteq\X_B\cap\legal(\hat{i})$, which implies $ k_{\hat{i}}^{\p}=|\cO_{\hat{i}}\cap\X_B|=|\cO_{\hat{i}}|=k_{\hat{i}}$.

Since items in $\X_B$ have strictly lower prices then items in $\X_r\setminus\{x_c\}$ and $x_c$ is not legal to player~$\hat{i}$, a bundle in demand for player $\hat{i}$ is a bundle $\X_{\hat{i}}$ of $k_{\hat{i}}$ items from market $B$. By condition~\ref{prices of removable set of type I.2}, there exists a legal allocation $\bcO^B$ in market $B$ such that $\cO_{\hat{i}}^B=\X_{\hat{i}}$. Thus, this bundle $\X_{\hat{i}}$ extends to a legal allocation $\bcO^A$ in market $A$, where $\cO_i^A=\cO_i^B\cup(\cO_i\cap\X_r)$ for all $i\in I_A$.\\

\textbf{Case $\hat{i}\in I_r$.}
Since $|\cO_{\hat{i}}\cap\X_r|=1$ we have $k_{\hat{i}}^{\p}=k_{\hat{i}}-1$. Since $x_c$ is legal to $\hat{i}$ and has the lowest price among all items, a bundle in demand for player $\hat{i}$ consists of $x_c$ and a bundle $\X_{\hat{i}}$ of size $k_{\hat{i}}^{\p}$ that is in demand in market $B$. Since $(p_x)_{x\in\X_B}$ a dynamic pricing in market $B$, there exists a legal allocation $\bcO^B$ in market $B$ such that $\cO_{\hat{i}}^B=\X_{\hat{i}}$.

If $\hat{i}=i_c$, then we can define $\bcO^A$ in the same way as in the first considered case. If $\hat{i}\neq i_c$, let $\hat{x}$ be the unique item in $\X_r\cap\cO_{\hat{i}}$. Then we obtain a legal allocation $\bcO^A$ in market $A$ such that $\cO_{\hat{i}}^A=\X_{\hat{i}}\cup\{x_c\}$ by defining
\begin{align*}
\cO_i^A=\begin{cases}
\cO_i^B\cup(\cO_i\cap\X_r)\quad&\text{if }i\neq\hat{i}\text{ and }i\neq i_c\\
\cO_i^B\cup\{x_c\}\quad&\text{if }i=\hat{i}\\
\cO_i^B\cup\{\hat{x}\}\quad&\text{if }i=i_c
\end{cases}\,.
\end{align*}
\end{proof}

 Now, we consider a different removable set $\X_r$ in which there is a unique item assigned to every player, according to the allocation $\bcO$. We make sure that the items in $\X_r$ are ``well-connected'', as in~\ref{remove-II-prop.2} in Definition $\ref{remove-II-prop}$, so that when some player takes an extra item in the submarket with items $\X_A\setminus\X_r$, the player who consequently loses an item can be compensated by taking two items from~$\X_r$.

\begin{definition}\label{remove-II-prop}
Let $\X_r\subseteq\X_A$ be a subset of items and let $G_r$ be the subgraph of $G$ induced by $\X_r$. The set $\X_r$ is called a removable set of type II with respect to the legal allocation $\bcO$ if it satisfies the following conditions.
\begin{enumerate}
\item For all $i\in I_A$, $|\cO_i\cap\X_r|=1$.\label{remove-II-prop.1}
\item For all $x,y\in\X_r$, there is a path from $x$ to $y$ in $G_r$.\label{remove-II-prop.2}
\end{enumerate}
\end{definition}

\begin{lemma}\label{prices of removable set of type II}
Let $\X_r\subseteq\X_A$ be a removable set of type II and let $G_r$ be the subgraph of $G$ induced by $\X_r$. Let $B=(\X_B,I_B,(k_i^{\p})_{i\in I_B},\legal)$ be a submarket of $A$ induced by $\bcO$ with $\X_B=\X\setminus\X_r$. Then, any pricing $\vp\in\real^m$ satisfying the following conditions is a dynamic pricing in market $A$.
\begin{enumerate}
\item For all items $x\in\X_r$ and $y\in\X_B$, $p_x>p_y$.\label{prices of removable set of type II.1}
\item $(p_x)_{x\in\X_B}$ is a dynamic pricing in market $B$.\label{prices of removable set of type II.2}
\end{enumerate}
\end{lemma}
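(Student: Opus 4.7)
The plan is to fix an arbitrary player $\hat{i}\in I_A$ arriving first and show that every bundle $D$ in demand for $\hat{i}$ under the pricing $\vp$ extends to a legal allocation in market $A$. The starting observation is that condition~\ref{prices of removable set of type II.1} forces $\hat{i}$ to pick cheapest legal items from $\X_B$ before touching $\X_r$, and that the $\X_B$-part of $\cO_{\hat{i}}$ already provides at least $k_{\hat{i}}-1$ legal items for $\hat{i}$ in $\X_B$ (since by property~\ref{remove-II-prop.1} exactly one item of $\cO_{\hat{i}}$ lies in $\X_r$). Hence either \emph{(A)} $D\subseteq\X_B$ with $|D|=k_{\hat{i}}$, or \emph{(B)} $D=D_B\cup\{x_r\}$ where $D_B=\legal(\hat{i})\cap\X_B$ has size $k_{\hat{i}}-1$ and $x_r\in\legal(\hat{i})\cap\X_r$.

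In either case I would extract from $D$ a bundle $D^*$ of size $k_{\hat{i}}-1$ that is in demand for $\hat{i}$ in market $B$: in case (A), $D^*$ is any cheapest $(k_{\hat{i}}-1)$-subset of $D$; in case (B), $D^*=D_B$. Condition~\ref{prices of removable set of type II.2} then yields a legal allocation $\bcO^B$ of $B$ with $\cO^B_{\hat{i}}=D^*$, and it remains to glue on an assignment of the $\X_r$-items so that $\hat{i}$ ends up with exactly $D$. Writing $r_i$ for the unique item of $\cO_i\cap\X_r$, the natural extension $\cO^A_i:=\cO^B_i\cup\{r_i\}$ is already legal in $A$, but it assigns the wrong $\X_r$-item to $\hat{i}$ in case (B) and forces $\hat{i}$ to take an $\X_B$-item that is currently held by another player in case (A). To repair this I would invoke property~\ref{remove-II-prop.2} to pick a simple directed path $r_{i_0}\to r_{i_1}\to\cdots\to r_{i_\ell}=r_{\hat{i}}$ in $G_r$ starting from the offending vertex---$r_{i'}$ in case (A), where $i'$ is the owner in $\bcO^B$ of the surplus item $x^*\in D\setminus D^*$; or $r_{i^\ddagger}$ in case (B), where $x_r=r_{i^\ddagger}$---and shift the $\X_r$-assignment along it, using the fact that each edge $r_{i_j}\to r_{i_{j+1}}$ certifies legality of $r_{i_{j+1}}$ for player $i_j$. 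The resulting $\bcO^A$ gives $\hat{i}$ the bundle $D$, assigns $\cO^B_{i_j}\cup\{r_{i_{j+1}}\}$ to each intermediate $i_j$ ($1\leq j\leq\ell-1$), and leaves every non-path player $i$ with $\cO^B_i\cup\{r_i\}$; in case (A) there is one further adjustment, namely $i_0=i'$ receives both $r_{i_0}$ and $r_{i_1}$ from $\X_r$ together with $\cO^B_{i'}\setminus\{x^*\}$, exactly compensating the loss of $x^*$ to $\hat{i}$.

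Verification then breaks into three routine checks: legality of every assigned item follows from Remark~\ref{legality in B implies legality in A} for the items inherited from $\bcO^B$, from $\bcO$ being legal for each $r_i$, and from the path edges in $G_r$ for each shifted item; every bundle has the correct size $k_i$ by $k_i'+1=k_i$ combined with the one doubled $\X_r$-assignment at $i'$ in case (A); and every item of $\X_A$ is used exactly once because $\bcO^B$ partitions $\X_B$ and because the simple path accounts for each of $r_{i_0},\ldots,r_{i_\ell}$ exactly once while the non-path players cover the remaining $r_i$. I expect the main obstacle to be the bookkeeping of case (A): one has to verify that the surplus $\X_B$-item $x^*$ flowing from $i'$ to $\hat{i}$ is precisely counterbalanced by the chain of $\X_r$-shifts along the path, so that the natural $\X_r$-item $r_{\hat{i}}$ that $\hat{i}$ no longer needs and the extra $\X_r$-item that $i'$ now requires both absorb into the path without disturbing any other player.
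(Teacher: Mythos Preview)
Your proposal is correct and follows essentially the same approach as the paper: decompose the demand bundle into a $(k_{\hat{i}}-1)$-bundle in demand in $B$ plus one extra item, invoke the dynamic pricing in $B$ to obtain $\bcO^B$, and then repair the $\X_r$-layer by shifting along a directed path in $G_r$ ending at $r_{\hat{i}}$. Your cases (A) and (B) correspond respectively to the paper's third case ($x'\in\X_B$) and its first two cases ($x'\in\X_r$, with the subcase $x'=r_{\hat{i}}$ absorbed as the trivial length-zero path); the compensation you describe for $i'=i_0$ in case (A), namely $(\cO^B_{i'}\setminus\{x^*\})\cup\{r_{i_0},r_{i_1}\}$, is exactly the paper's adjustment for the player $i^*$.
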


\begin{proof}
Consider a player $\hat{i}\in I_A$. From \ref{remove-II-prop.1} in Definition~\ref{remove-II-prop}, we see that $k_{\hat{i}}^{\p}=k_{\hat{i}}-1$ and we can let~$\hat{x}$ be the unique item in $\cO_{\hat{i}}\cap\X_r$.  By condition~\ref{prices of removable set of type II.1}, a bundle in demand for player $\hat{i}$ in $A$ consists of a set $\X_{\hat{i}}$ of size $k_{\hat{i}}^{\p}$ that is in demand for $\hat{i}$ in market $B$ and an item $x^{\p}$ from $\X_A$. By condition~\ref{prices of removable set of type II.2}, there exists a legal allocation $\bcO^B$ in market $B$, such that $\cO_{\hat{i}}^B=\X_{\hat{i}}$. Let us find a legal allocation~$\bcO^A$ in market $A$ such that $\cO_{\hat{i}}^A=B_{\hat{i}}\cup\{x^{\p}\}$. We consider the following three cases.

\textbf{Case $x^{\p}=\hat{x}$.}
The allocation $\bcO^A$, where $\cO_i^A:=\cO_i^B\cup(\cO_i\cap\X_r)$ for every player $i$, is legal in market $A$.

\textbf{Case  $x^{\p}\in\cO_{i^*}\cap\X_r$ for some $i^*$ such that $i^*\neq\hat{i}$.}
By ~\ref{remove-II-prop.2} of Definition~\ref{remove-II-prop}, there is a simple path in the graph $G_r$
\[x^{\p}=x_0\to x_1\to x_2\to\ldots\to x_{k-1}\to x_k=\hat{x}\]
 from $x^{\p}$ to $\hat{x}$ in $G_r$. Let us suppose that $x_j\in\cO_{i_j}\cap\X_r$ for $j=0,\ldots, k$.
Then, we can define the legal allocation $\bcO^A$ in market $A$ as follows
\begin{align*}
\cO_i^A:=\begin{cases}
\cO_i^B\cup(\cO_i\cap\X_r)\quad&\text{if }i\neq i_j \text{ for all }j=0,\ldots,k\\
\cO_i^B\cup\{x_{j+1}\}\quad&\text{if }i=i_j\text{ for some } j=1,\ldots, k-1\\
\cO_i^B\cup\{x^{\p}\}\quad&\text{if }i=i_k=\hat{i}\\
\cO_i^B\cup\{x_1\}\quad&\text{if }i=i_0=i^*\,.
\end{cases}
\end{align*}

\textbf{Case  $x^{\p}\in\X_B$.}
Then $x^{\p}\in\cO_{i^*}^B$ for some $i^*$  such that $i^*\neq\hat{i}$. Let $x^*$ be the unique item in $\cO_{i^*}\cap\X_r$. Then there is a simple path in $G_r$
\[x^*=x_0\to x_1\to x_2\to\ldots\to x_{k-1}\to x_k=\hat{x}\]
from $x^*$ to $\hat{x}$. Let us suppose $x_j\in\cO_{i_j}\cap\X_r$ for $j=0,\ldots, k$.
Then, we can define the legal allocation $\cO_i^A$ in the same way as in the previous case, except that for the player $i^*$ we have $O_{i^*}^A:=(\cO_{i^*}^B\setminus\{x^{\p}\})\cup\{x^*,x_1\}$.
\end{proof}

From Lemmas~\ref{prices of removable set of type I} and~\ref{prices of removable set of type II}, we see that whenever a removable set exists, we can reduce the problem to finding a dynamic pricing in a market with less items. The following theorem shows that it is always possible to find a removable set in a market with four players. We  use the properties of legality graphs as discussed in the previous section. In particular, we construct removable sets from uniquely assigned cycles and we find new legal allocations by reallocating $\bcO$ with respect to some cycle in $G$, if necessary.

\begin{theorem}\label{removable set in market with 4 players}
In a multi-demand market with four players and no item being legal to all of them, there always exists a removable set of either type I or type II in the legality graph for some optimal allocation. 

\end{theorem}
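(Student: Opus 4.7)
The plan is to invoke Lemma~\ref{cycle of items allocated to different players} on an arbitrary optimal allocation $\bcO$ and the associated legality graph $G$: every vertex of $G$ lies in a uniquely assigned cycle, and with only four players such a cycle has length at most four. I split into two cases according to the maximum length of a uniquely assigned cycle in $G$.

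If some uniquely assigned cycle $\C$ has length four, then its four vertices form the desired removable set $\X_r$ of Type~II. Indeed, by the uniquely assigned property $|\cO_i \cap \X_r| = 1$ for each $i \in I_A$, verifying condition~\ref{remove-II-prop.1} of Definition~\ref{remove-II-prop}. The induced subgraph $G_r$ contains all arcs of $\C$ itself (together with any further arcs that $G$ places among these four vertices), so following $\C$ gives a directed path between any two vertices of $\X_r$, verifying condition~\ref{remove-II-prop.2}. Then Lemma~\ref{prices of removable set of type II} finishes this case via induction on $|\X_A|$.

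If every uniquely assigned cycle has length two or three, I aim to construct a Type~I removable set. Recall that a central item $x_c \in \cO_{i_c}$ is one for which every player $i \neq i_c$ with $x_c \in \legal(i)$ owns some item of $\legal(i_c)$; equivalently, $G$ contains an arc induced by $i_c$ from $x_c$ into $\cO_i$ for each such $i$. Fix a uniquely assigned cycle $\C$ of maximum length in $G$ and let $i^*$ be a player outside $\C$. I would use the hypothesis that no item is legal to all four players (so, writing $I(x)$ for the players to whom $x$ is legal, $|I(x)| \le 3$ for every $x$) together with the absence of length-four uniquely assigned cycles to rigidify the legality pattern around $\C$ and $i^*$. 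The argument then proceeds by contradiction: if no item on $\C$ serves as a central item, the missing return edges together with items of $\cO_{i^*}$ can be combined to either display a central item directly, or---after moving to another optimal allocation via Remark~\ref{optimal allocations are related by disjoint cycles} and using Remark~\ref{submarket implies subgraph}---to assemble a uniquely assigned cycle of length four, contradicting the case assumption.

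The main obstacle is this second case. Finding a central item requires a universal property over the (at most two) other players in $I(x_c) \setminus \{i_c\}$, so the choice of $x_c$ must be guided by an extremal or structural argument, not by an arbitrary pick. The delicate part is systematically ruling out the failure modes: for each candidate $x_c$ on $\C$, a player $i$ with $\cO_i \cap \legal(i_c) = \varnothing$ must be exhibited as impossible, either because the item forcing $i \in I(x_c)$ already supplies an appropriate return arc, or because it allows re-routing $\C$ through $i_*$ into a length-four uniquely assigned cycle. Handling the possible placements of $i^*$, the two possible lengths of $\C$, and the three possible sizes of $I(x_c) \subseteq \{1,2,3\}$ will require a careful but finite subdivision of cases.
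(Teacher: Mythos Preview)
Your high-level plan matches the paper's: start from a uniquely assigned cycle via Lemma~\ref{cycle of items allocated to different players} and branch on its length, with the length-$4$ case yielding a Type~II set immediately. For shorter cycles, however, your framing has two issues that you should correct before attempting the casework.

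First, you conflate Type~II removable sets with length-$4$ uniquely assigned cycles. Definition~\ref{remove-II-prop} only requires one item per player and strong connectivity of $G_r$, not a Hamiltonian $4$-cycle. In the paper's analysis of the length-$2$ case (specifically when the added item $x_4$ is legal to player~$2$ but not player~$1$), the resulting $4$-set $\{x_1,x_2,x_3,x_4\}$ is strongly connected yet contains no directed $4$-cycle on those vertices. So the target in the short-cycle cases should be ``Type~I \emph{or} Type~II'', not ``Type~I, else a length-$4$ UAC''.

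Second, the ``contradiction'' framing breaks once reallocation enters. If you pass to a different optimal allocation $\bcO'$ (via Remark~\ref{reallocation along cycle}) and find the removable set in the new legality graph $G'$, that does not contradict ``every uniquely assigned cycle in $G$ has length $\le 3$''; it directly exhibits the removable set for $\bcO'$, which is exactly what the theorem statement permits (``for some optimal allocation''). The paper treats this head-on rather than by contradiction: in the length-$3$ and length-$2$ cases it introduces auxiliary items $x_4,x_5$, enumerates where they can sit, and in each subcase either reads off a Type~I set, reads off a Type~II set in $G$, or reallocates along a short cycle and reads off a Type~II set in the new legality graph. That enumeration is the entire content of the proof; your proposal correctly identifies that it is needed but stops short of carrying it out.
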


\begin{proof}

Suppose there are four players in the market $A$, i.e. $n=4$. By Lemma~\ref{cycle of items allocated to different players}, there exists a uniquely assigned cycle $\C$ in the graph~$G$. Since there are only four players, $\C$ has length $2$, $3$, or $4$.

Suppose $\C$ is a uniquely assigned cycle of length 4. Then, exactly one item in $\C$ is allocated to each player. Moreover, since $\C$ is a cycle, there is a path between any two items in $\C$. Therefore, the four items in $\C$ form a removable set of type II.

Suppose $\C$ is a uniquely assigned cycle of length 3. Let us write $\C$ as $x_3\to x_2\to x_1\to x_3$. Without loss of generality, let us suppose that item $x_i$ is allocated to player $i$ for $i=1,2,3$. Consider an item $x_4$ such that $x_4\in\cO_4$. By our assumption, every item is legal to more than one player; so $x_4$ is legal to some player other than player $4$. Without loss of generality, let us assume that $x_4$ is legal to player $3$, so the edge $x_3\to x_4$ exists in $G$. If there is an item in $\C$ that is legal to player $4$, then there is an edge from $x_4$ to that item. So, there is a path between any two vertices in $\{x_1,x_2,x_3,x_4\}$. By definition, in such a case $\{x_1,x_2,x_3,x_4\}$ is a removable set of type II. It remains to consider the case where none of the items in $\C$ is legal to player $4$. By Corollary~\ref{every vertex is in a cycle}, there exists an edge from $x_4$ to some other item $x_5$ in $G$. Then, the graph in the top of Figure~\ref{C has length 3} is a subgraph of $G$. Each vertex $x$ in the top of Figure~\ref{C has length 3}  is labeled with a player $i$ and a set $I$ of players, indicating that $x$ is allocated to player $i$ and is legal but not allocated to the players in $I$. Note that $x$ could also be legal to players other than $\{i\}\cup I$. By construction of $G$, $x_5$ is legal but not allocated to player $4$. Let us consider the following three cases regarding the allocation of $x_5$, all three cases are illustrated in Figure~\ref{C has length 3}.

\textbf{Case 1: $x_5\in\cO_1$.}
Let $\bcO^{\p}$ be the reallocation of $\bcO$ with respect to the cycle $x_3\to x_4\to x_5\to x_3$. By Remark~$\ref{reallocation along cycle}$, $\bcO^{\p}$ is a legal allocation in market $A$. So, $\{x_1,x_2,x_4,x_5\}$ is a removable set of type~II in the legality graph associated with $\bcO^{\p}$, see Figure~\ref{C has length 3}.

\textbf{Case 2: $x_5\in\cO_2$.}
In this case $\{x_1,x_3,x_4,x_5\}$ is a removable set of type II.

\textbf{Case 3: $x_5\in\cO_3$.}
If both items $x_4$ and $x_5$ are only legal to players $3$ and $4$, then $\{x_4,x_5\}$ is a removable set of type I. Otherwise, at least one of $x_4$ and $x_5$ is legal to player $1$ or $2$. By possibly reallocating with respect to the cycle $x_4\to x_5\to x_4$, we can assume without loss of generality that $x_5$ is legal to a player other than players $3$ and $4$.

If $x_5$ is legal to player $1$, then we can let $\bcO^{\p}$ be the reallocation of $\bcO$ with respect to the cycle $x_1\to x_5\to x_2\to x_1$. By Remark $\ref{reallocation along cycle}$, $\bcO^{\p}$ is a legal allocation in market $A$. So, $\{x_1,x_2,x_4,x_5\}$ is a removable set of type II in the legality graph associated with $\bcO^{\p}$, see Figure~\ref{C has length 3}.
If $x_5$ is legal to player $2$, then it cannot be legal to player $1$ because we assumed that an item cannot be legal to all players. So, $\{x_2,x_4,x_5\}$ is a removable set of type I with $x_5$ being  the central item.

Finally, suppose $\C$ is a uniquely assigned cycle of length 2. Let us write $\C$ as $x_1\to x_2\to x_1$. Without loss of generality, let us suppose that item $x_i$ is allocated to player $i$ for $i=1,2$. If $x_1$ and $x_2$ are only legal to players $1$ and $2$, then $\{x_1,x_2\}$ is a removable set of type I. Otherwise, at least one of $x_1$ and $x_2$ is legal to player $3$ or $4$. Without loss of generality, assume $x_2$ is legal to player $3$. Consider an item $x_3\in\cO_3$, then the edge $x_3\to x_2$ exists in $G$. So, the graph in the top of Figure~\ref{C has length 2} is a subgraph of $G$. The item $x_3$ is also legal to some player other than $3$. We consider the following three cases regarding the legality of $x_3$, all three cases are illustrated in Figure~\ref{C has length 2}.

\textbf{Case 4: $x_3\in\legal(1)$.}
$x_1\to x_3\to x_2\to x_1$ is a uniquely assigned cycle of length $3$, see Figure~\ref{C has length 2}. So we can use the argumentation from the previous cases.

\textbf{Case 5: $x_3\in\legal(2)$.} Note that
$x_2$ is not legal to player $4$ because we assumed that no item is legal to all players. So, $\{x_1,x_2,x_3\}$ is a removable set of type I with $x_2$ being a central item.

\textbf{Case 6: $x_3\in\legal(4)$.}
Consider $x_4\in\cO_4$, then $x_4$ is legal to some player other than player $4$. If $x_4$ is legal to player $1$ or $2$, then  $\{x_1,x_2,x_3,x_4\}$ is a removable set of type II. If $x_4$ is only legal to players $3$ and $4$, then $\{x_3,x_4\}$ is a removable set of type I with a central item $x_4$.
\end{proof}

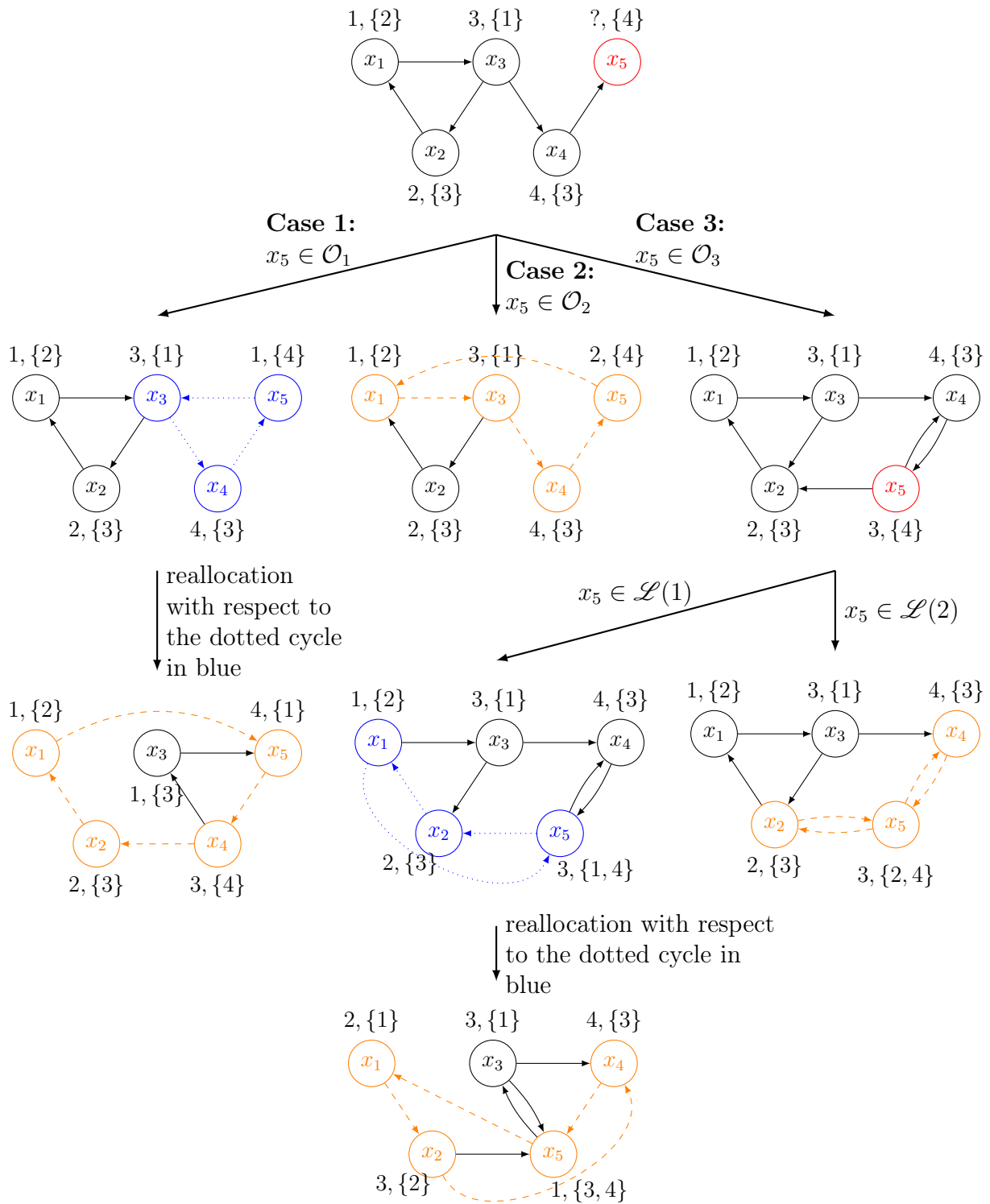
\begin{figure}
    \centering

\begin{subfigure}{\textwidth}
\centering
\begin{tikzpicture}[remember picture, every node/.style={draw,circle,label distance=-4mm,scale=0.9}]

\node[label=above:{$1,\{2\}$}] (a) at (-2,1.5) {$x_1$};
\node[label=below:{$2,\{3\}$}] (b) at (-1,0) {$x_2$};
\node[label=above:{$3,\{1\}$}] (c) at (0,1.5) {$x_3$};
\node[label=below:{$4,\{3\}$}] (d) at (1,0) {$x_4$};
\node[label=above:{$?,\{4\}$},color=red] (e) at (2,1.5) {$x_5$};

\draw[->,>=latex] (a) to (c);
\draw[->,>=latex] (c) to (b);
\draw[->,>=latex] (b) to (a);
\draw[->,>=latex] (c) to (d);
\draw[->,>=latex] (d) to (e);

\path (current bounding box.south west) -- (current bounding box.south east) coordinate[midway] (initial);
\end{tikzpicture}
\end{subfigure}
\bigskip\bigskip\bigskip

\begin{subfigure}{\textwidth}
\centering
\begin{tikzpicture}[remember picture, every node/.style={draw,circle,label distance=-4mm,scale=0.9}]

\node[label=above:{$1,\{2\}$}] (a) at (-2,1.5) {$x_1$};
\node[label=below:{$2,\{3\}$}] (b) at (-1,0) {$x_2$};
\node[color=blue,label=above:{$3,\{1\}$}] (c) at (0,1.5) {$x_3$};
\node[color=blue,label=below:{$4,\{3\}$}] (d) at (1,0) {$x_4$};
\node[color=blue,label=above:{$1,\{4\}$}] (e) at (2,1.5) {$x_5$};

\draw[->,>=latex] (a) to (c);
\draw[->,>=latex] (c) to (b);
\draw[->,>=latex] (b) to (a);
\draw[->,>=latex,dotted,color=blue] (c) to (d);
\draw[->,>=latex,dotted,color=blue] (d) to (e);
\draw[->,>=latex,dotted,color=blue] (e) to (c);

\path (current bounding box.north west) -- (current bounding box.north east) coordinate[midway] (case 1 end);
\path (current bounding box.south west) -- (current bounding box.south east) coordinate[midway] (case 1 start);
\end{tikzpicture}
\hfill
\begin{tikzpicture}[remember picture, every node/.style={draw,circle,label distance=-4mm,scale=0.9}]

\node[color=orange,label=above:{$1,\{2\}$}] (a) at (-2,1.5) {$x_1$};
\node[label=below:{$2,\{3\}$}] (b) at (-1,0) {$x_2$};
\node[color=orange,label=above:{$3,\{1\}$}] (c) at (0,1.5) {$x_3$};
\node[color=orange,label=below:{$4,\{3\}$}] (d) at (1,0) {$x_4$};
\node[color=orange,label=above:{$2,\{4\}$}] (e) at (2,1.5) {$x_5$};

\draw[->,>=latex,dashed,color=orange] (a) to (c);
\draw[->,>=latex] (c) to (b);
\draw[->,>=latex] (b) to (a);
\draw[->,>=latex,dashed,color=orange] (c) to (d);
\draw[->,>=latex,dashed,color=orange] (d) to (e);
\draw[->,>=latex,bend right,dashed,color=orange] (e) to (a);

\path (current bounding box.north west) -- (current bounding box.north east) coordinate[midway] (case 2 end);
\end{tikzpicture}
\hfill
\begin{tikzpicture}[remember picture, every node/.style={draw,circle,label distance=-4mm,scale=0.9}]

\node[label=above:{$1,\{2\}$}] (a) at (-2,1.5) {$x_1$};
\node[label=below:{$2,\{3\}$}] (b) at (-1,0) {$x_2$};
\node[label=above:{$3,\{1\}$}] (c) at (0,1.5) {$x_3$};
\node[color=red,label=below:{$3,\{4\}$}] (d) at (1,0) {$x_5$};
\node[label=above:{$4,\{3\}$}] (e) at (2,1.5) {$x_4$};

\draw[->,>=latex] (a) to (c);
\draw[->,>=latex] (c) to (b);
\draw[->,>=latex] (b) to (a);
\draw[->,>=latex] (c) to (e);
\draw[->,>=latex] (d) to (b);
\draw[->,>=latex,bend left=10] (e) to (d);
\draw[->,>=latex,bend left=10] (d) to (e);

\path (current bounding box.north west) -- (current bounding box.north east) coordinate[midway] (case 3 end);
\path (current bounding box.south west) -- (current bounding box.south east) coordinate[midway] (case 3 start);
\end{tikzpicture}
\end{subfigure}
\bigskip\bigskip\bigskip

\begin{subfigure}{\textwidth}
\centering
\begin{tikzpicture}[remember picture, every node/.style={draw,circle,label distance=-4mm,scale=0.9}]

\node[color=orange,label=above:{$1,\{2\}$}] (a) at  (-2,1.5) {$x_1$};
\node[color=orange,label=below:{$2,\{3\}$}] (b) at  (-1,0) {$x_2$};
\node[label=below:{$1,\{3\}$}] (c) at  (0,1.5) {$x_3$};
\node[color=orange,label=below:{$3,\{4\}$}] (d) at  (1,0) {$x_4$};
\node[color=orange,label=above:{$4,\{1\}$}] (e) at  (2,1.5) {$x_5$};

\draw[->,>=latex,bend left,dashed,color=orange] (a) to (e);
\draw[->,>=latex,dashed,color=orange] (b) to (a);
\draw[->,>=latex,dashed,color=orange] (d) to (b);
\draw[->,>=latex] (c) to (e);
\draw[->,>=latex,dashed,color=orange] (e) to (d);
\draw[->,>=latex] (d) to (c);

\path (current bounding box.north west) -- (current bounding box.north east) coordinate[midway] (case 1 reallocated);
\end{tikzpicture}
\hfill
\begin{tikzpicture}[remember picture, every node/.style={draw,circle,label distance=-4mm,scale=0.9}]

\node[color=blue,label=above:{$1,\{2\}$}] (a) at (-2,1.5) {$x_1$};
\node[color=blue,label=265:{$2,\{3\}$}] (b) at (-1,0) {$x_2$};
\node[label=above:{$3,\{1\}$}] (c) at (0,1.5) {$x_3$};
\node[color=blue,label=275:{$3,\{1,4\}$}] (d) at (1,0) {$x_5$};
\node[label=above:{$4,\{3\}$}] (e) at (2,1.5) {$x_4$};

\draw[->,>=latex] (a) to (c);
\draw[->,>=latex] (c) to (b);
\draw[->,>=latex,dotted,color=blue] (b) to (a);
\draw[->,>=latex] (c) to (e);
\draw[->,>=latex,dotted,color=blue] (d) to (b);
\draw[->,>=latex,bend left=10] (e) to (d);
\draw[->,>=latex,bend left=10] (d) to (e);
\draw[->,>=latex,bend right=90,dotted,color=blue] (a) to (d);

\path (current bounding box.north west) -- (current bounding box.north east) coordinate[midway] (case 3 subcase 1 end);
\path (current bounding box.south west) -- (current bounding box.south east) coordinate[midway] (case 3 subcase 1 start);
\end{tikzpicture}
\hfill
\begin{tikzpicture}[remember picture, every node/.style={draw,circle,label distance=-4mm,scale=0.9}]

\node[label=above:{$1,\{2\}$}] (a) at (-2,1.5) {$x_1$};
\node[color=orange,label=below:{$2,\{3\}$}] (b) at (-1,0) {$x_2$};
\node[label=above:{$3,\{1\}$}] (c) at (0,1.5) {$x_3$};
\node[color=orange,label=below:{$3,\{2,4\}$}] (d) at (1,0) {$x_5$};
\node[color=orange,label=above:{$4,\{3\}$}] (e) at (2,1.5) {$x_4$};

\draw[->,>=latex] (a) to (c);
\draw[->,>=latex] (c) to (b);
\draw[->,>=latex] (b) to (a);
\draw[->,>=latex] (c) to (e);
\draw[->,>=latex,dashed,color=orange,bend left=10] (d) to (b);
\draw[->,>=latex,dashed,color=orange,bend left=10] (b) to (d);
\draw[->,>=latex,dashed,color=orange,bend left=10] (e) to (d);
\draw[->,>=latex,dashed,color=orange,bend left=10] (d) to (e);

\path (current bounding box.north west) -- (current bounding box.north east) coordinate[midway] (case 3 subcase 2 end);
\end{tikzpicture}
\end{subfigure}
\bigskip\bigskip

\begin{subfigure}{\textwidth}
\centering
\begin{tikzpicture}[remember picture, every node/.style={draw,circle,label distance=-4mm,scale=0.9}]

\node[color=orange,label=above:{$2,\{1\}$}] (a) at (-2,1.5) {$x_1$};
\node[color=orange,label=265:{$3,\{2\}$}] (b) at (-1,0) {$x_2$};
\node[label=above:{$3,\{1\}$}] (c) at (0,1.5) {$x_3$};
\node[color=orange,label=275:{$1,\{3,4\}$}] (d) at (1,0) {$x_5$};
\node[color=orange,label=above:{$4,\{3\}$}] (e) at (2,1.5) {$x_4$};

\draw[->,>=latex,bend left=10] (c) to (d);
\draw[->,>=latex,dashed,color=orange,bend right=90] (b) to (e);
\draw[->,>=latex,dashed,color=orange] (a) to (b);
\draw[->,>=latex] (c) to (e);
\draw[->,>=latex] (b) to (d);
\draw[->,>=latex,dashed,color=orange] (e) to (d);
\draw[->,>=latex,bend left=10] (d) to (c);
\draw[->,>=latex,dashed,color=orange] (d) to (a);

\path (current bounding box.north west) -- (current bounding box.north east) coordinate[midway] (case 3 subcase 1 reallocated);
\end{tikzpicture}
\end{subfigure}

\tikz[overlay,remember picture]{\draw[-latex,thick] (initial) -- (case 1 end) node[midway,above,text width=2cm]{\textbf{Case 1:} $x_5\in\cO_1$};}
\tikz[overlay,remember picture]{\draw[-latex,thick] (initial) -- (case 2 end) node[midway,right,text width=2cm,yshift=-2mm]{\textbf{Case 2:} $x_5\in\cO_2$};}
\tikz[overlay,remember picture]{\draw[-latex,thick] (initial) -- (case 3 end) node[midway,above,text width=2cm,xshift=5mm]{\textbf{Case 3:} $x_5\in\cO_3$};}
\tikz[overlay,remember picture]{\draw[-latex,thick] (case 1 start) -- (case 1 reallocated) node[midway,right,text width=3cm]{reallocation with respect to the dotted cycle in blue};}
\tikz[overlay,remember picture]{\draw[-latex,thick] (case 3 start) -- (case 3 subcase 1 end) node[midway,above,xshift=-5mm]{$x_5\in\legal(1)$};}
\tikz[overlay,remember picture]{\draw[-latex,thick] (case 3 start) -- (case 3 subcase 2 end) node[midway,right]{$x_5\in\legal(2)$};}
\tikz[overlay,remember picture]{\draw[-latex,thick] (case 3 subcase 1 start) -- (case 3 subcase 1 reallocated) node[midway,right,text width=4.5cm]{reallocation with respect to the dotted cycle in blue};}

\caption{The graph at the top level is a subgraph of $G$, containing a uniquely assigned cycle $\C$ of length 3 and two items $x_4$ and $x_5$, as defined in Theorem~\ref{removable set in market with 4 players}. We consider 3 different cases depending on whether item $x_5$ is allocated to player 1, 2, or 3. After reallocating items with respect to the dotted cycle, if necessary, we always obtain some removable set, which are connected by the dashed edges.}
\label{C has length 3}
\end{figure}

\begin{figure}

\begin{subfigure}{\textwidth}
\centering
\begin{tikzpicture}[remember picture, every node/.style={draw,circle,label distance=-4mm,scale=0.9}]

\node[label=below:{$1,\{2\}$}] (a) at (-2,0) {$x_1$};
\node[label=below:{$2,\{1,3\}$}] (b) at (0,0) {$x_2$};
\node[label=below:{$3,\{?\}$},color=red] (c) at (2,0) {$x_3$};

\draw[->,>=latex,bend left=10] (a) to (b);
\draw[->,>=latex,bend left=10] (b) to (a);
\draw[->,>=latex] (c) to (b);

\path (current bounding box.south west) -- (current bounding box.south east) coordinate[midway] (initial);
\end{tikzpicture}
\end{subfigure}
\bigskip\bigskip\bigskip

\begin{subfigure}{\textwidth}
\centering
\begin{tikzpicture}[remember picture, every node/.style={draw,circle,label distance=-4mm,scale=0.9}]

\node[label=below:{$1,\{2\}$},color=black!30!green] (a) at (-2,0) {$x_1$};
\node[label=below:{$2,\{1,3\}$},color=black!30!green] (b) at (0,0) {$x_2$};
\node[label=below:{$3,\{1\}$},color=black!30!green] (c) at (2,0) {$x_3$};

\draw[->,>=latex,bend left=10] (a) to (b);
\draw[->,>=latex,bend left=10,dotted,color=black!30!green] (b) to (a);
\draw[->,>=latex,dotted,color=black!30!green] (c) to (b);
\draw[->,>=latex,bend left,dotted,color=black!30!green] (a) to (c);

\path (current bounding box.north west) -- (current bounding box.north east) coordinate[midway] (case 4 end);
\end{tikzpicture}
\hfill
\begin{tikzpicture}[remember picture, every node/.style={draw,circle,label distance=-4mm,scale=0.9}]

\node[label=below:{$1,\{2\}$}] (a) at (-2,0) {$x_1$};
\node[label=below:{$2,\{1,3\}$}] (b) at (0,0) {$x_2$};
\node[label=below:{$3,\{4\}$}] (c) at (2,0) {$x_3$};
\node[label=left:{$4,\{?\}\quad$},color=red] (d) at (1,1.5) {$x_4$};

\draw[->,>=latex,bend left=10] (a) to (b);
\draw[->,>=latex,bend left=10] (b) to (a);
\draw[->,>=latex] (c) to (b);
\draw[->,>=latex] (d) to (c);

\path (current bounding box.north west) -- (current bounding box.north east) coordinate[midway] (case 6 end);
\path (current bounding box.south west) -- (current bounding box.south east) coordinate[midway] (case 6 start);
\end{tikzpicture}
\hfill
\begin{tikzpicture}[remember picture, every node/.style={draw,circle,label distance=-4mm,scale=0.9}]

\node[label=below:{$1,\{2\}$},color=orange] (a) at (-2,0) {$x_1$};
\node[label=below:{$2,\{1,3\}$},color=orange] (b) at (0,0) {$x_2$};
\node[label=below:{$3,\{2\}$},color=orange] (c) at (2,0) {$x_3$};

\draw[->,>=latex,bend left=10,dashed,color=orange] (a) to (b);
\draw[->,>=latex,bend left=10,dashed,color=orange] (b) to (a);
\draw[->,>=latex,bend left=10,dashed,color=orange] (c) to (b);
\draw[->,>=latex,bend left=10,dashed,color=orange] (b) to (c);

\path (current bounding box.north west) -- (current bounding box.north east) coordinate[midway] (case 5 end);
\end{tikzpicture}
\end{subfigure}
\bigskip\bigskip\bigskip

\begin{subfigure}{\textwidth}
\centering
\begin{tikzpicture}[remember picture, every node/.style={draw,circle,label distance=-4mm,scale=0.9}]

\node[label=below:{$1,\{2\}$},color=orange] (a) at (-2,0) {$x_1$};
\node[label=below:{$2,\{1,3\}$},color=orange] (b) at (0,0) {$x_2$};
\node[label=below:{$3,\{4\}$},color=orange] (c) at (2,0) {$x_3$};
\node[label=left:{$4,\{1\}\quad$},color=orange] (d) at (1,1.5) {$x_4$};

\draw[->,>=latex,bend left=10] (a) to (b);
\draw[->,>=latex,bend left=10,dashed,color=orange] (b) to (a);
\draw[->,>=latex,dashed,color=orange] (c) to (b);
\draw[->,>=latex,dashed,color=orange] (d) to (c);
\draw[->,>=latex,dashed,color=orange] (a) to (d);

\path (current bounding box.north west) -- (current bounding box.north east) coordinate[midway] (x_4 legal to 1);
\path (current bounding box.south west) -- (current bounding box.south east) coordinate[midway] (x_4 legal to 1 bottom);
\end{tikzpicture}
\hfill
\begin{tikzpicture}[remember picture, every node/.style={draw,circle,label distance=-4mm,scale=0.9}]

\node[label=below:{$1,\{2\}$},color=orange] (a) at (-2,0) {$x_1$};
\node[label=below:{$2,\{1,3\}$},color=orange] (b) at (0,0) {$x_2$};
\node[label=below:{$3,\{4\}$},color=orange] (c) at (2,0) {$x_3$};
\node[label=left:{$4,\{2\}\quad$},color=orange] (d) at (1,1.5) {$x_4$};

\draw[->,>=latex,bend left=10,dashed,color=orange] (a) to (b);
\draw[->,>=latex,bend left=10,dashed,color=orange] (b) to (a);
\draw[->,>=latex,dashed,color=orange] (c) to (b);
\draw[->,>=latex,dashed,color=orange] (d) to (c);
\draw[->,>=latex,dashed,color=orange] (b) to (d);

\path (current bounding box.north west) -- (current bounding box.north east) coordinate[midway] (x_4 legal to 2);
\path (current bounding box.south west) -- (current bounding box.south east) coordinate[midway] (x_4 legal to 2 bottom);
\end{tikzpicture}
\hfill
\begin{tikzpicture}[remember picture, every node/.style={draw,circle,label distance=-4mm,scale=0.9}]

\node[label=below:{$1,\{2\}$}] (a) at (-2,0) {$x_1$};
\node[label=below:{$2,\{1,3\}$}] (b) at (0,0) {$x_2$};
\node[label=below:{$3,\{4\}$},color=orange] (c) at (2,0) {$x_3$};
\node[label=left:{$4,\{3\}\quad$},color=orange] (d) at (1,1.5) {$x_4$};

\draw[->,>=latex,bend left=10] (a) to (b);
\draw[->,>=latex,bend left=10] (b) to (a);
\draw[->,>=latex] (c) to (b);
\draw[->,>=latex,bend left=10,dashed,color=orange] (d) to (c);
\draw[->,>=latex,bend left=10,dashed,color=orange] (c) to (d);

\path (current bounding box.north west) -- (current bounding box.north east) coordinate[midway] (x_4 legal to 3 end);
\path (current bounding box.south west) -- (current bounding box.south east) coordinate[midway] (x_4 legal to 3 start);
\end{tikzpicture}
\end{subfigure}
\bigskip\bigskip\bigskip

\begin{tikzpicture}[overlay,remember picture]
\node[text width=2cm,xshift=1cm,yshift=2cm] at (case 4 end) {\textbf{Case 4:} $x_3\in\legal(1)$};
\draw[-latex,thick,bend right=20] (initial) to (case 4 end);
\draw[-latex,thick] (initial) -- (case 6 end) node[midway,right,text width=2cm,yshift=-2mm]{\textbf{Case 6:} $x_3\in\legal(4)$};
\node[text width=2cm,xshift=-1cm,yshift=2.4cm] at (case 5 end) {\textbf{Case 5:} $x_3\in\legal(2)$};
\draw[-latex,thick,bend left=20] (initial) to (case 5 end);

\draw[-latex,thick] (case 6 start) -- (x_4 legal to 1) node[midway,above,xshift=-5mm]{$x_4\in\legal(1)$};
\draw[-latex,thick] (case 6 start) -- (x_4 legal to 2) node[midway,right,yshift=-2mm]{$x_4\in\legal(2)$};
\draw[-latex,thick] (case 6 start) -- (x_4 legal to 3 end) node[midway,above,xshift=1.5cm]{$x_4\in\legal(3)\setminus(\legal(1)\cup\legal(2))$};
\end{tikzpicture}

\caption{The graph at the top level is a subgraph of $G$, containing a uniquely assigned cycle $\C$ of length 2, and another item $x_3$, as defined in Theorem~\ref{removable set in market with 4 players}. We consider 3 different cases depending on whether item $x_3$ is legal but not assigned to player 1, 2, or 4. We either obtain a cycle of length 3, which is marked with dotted edges, or some removable set, which is connected by dashed edges.}
\label{C has length 2}
\end{figure}
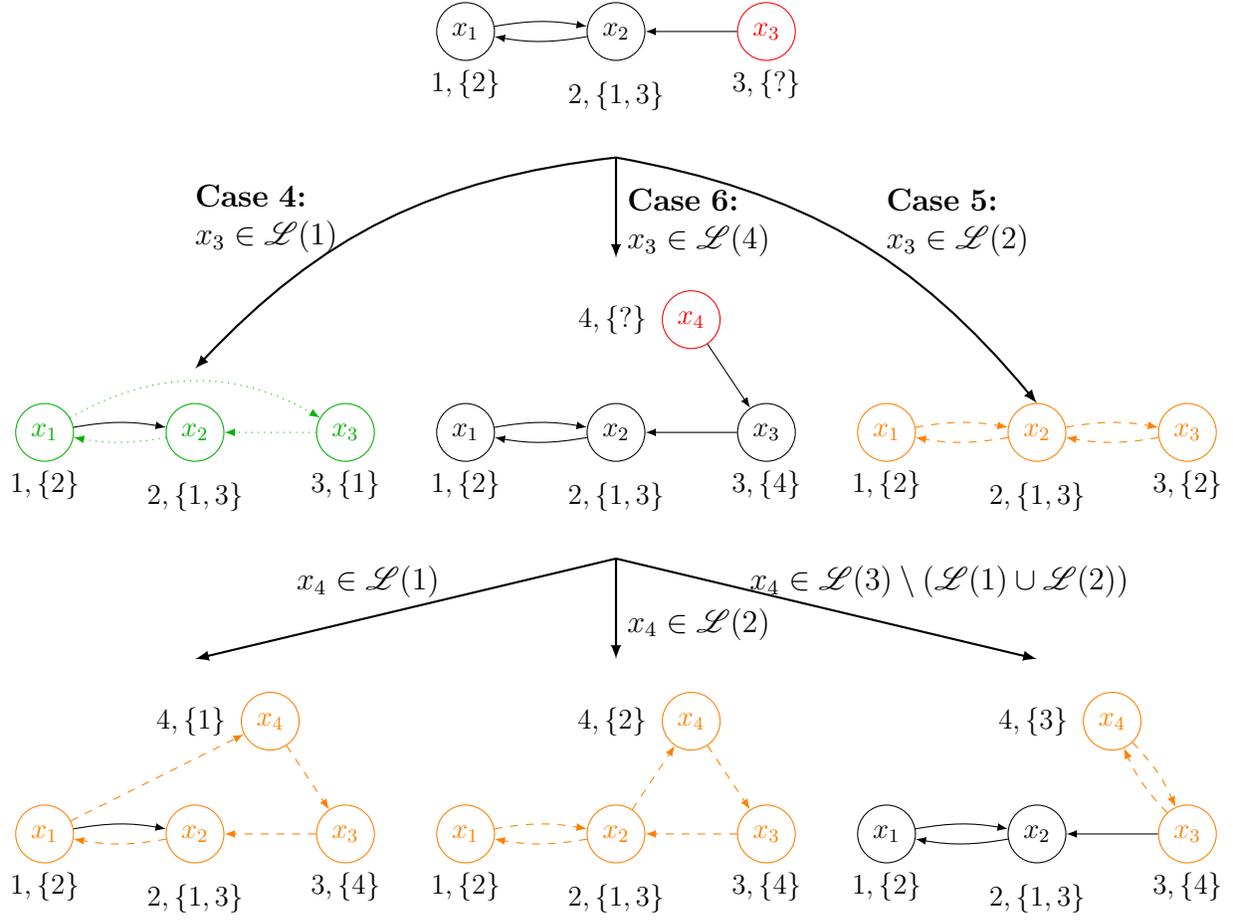

Given a multi-demand market with a set $\X$ of $m$ items and four players $I$, auppose each player $i\in I$ is $k_i$-demand and has multi-demand valuation function $v_i:2^{\X}\to\real_{\ge0}^m$. We summarize the procedure for finding a dynamic pricing in a multi-demand market with four players in Algorithm~\ref{algorithm for 4 players}. The correctness of the algorithm is proved in the following theorem.\\

\begin{theorem}\label{correctness of algorithm for 4 players}
A dynamic pricing exists in every multi-demand market with four players and it can be computed using Algorithm~\ref{algorithm for 4 players}.
\end{theorem}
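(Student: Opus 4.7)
The plan is to combine the two-step framework of Section~\ref{Fine Price} with an induction on the number of items. First, I would apply Algorithm~\ref{rough price} to obtain rough prices $\vp^R$; this is valid by Theorem~\ref{thm:rough prices}. As explained in Section~\ref{Fine Price}, it then suffices to find fine prices $\vp^F$ forming a dynamic pricing in the induced simplified Market~\#2, since the combined pricing $\vp^D$ is then a dynamic pricing in the original market. Because Market~\#2 has at most four players (inherited from the original four-player market), the problem reduces to showing that every saturated simplified multi-demand market $A$ with $|I_A|\le 4$ admits a dynamic pricing.

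I would prove this by strong induction on $|\X_A|$. Fix any legal allocation $\bcO$ of $A$ and construct the legality graph $G$ with respect to $\bcO$. If $A$ contains an item $\hat{x}\in\X_A$ legal to every player in $I_A$, apply Lemma~\ref{item legal to all players}: the submarket on $\X_A\setminus\{\hat{x}\}$ has strictly fewer items, so by the inductive hypothesis it admits a dynamic pricing, which Lemma~\ref{item legal to all players} extends to a dynamic pricing on $A$ by placing $\hat{x}$ at the top of the ordering. Otherwise, invoke Theorem~\ref{removable set in market with 4 players} to produce a removable set $\X_r\subseteq\X_A$ of type~I or type~II (possibly after reallocating $\bcO$ along a cycle in $G$ to switch legal allocations). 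Then Lemma~\ref{prices of removable set of type I} or Lemma~\ref{prices of removable set of type II} respectively extends any dynamic pricing of the submarket on $\X_A\setminus\X_r$ (guaranteed by the inductive hypothesis) to a dynamic pricing of $A$, by assigning the prescribed extreme ordering to the items of $\X_r$. Algorithm~\ref{algorithm for 4 players} is precisely the recursive procedure that carries out this case analysis, so its correctness follows from the induction.

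The main obstacle is covering the edge cases of the induction cleanly. Theorem~\ref{removable set in market with 4 players} is phrased for four-player markets, so I need to handle submarkets with $|I_A|\in\{1,2,3\}$ that may arise after removing items. The one- and two-player cases are immediate (any legal allocation can be enforced by choosing prices that agree with the allocation's structure), while the three-player case is either handled by the same removable-set argument or by appealing to the result of Berger et al.~\cite{Berger2020}. A separate degenerate base case is $I_2=\varnothing$, in which case rough prices alone already form a dynamic pricing in the original market. Finally, I would verify that the relative ordering of fine prices prescribed by Lemmas~\ref{item legal to all players},~\ref{prices of removable set of type I}, and~\ref{prices of removable set of type II} can always be realized within the open interval $(0,\Delta)$ required by part~\ref{range of p^F} of Definition~\ref{definition fine prices}; this is unproblematic because, as noted at the end of Section~\ref{Fine Price}, only the ordering of fine prices influences buyer choices in the simplified market.
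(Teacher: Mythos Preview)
Your proposal is correct and follows essentially the same approach as the paper's proof: rough prices via Algorithm~\ref{rough price}, reduction to the simplified Market~\#2, and strong induction on the number of items, with the case split between Lemma~\ref{item legal to all players} (an item legal to all players) and Theorem~\ref{removable set in market with 4 players} plus Lemmas~\ref{prices of removable set of type I}/\ref{prices of removable set of type II} (a removable set). The paper phrases the base case as ``at most three items'' (hence at most three players, handled by~\cite{Berger2020}) rather than your explicit breakdown into $|I_A|\le 3$, but this is an immaterial difference.
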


\begin{proof}
We proceed by induction on the number of items $m$ in the market. The base cases are markets with at most three items, which implies that there are at most three players. Such markets have been studied in~\cite{Berger2020}. From the induction hypothesis, we may assume that a dynamic pricing can be found in any multi-demand market with less than $m$ items and at most four players.

We first compute the rough prices as in Algorithm~\ref{rough price} and let market $A:=(\X^2,I_2,(\hat{k}_i)_{i\in I_2},\legal)$ be the simplified induced market described in Section~\ref{Fine Price}. We now show that the price vector $\vp^F$ defined in Algorithm~\ref{algorithm for 4 players} is a dynamic pricing in market $A$.

If there exists an item $\hat{x}$ that is legal to all players, which corresponds to lines~\ref{lst:line:4 player alg. case 1 item legal to all players}-\ref{lst:line:4 player alg. case 1 item legal to all players p^F def} in Algorithm~\ref{algorithm for 4 players}, we can follow the ordering of any dynamic pricing in the submarket with $\hat{x}$ removed and assign $\hat{x}$ a ``high'' price as in line~\ref{lst:line:4 player alg. case 1 item legal to all players p^F def}, to obtain prices $\vp^F$. Then, Lemma~\ref{item legal to all players} and the induction hypothesis guarantee that $\vp^F$ is a dynamic pricing in market $A$.

If such $\hat{x}$ does not exist, which corresponds to lines~\ref{lst:line:4 player alg. case removable set I or II}-\ref{lst:line:4 player alg. case removable set II p^F def} of Algorithm~\ref{algorithm for 4 players}, the existence of a removable set is guaranteed by Theorem~\ref{removable set in market with 4 players}. So market $A$ can always be reduced to a submarket $B$, whose dynamic pricing $\vp^B$ exists by induction hypothesis. With the valuation functions defined as in line~\ref{lst:line:4 player alg. valuation def} of Algorithm~\ref{algorithm for 4 players}, the dynamic pricing of market $B$ must be in the range $\mathbf{0}<\vp^B<\Delta/2$ in order to yield positive utility for all items that are legal to a given player. So depending on whether the removable set is of type I or II, the definition of prices $\vp^F$ is given by either line~\ref{lst:line:4 player alg. case removable set I p^F def}, which agrees with Lemma~\ref{prices of removable set of type I}, or line~\ref{lst:line:4 player alg. case removable set II p^F def}, which agrees with Lemma~\ref{prices of removable set of type II}. In both cases, $\vp^F$ is a dynamic pricing in market $A$.

Finally, since $\mathbf{0}<\vp^F\le\Delta/2<\Delta$, by Definition~\ref{definition fine prices} of fine prices, $\vp^F$ is a vector of fine prices of the original market. From Section~\ref{Fine Price}, we know that the pricing $\vp^D$ defined in line~\ref{lst:line:4 player alg. p^D def} is the desired dynamic pricing.
\end{proof}

\begin{algorithm}
\caption{Finding dynamic pricing in multi-demand markets with 4 players.}\label{algorithm for 4 players}
\begin{algorithmic}[1]
\State Define $\vp^R$ to be the rough prices of the original market.
\State Define $\Delta>0$ as in Section $\ref{Fine Price}$.
\State Define $\X^2:=\X\setminus\left(\bigcup\limits_{i\in I}\R_i\right)$, $I_2:=\{i\in I:\R_i\neq\K_i\}$ and $\hat{k}_i:=k_i-|\R_i|$ for all $i\in I_2$.
\State Define market $A:=(\X^2,I_2,(\hat{k}_i)_{i\in I_2},\legal)$.
\State Define
\[\hat{v}_i(x):=\begin{cases}
\Delta/2\quad&\text{if $x$ is legal to $i$}\\
0\quad&\text{if $x$ is not legal to $i$}
\end{cases}\]
for all $i\in I_2$ and $x\in\X^2$.\label{lst:line:4 player alg. valuation def}
\If{there exists an item $\hat{x}\in\X^2$ that is legal to all players in $I_2$}\label{lst:line:4 player alg. case 1 item legal to all players}
    \State Define $\vp^B$ to be a dynamic pricing in the submarket of $A$  induced by  a legal allocation $\bcO$ with items $\X^2\setminus\{\hat{x}\}$.
    \State Define the fine prices $\vp^F$ by $p_x^F:=\begin{cases}
    p_x^B\quad&\text{if }x\in\X^2\setminus\{\hat{x}\}\\
    \Delta/2\quad&\text{if }x=\hat{x}
    \end{cases}$.\label{lst:line:4 player alg. case 1 item legal to all players p^F def}
\Else\label{lst:line:4 player alg. case removable set I or II}
    \State Find a removable set $\X_r$ of type I or II with respect to a legal allocation $\bcO$ in market $A$.
    \State Define $\vp^B$ to be a dynamic pricing in the submarket of $A$ induced by $\bcO$ with items $\X^2\setminus\X_r$.
    \If{$\X_r$ is a removable set of type I with central item $x_c$}
        \State Define $\hat{p}:=\min\{p_x^B:x\in\X^2\setminus\X_r\}$.
        \State Define the fine prices $\vp^F$ by $p_x^F:=\begin{cases}
        \hat{p}/2\quad&\text{if }x=x_c\\
        p_x^B\quad&\text{if }x\in\X^2\setminus\X_r\\
        \Delta/2\quad&\text{if }x\in\X_r\setminus\{x_c\}
        \end{cases}$.\label{lst:line:4 player alg. case removable set I p^F def}
    \Else
        \State Define the fine prices $\vp^F$ by $p_x^F:=\begin{cases}
        p_x^B\quad&\text{if }x\in\X^2\setminus\X_r\\
        \Delta/2\quad&\text{if }x\in\X_r
        \end{cases}$.\label{lst:line:4 player alg. case removable set II p^F def}
    \EndIf
\EndIf
\State Define the dynamic pricing $\vp^D$ by $p_x^D:=\begin{cases}
p_x^R+p_x^F\quad&\text{if }x\in\X^2\\
p_x^R\quad&\text{if }x\in\X\setminus\X^2
\end{cases}$.\label{lst:line:4 player alg. p^D def}
\State \Return $\vp^D$
\end{algorithmic}
\end{algorithm}

\section{Multi-Demand Market with Two Optimal Allocations}\label{Multi-Demand Market with Two Optimal Allocations}

The legality graph defined in Section~\ref{Legality Graph} can also be used for finding dynamic pricing in multi-demand markets with at most two optimal allocations. We consider a simplified market $A=(\X_A,I_A,(k_i)_{i\in I_A},\legal)$, legal allocation $\bcO$ and legality graph $G$ as defined in Section~\ref{Legality Graph}. In particular, every item is legal to at least two players. 

So in this section, let us make the assumption that market $A$ has at most two optimal allocations. This implies that  every item is legal to at most two players. In the following lemma we show that this also implies that, in a uniquely assigned cycle, every player has either two or no legal items. 

\begin{lemma}\label{0 or 2 legal items in UA cycle}
Given a uniquely assigned cycle $\C$ in $G$, every player $i$, $i\in I_A$ has exactly $0$ or $2$ legal items in $\C$.
\end{lemma}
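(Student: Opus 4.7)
The plan is to exploit how the cycle structure forces each item in $\C$ to be legal to at least two specific players, and then use the at-most-two-optimal-allocations assumption to squeeze out the conclusion exactly. Write the uniquely assigned cycle as $\C = x_0 \to x_1 \to \cdots \to x_{l-1} \to x_l = x_0$ and let $i_j \in I_A$ denote the player with $x_j \in \cO_{i_j}$. The uniquely assigned property says that $i_0, i_1, \ldots, i_{l-1}$ are pairwise distinct.

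Next I would observe, directly from the definition of $G$, that the edge $x_j \to x_{j+1}$ (indices taken modulo $l$) forces $x_{j+1} \in \legal(i_j)$. Combined with the fact that $x_{j+1}$ is legal to its own owner $i_{j+1}$, each item $x_{j+1}$ in $\C$ is already legal to the two distinct players $i_j$ and $i_{j+1}$. Invoking the standing assumption of this section that every item is legal to at most two players, this bound is tight: $x_{j+1}$ is legal to a player $i$ if and only if $i \in \{i_j, i_{j+1}\}$.

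From here the counting finishes quickly. Fix a player $i \in I_A$ and ask which items of $\C$ are legal to $i$. By the characterization above, $x_{j+1} \in \legal(i)$ iff $i = i_j$ or $i = i_{j+1}$. If $i \notin \{i_0, \ldots, i_{l-1}\}$, there is no such $j$, so $i$ has $0$ legal items in $\C$. If $i = i_k$ for some $k$, then the condition $i = i_{j+1}$ picks out $j+1 = k$ and the condition $i = i_j$ picks out $j = k$, yielding exactly the two items $x_k$ and $x_{k+1}$; these are distinct because $\C$ is a simple cycle, so $i$ has exactly $2$ legal items in $\C$.

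The main obstacle is not really a difficulty but a bookkeeping point: one must verify that the two ``certificates'' of legality for each item in $\C$ are genuinely distinct (which is handled by the uniquely assigned hypothesis) and that for a player not owning any item in $\C$, no spurious legal item can appear (which would otherwise give a third distinct legal player for some $x_j$, namely $i_{j-1}$, $i_j$, and $i$, contradicting the two-player cap).
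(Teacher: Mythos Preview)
Your proof is correct. Both your argument and the paper's rest on the same two ingredients: the section's standing assumption that every item is legal to at most two players, and the uniquely assigned property guaranteeing that the owners $i_0,\dots,i_{l-1}$ along $\C$ are distinct. The paper proceeds by two separate contradiction arguments (first ruling out $\geq 3$ legal items, then ruling out exactly $1$), whereas you first pin down the exact legality set of each cycle vertex as $\{i_{j-1},i_j\}$ and then count directly. Your route is a bit more streamlined---once the characterization is in hand, the $0$/$2$ dichotomy falls out immediately without case splits---but the underlying idea is the same.
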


\begin{proof}
Consider a uniquely assigned cycle $\C$ in $G$ and a player $i\in I_A$. 

Suppose there are at least three different items $x$, $y$ and $z$ in the cycle $\C$ that are legal to $i$. Without loss of generality, $x$ and $y$ are either both in $\cO_i$ or both not in $\cO_i$. If $x$ and $y$ are both in $\cO_i$, then we get a contradiction to the fact that $\C$ is a uniquely assigned cycle. Now suppose both $x$ and $y$ are not in $\cO_i$, then there are two edges $x^\p\to x$ and $y^\p\to y$ in the cycle $\C$. Since $x$ and $y$ are each legal to at most two players, the only player that they are legal but not allocated to is the player $i$. So both $x^\p$ and $y^\p$ are in $\cO_i$. We again get a contradiction to the fact that $\C$ is a uniquely assigned cycle. So, there are at most 2 items in $\C$ that are legal to player $i$.

Suppose there is a unique item $x$ in $\C$ that is legal to $i$. Then there exist two edges $x\to y$ and $z\to x$ in the cycle $\C$. Using the fact that every item is legal to at most two players, we have that $y$ is legal to $i$ if $x$ is in  $\cO_i$. Similarly,  we have that $z$ is legal to $i$ if $x$ is not in  $\cO_i$, leading to a contradiction.
\end{proof}

As in Section~\ref{Multi-Demand Market with Four Players}, we use induction on the number of items in the market. In this section, we define two new types of removable sets. Both are closely related to uniquely assigned cycles in $G$. In particular, we study how dynamic pricing can be done based on the parity of the length of such cycles. We first consider a uniquely assigned cycle of even length.

\begin{definition}
Suppose $\C$ is a uniquely assigned cycle in the legality graph $G$ of even length. Then, the items in $\C$ form a removable set of type III and $\C$ is called a removable cycle.
\end{definition}

Intuitively, we use a removable set $\X_r$ of type III formed by $\C$ in the following way. If no item is allocated to a player $i$, $i\in I_A$, then the choice of player $i$ is determined solely by the dynamic pricing in the submarket of $A$ with items $\X_A\setminus\X_r$. On the other hand, if a player $i$ has exactly two legal items in $\X_r$, then these two items must be adjacent in the cycle. To ensure that an appropriate compensation is possible, we alternately assign ``high'' and ``low'' prices to items in the cycle $\C$ so that player $i$ selects exactly one item from $\C$. This intuition leads to the following lemma.

\begin{lemma}\label{prices of removable set of type III}
Suppose $\X_r\subseteq\X_A$ is a removable set of type III formed by a cycle $\C$
\[x_0\to x_1\to x_2\to\ldots\to x_{l-1}\to x_l=x_0\,,\] 
i.e. $\X_r=\{x_j\,:\,j=1,\ldots, l\}$. Let $B=(\X_B,I_B,(k_i^{\p})_{i\in I_B},\legal)$ be the submarket of $A$ induced by $\bcO$ with $\X_B=\X\setminus\X_r$. Any pricing $\vp\in\real^m$ satisfying the following conditions is a dynamic pricing in market $A$.
\begin{enumerate}
\item For all $y_1\in\{x_j\,:\, j=1,\ldots, l,\,j\text{ is odd}\}$, $y_2\in\X_B$, and $y_3\in\{x_j\,:\,j=1,\ldots, l,\,j\text{ is even}\}$, \[p_{y_1}>p_{y_2}>p_{y_3}\,.\]
\item $(p_x)_{x\in\X_B}$ is a dynamic pricing in market $B$.
\end{enumerate}
\end{lemma}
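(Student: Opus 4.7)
The plan is to show that for every player $\hat{i} \in I_A$ and every bundle $X$ in demand for $\hat{i}$ under the prices $\vp$, there exists a legal allocation $\bcO^A$ of $A$ with $\cO_{\hat{i}}^A = X$. The argument will lean on Lemma~\ref{0 or 2 legal items in UA cycle} to describe the shape of legality inside the cycle $\C$, and on Remark~\ref{reallocation along cycle} to produce a second legal allocation via a cycle reallocation whenever $\hat{i}$'s pick deviates from $\bcO$.

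First, I would establish that whenever $\hat{i}$ has legal items in $\C$, these are exactly the two consecutive items $x_j \in \cO_{\hat{i}}$ and $x_{j+1}$ for some $j$. By Lemma~\ref{0 or 2 legal items in UA cycle}, $\hat{i}$ has $0$ or $2$ legal items in $\C$. In the second case, the outgoing edge of $x_j$ inside $\C$ is induced by $\hat{i}$, since $\C$ is uniquely assigned and $x_j \in \cO_{\hat{i}}$, and it lands on a legal item of $\hat{i}$ distinct from $x_j$; this successor must therefore coincide with $\hat{i}$'s second legal item in $\C$. Because $l$ is even, $x_j$ and $x_{j+1}$ fall in different parity classes, so exactly one of them is low-priced and the other is high-priced.

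Next, I would pin down the form of $X$. In the simplified market, legal items dominate illegal items in utility under the assumed price ordering, so $X$ consists of the $k_{\hat{i}}$ legal items of $\hat{i}$ with the highest utility. The hypothesis $p_{y_1} > p_{y_2} > p_{y_3}$ forces every high-priced item in $\X_r$ to have strictly lower utility than every $\X_B$-item, and every low-priced $\X_r$-item to have strictly higher utility. Hence, in Case 1 (no legal items in $\C$), $X \subseteq \X_B$ is a set of $k_{\hat{i}}$ legal items that is also a bundle in demand for $\hat{i}$ in market $B$; in Case 2, $X$ contains exactly the low-priced member of $\{x_j, x_{j+1}\}$ together with $k_{\hat{i}}^{\p} = k_{\hat{i}} - 1$ items from $\X_B$ forming a bundle in demand for $\hat{i}$ in $B$.

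Finally, I would assemble $\bcO^A$ by invoking the dynamic-pricing property of $\vp$ in $B$ to pick a legal allocation $\bcO^B$ of $B$ with $\cO_{\hat{i}}^B = X \cap \X_B$. If $\hat{i}$'s $\X_r$-pick (when it exists) agrees with $\cO_{\hat{i}} \cap \X_r$, I would set $\cO_i^A := \cO_i^B \cup (\cO_i \cap \X_r)$ for every $i$; otherwise $\hat{i}$ picks the successor $x_{j+1}$, in which case I would let $\bcO^{\p}$ be the cyclic reallocation of $\bcO$ along $\C$, which is legal by Remark~\ref{reallocation along cycle}, and set $\cO_i^A := \cO_i^B \cup (\cO_i^{\p} \cap \X_r)$. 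In either construction $|\cO_i^A| = k_i^{\p} + |\cO_i \cap \X_r| = k_i$, and Remark~\ref{legality in B implies legality in A} together with the legality of $\bcO$ or $\bcO^{\p}$ ensures every assigned item is legal to its owner. The main obstacle I anticipate is the consecutivity claim in the first step; once it is in place, the even length of $\C$ automatically distributes $\{x_j, x_{j+1}\}$ across the two price classes, and the cycle-reallocation trick neatly absorbs any deviation from $\bcO$.
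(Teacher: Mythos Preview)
Your proposal is correct and follows essentially the same approach as the paper's proof: both use Lemma~\ref{0 or 2 legal items in UA cycle} to pin down that a player with legal items in $\C$ sees exactly the consecutive pair $x_j,x_{j+1}$ (one in each parity class), and both extend a legal allocation $\bcO^B$ of $B$ by adjoining either $\cO_i\cap\X_r$ or its cyclic shift on $\X_r$, which is exactly what the paper writes out explicitly as $\cO_i^B\cup\{x_j\}$ versus $\cO_i^B\cup\{x_{j+1}\}$. The only cosmetic difference is that you invoke Remark~\ref{reallocation along cycle} to name the shifted allocation $\bcO'$, whereas the paper spells out the resulting bundles directly; also note that the fact that one of $\hat{i}$'s two legal items in $\C$ lies in $\cO_{\hat{i}}$ relies on the standing assumption of this section that every item is legal to at most two players, which you use implicitly.
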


\begin{proof}
Suppose $x_j\in\cO_{i_j}$ for $ j=0,\ldots, l$. Let us define $I_r:=\{i_j\,:\,j=0,\ldots, l\}\subseteq I_A$. For all $j=1,\ldots, l$, by construction of the legality graph $G$, we have that the item $x_j$ is legal to players $i_j$ and $i_{j-1}$. Since every item is legal to at most two players the item $x_j$ is only legal to players $i_j$ and $i_{j-1}$. By definition of uniquely assigned cycles, all $l$ players  in $I_r$ are distinct. By Lemma~\ref{0 or 2 legal items in UA cycle}, items $x_j$ and $x_{j+1}$ are the only two items in $\X_r$ that are legal to player $i_j$. Note that we identify $i_{l+1}$ with $i_1$ and $x_{l+1}$ with $x_1$ in this proof.

Note that
\[|\cO_{i}\cap\X_r|=\begin{cases}
1\quad&\text{if }i\in I_r\\
0\quad&\text{if }i\notin I_r
\end{cases}
\]
implies
\[
k_i^{\p}=\begin{cases}
k_i-1\quad&\text{if }i\in I_r\\
k_i\quad&\text{if }i\notin I_r
\end{cases}\,.\]

Consider a player $\hat{i}\in I_A$ and let us show that any bundle in demand for player $\hat{i}$ with respect to the prices $\vp$ can be extended to a legal allocation in market $A$. We consider the following three cases.

\textbf{Case $\hat{i}=i_{\hat{j}}\in I_r$, for some $\hat{j}=1,\ldots, l$ and $\hat{j}$ is even.}
In this case, $k_{\hat{i}}^{\p}=k_{\hat{i}}-1$ and the only two items that are legal to $\hat{i}$ in $\X_r$  are $x_{\hat{j}}$ and $x_{\hat{j}+1}$. Since $\hat{j}$ is even, for all items $x\in\X_B$, we have
\[
p_{x_{\hat{j}+1}}>p_x>p_{x_{\hat{j}}}.
\]
So for player $\hat{i}$ the bundle in demand in the market $A$ consists of the item $x_{\hat{j}}$ and a bundle $\X_{\hat{i}}$ of size $k_{\hat{i}}^{\p}$ that is in demand in market $B$. Since $(p_x)_{x\in\X_B}$ is a dynamic pricing in market $B$, there exists a legal allocation $\bcO^B$ in market $B$, such that $\cO_{\hat{i}}^B=\X_{\hat{i}}$. We then obtain a legal allocation $\bcO^A$ in market $A$ such that $\cO_{\hat{i}}^A=\X_{\hat{i}}\cup \{x_{\hat{j}}\}$ by defining
\[
\cO_i^A:=\begin{cases}
\cO_i^B\cup\{x_j\}\quad&\text{if }i\in I_r\text{ and }i=i_j\text{ for some }j=1,\ldots,l\\
\cO_i^B\quad&\text{if }i\notin I_r\,.
\end{cases}
\]

\textbf{Case  $\hat{i}=i_{\hat{j}}\in I_r$, for some $\hat{j}=1,\ldots, l$ and $\hat{j}$ is odd.}
This case is analogous to the previous case, except that for all items $x\in\X_B$, we have
\[
p_{x_{\hat{j}}}>p_x>p_{x_{\hat{j}+1}}.
\]
Accordingly, we need to define a legal allocation $\bcO^A$ in market $A$ as
\[
\cO_i^A:=\begin{cases}
\cO_i^B\cup\{x_{j+1}\}\quad&\text{if }i\in I_r\text{ and }i=i_j\text{ for some }j=1,\ldots,l\\
\cO_i^B\quad&\text{if }i\notin I_r\,.
\end{cases}
\]

\textbf{Case  $\hat{i}\notin I_r$.}
In this case, no item in $\X_r$ is legal to $\hat{i}$. So for player $\hat{i}$, the bundle in demand is a set $\X_{\hat{i}}$ of $k_{\hat{i}}^{\p}=k_{\hat{i}}$ items, where $\X_{\hat{i}}$ is in demand in market $B$. Since $(p_x)_{x\in\X_B}$ is a dynamic pricing in market $B$, there exists a legal allocation $\bcO^B$ in market $B$ such that $\cO_{\hat{i}}^B=\X_{\hat{i}}$. We can extend it to a legal allocation in market $A$ in the same way as in either of the previous cases.
\end{proof}

This strategy of alternately assigning ``high'' and ``low'' prices in a uniquely assigned cycle  is possible only when the cycle has an even length. Therefore, it would be desirable to have such cycles in $G$ of even length. It turns out that even when such cycles do not exist, as long as we can find two appropriate intersecting odd cycles, it is possible to construct a removable set of type III by reallocating $\bcO$ with respect to one of the two odd cycles. 

\begin{definition}\label{odd cycle pair definition}
Suppose there are two odd length cycles $\C_1$
\[x_0\to x_1\to x_2\to\ldots\to x_{l_1-1}\to x_{l_1}=x_0\] and $\C_2$
\[y_0\to y_1\to y_2\to\ldots\to y_{l_2-1}\to y_{l_2}=y_0\] in the legality graph $G$, with $x_s\in\cO_{i_{x,s}}$ for all $s=1,\ldots, l_1$ and $y_t\in\cO_{i_{y,t}}$ for all $ t=1,\dots,l_2$. The pair $(\C_1,\C_2)$ is called an odd cycle pair if $\C_1$ and $\C_2$ satisfy the following conditions
\begin{enumerate}
\item$x_{l_1-j}=y_{l_2-j}$ for all $j=0,\ldots, r$, for some $r=0,\ldots,\min\{l_1,l_2\}$.\label{odd cycle pair definition.1}
\item $\{i_{x,s}\,:\,s=1,\ldots, l_1\}\cap\{i_{y,t}\,:\,t=1,\ldots, l_2\}$ has exactly $l_1+l_2-(r+2)$ players.\label{odd cycle pair definition.2}
\end{enumerate}

\end{definition}

\begin{lemma}\label{odd cycle pair implies removable cycle}
If there is an odd cycle pair in $G$, then there exists a removable set of type III in market $A$ when the legality graph is constructed with respect to another optimal allocation.
\end{lemma}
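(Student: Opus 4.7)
The plan is to extract a tight structural picture out of condition~2 of Definition~\ref{odd cycle pair definition} and then expose a uniquely assigned $2$-cycle in the legality graph of a reallocation of $\bcO$. Let $P_i\subseteq I_A$ be the set of players owning items of $\C_i$, so $|P_i|\le l_i$, and since the shared suffix lives inside both cycles, $r+1\le\min(l_1,l_2)$. Combining these bounds with $|P_1\cap P_2|=l_1+l_2-(r+2)$ gives $r\ge\max(l_1,l_2)-2$. Since $l_1$ and $l_2$ are both odd and hence differ by an even number, they must coincide, $l_1=l_2=:l$, and then $r\in\{l-2,l-1\}$; the case $r=l-1$ is degenerate ($\C_1$ and $\C_2$ share all vertices, so they coincide), leaving the main case $r=l-2$. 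Each cycle then has exactly one private vertex, which I denote $x_1$ and $y_1$. The $l-1$ shared items already contribute $l-1$ common players, and the counting identity forces $|P_1|=|P_2|=l$; in particular both cycles are uniquely assigned and $P_1=P_2$, so $i_{x,1}=i_{y,1}=:\pi$, and $x_1,y_1\in \cO_\pi$.

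Reallocating $\bcO$ along $\C_2$ produces a legal allocation $\bcO'$ by Remark~\ref{reallocation along cycle}; the item $x_1$ is unaffected because it is not on $\C_2$, while $y_1$ moves from $\pi=i_{y,1}$ to $i_{y,0}=i_{y,l}=i_{x,l}$. Since market $A$ has at most two optimal allocations, every item is legal to at most two players, so the legal players of $y_1$ are exactly $\pi$ and $i_{x,l}$, and the legal players of $x_1$ are exactly $\pi$ and $i_{x,0}=i_{x,l}$. Consequently, in the new legality graph $G'$, the edge $x_1\to y_1$ is present because $y_1$ is legal to the owner $\pi$ of $x_1$ but in $\bcO'$ is owned by $i_{x,l}\neq\pi$, and the edge $y_1\to x_1$ follows by the symmetric argument. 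Thus $x_1\to y_1\to x_1$ is a uniquely assigned cycle of even length $2$ in $G'$, i.e.\ a removable set of type III with respect to $\bcO'$.

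The hardest step is the first one: turning the purely numerical identity of condition~2 into the geometric conclusion that the two cycles differ in exactly one vertex each and share the same private player $\pi$, including the careful exclusion of the degenerate sub-case $r=l-1$. Once this rigidity is in place, the reallocation step and the two edge verifications in $G'$ are short and depend only on the standing hypothesis that every item is legal to at most two players.
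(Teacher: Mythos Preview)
Your argument hinges on reading condition~2 of Definition~\ref{odd cycle pair definition} literally as $|P_1\cap P_2|=l_1+l_2-(r+2)$. Unfortunately the definition carries a typo: the intended condition is $|P_1\cup P_2|=l_1+l_2-(r+2)$, equivalently $|P_1\cap P_2|=r+2$ once both cycles are uniquely assigned. This is visible both in the paper's own proof, which invokes condition~2 precisely to conclude that ``these are exactly the $r+2$ players that are common in $\C_1$ and $\C_2$'', and in the construction of odd cycle pairs inside the proof of Lemma~\ref{prices of removable set of type IV}, where the pair $(\C,\C_y^*)$ generically has $l_1\ne l_2$ (one needs $t^*=j^*$ for equality, which nothing enforces). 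Under the corrected reading your inequality chain reverses direction---one only obtains $r+2\le\min(l_1,l_2)$---so the rigidity conclusion $l_1=l_2$, $r=l-2$ collapses and the remainder of your argument does not go through.

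The paper's proof handles arbitrary $l_1,l_2,r$. After reallocating along $\C_2$ to obtain $\bcO^2$ and its legality graph $G_2$: the corrected condition~2 forces both cycles to be uniquely assigned and the owners $i_{x,1},\ldots,i_{x,l_1-r-2}$ to lie outside $P_2$, so the path $x_1\to\cdots\to x_{l_1-r-1}$ from $\C_1$ survives intact in $G_2$; meanwhile $\C_2$ reverses direction. Two bridging edges then appear: $x_{l_1-r-1}\to y_{l_2-r-1}$ (because $i_{x,l_1-r-1}=i_{y,l_2-r-1}$, the one extra common player produced by the ``at most two legal players'' hypothesis, and $y_{l_2-r-1}$ has now left that player) and $y_1\to x_1$ (because $y_1$ is now owned by $i_{y,0}=i_{x,0}$, for whom $x_1$ is legal but still allocated to $i_{x,1}$). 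These close a uniquely assigned cycle of even length $l_1+l_2-2r-2$. Your $2$-cycle is exactly this construction specialised to $l_1=l_2$, $r=l-2$, so your instinct about the final object was on target; what is missing is the general mechanism that produces it without the spurious rigidity you extracted from the typo.
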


\begin{proof}
Let $(\C_1,\C_2)$ be an odd cycle pair. Let us use the same notation as in Definition~\ref{odd cycle pair definition}. From~\ref{odd cycle pair definition.1} of Definition~\ref{odd cycle pair definition}, we know that $i_{x,l_1-j}=i_{y,l_2-j}$ for $j=0,\ldots,r$. Moreover, since $x_{l_1-r}=y_{l_2-r}$, we know that $x_{l_1-r-1}$ and $y_{l_2-r-1}$ must be allocated to the same player. So, we also have $i_{x,l_1-r-1}=i_{y,l_2-r-1}$. By~\ref{odd cycle pair definition.2} of Definition~\ref{odd cycle pair definition}, these are exactly the $r+2$ players that are common in $\C_1$ and $\C_2$. All other players involved in the two cycles are necessarily distinct.

Let $\bcO^2$ be the reallocation $\bcO$ with respect to $\C_2$. By Corollary $\ref{reallocation along cycle}$, $\bcO^2$ is also an optimal allocation in market $A$. Let $G_2$ be the legality graph corresponding to allocation $\bcO^2$. In the graph $G_2$, the path $x_0\to x_1\to \ldots\to x_{l_1-r-2}\to x_{l_1-r-1}$ remains unchanged in comparison to $G_1$. However, the cycle $y_0\to y_1\to \ldots\to y_{l_2-1}\to y_{l_2}$ in $G_1$ becomes the cycle $y_{l_2}\to y_{l_2-1}\to \ldots \to y_1\to y_0$ in $G_2$; and item $y_t$ is allocated to player $i_{y,t-1}$ by $\bcO^2$ for all $t=1,\ldots,l_2$.

In particular, $y_{l_2-r-1}$ is no longer allocated to $i_{y,l_2-r-1}=i_{x,l_1-r-1}$; and $y_1$ is now allocated to $i_{y,0}=i_{y,l_2}=i_{x,l_1}=i_{x,0}$. So, edges $x_{l_1-r-1}\to y_{l_1-r-1}$ and $y_1\to x_1$ are in $G_2$. Therefore, there is a cycle
\[x_1\to x_2\to \ldots\to x_{l_1-r-1}\to y_{l_2-r-1}\to y_{l_2-r-2}\to\ldots\to y_2\to y_1\to x_1\]
in the graph $G_2$ and let us  denote this cycle by $\C$.
Notice that $\C$ is a uniquely assigned cycle in $G_2$ of length
\[(l_1-r-1)+(l_2-r-1)=(l_1+l_2)-2r-2,\]
which is even since $l_1$ and $l_2$ are both odd. By definition, $\C$ is a removable cycle and the items in $\C$ form a removable set of type III in the graph $G_2$.
\end{proof}

Lastly, we need to consider the case when there is no removable set formed by a uniquely assigned cycle of even length and no odd cycle pair in the market. So every uniquely assigned cycle has odd length. In this case we can get an additional property about the items in uniquely assigned cycles. In particular, if we remove all but one item $x$ in  an odd length cycle $\C$ from the market, we can show that in the resulting submarket the item $x$ is legal to only one player. Using this property, we define a fourth type of removable set that contains all items in $\C$ except for one item $x$ and show that we can again alternately assign ``high'' and ``low'' prices to the items in such a removable set as we did in Lemma~\ref{prices of removable set of type III}.

\begin{definition}\label{removable set of type IV definition}
Let $\C$ be a uniquely assigned odd length cycle in $G$ of the form 
\[x_0\to x_1\to x_2\to\ldots\to x_{l-1}\to x_l=x_0\,.\]
If there does not exist a uniquely assigned even cycle or an odd cycle pair in the graph $G$, then $\{x_j\,:\,j=1,\ldots, l-1\}$ is called a removable set of type IV.
\end{definition}

\begin{lemma}\label{prices of removable set of type IV}
 Let $\C$ be a cycle in $G$ 
 \[x_0\to x_1\to x_2\to\ldots\to x_{l-1}\to x_l=x_0\]
 such that $\X_r\coloneqq\{x_j\,:\,j=1,\ldots, l-1\}\subseteq\X_A$ is a removable set of type IV in $G$. Let $B=(\X_B,I_B,(k_i^{\p})_{i\in I_B},\legal)$ be the submarket of $A$ induced by $\bcO$ with items $\X_B=\X_A\setminus\X_r$. Any pricing $\vp\in\real^m$ satisfying the following conditions is a dynamic pricing in market $A$.
\begin{enumerate}
\item For all $y_1\in\{x_j\,:\, j=1,\ldots, l-1,\,j\text{ is odd}\}$, $y_2\in\X_B$, and $y_3\in\{x_j\,:\,j=1,\ldots, l-1,\,j\text{ is even}\}$ \[p_{y_1}>p_{y_2}>p_{y_3}\,.\]
\item $(p_x)_{x\in\X_B}$ is a dynamic pricing in market $B$.
\end{enumerate}
\end{lemma}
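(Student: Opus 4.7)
The plan is to adapt the case analysis of Lemma~\ref{prices of removable set of type III} to the odd-length cycle $\C$, with the novelty being the treatment of $x_0 = x_l$, which remains in $\X_B$. I would set up notation with $x_j \in \cO_{i_j}$ for $j = 0, \ldots, l$, where $x_l = x_0$ and $i_l = i_0$. Since $\C$ is uniquely assigned, the players $i_0, \ldots, i_{l-1}$ are all distinct. Because $A$ has at most two optimal allocations, every item is legal to at most two players, so each $x_j$ is legal exactly to $i_{j-1}$ and $i_j$ (indices modulo $l$); in particular $x_0$ is legal exactly to $i_0$ and $i_{l-1}$. In the submarket $B$ we have $k_{i_j}^{\p} = k_{i_j} - 1$ for $j = 1, \ldots, l-1$ and $k_{i_0}^{\p} = k_{i_0}$, since $x_0 \in \cO_{i_0}$ lies in $\X_B$.

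The bulk of the case analysis mirrors Lemma~\ref{prices of removable set of type III}. For $\hat{i} \notin \{i_0, \ldots, i_{l-1}\}$, no item of $\X_r \cup \{x_0\}$ is legal to $\hat{i}$, so the bundle in demand $S$ for $\hat{i}$ lies in $\X_B \setminus \{x_0\}$ and extending it via the dynamic pricing of $B$ is immediate. For $\hat{i} = i_j$ with $j \in \{1, \ldots, l-2\}$, the pricing pattern forces $S \cap \X_r$ to be the even-indexed element of $\{x_j, x_{j+1}\}$, with $S \cap \X_B$ being a bundle in demand for $\hat{i}$ in $B$; the extension is given by a legal allocation of $B$ realizing the $\X_B$ part of $S$, combined with a shifted reassignment along $i_0, \ldots, i_{l-1}$ of the items in $\X_r$, exactly as in the type III argument. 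For $\hat{i} = i_0$, only $x_1 \in \X_r$ is legal to $i_0$; since $x_1$ has the high price while $\cO_{i_0} \subseteq \X_B$ already provides $k_{i_0}$ legal items strictly cheaper than $x_1$, we have $S \subseteq \X_B$, and the dynamic pricing of $B$ again handles it.

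The main obstacle is the case $\hat{i} = i_{l-1}$. Here $x_{l-1}$ has the low price and is legal to $i_{l-1}$, so $x_{l-1} \in S$; however, $x_0 \in \X_B$ is also legal to $i_{l-1}$ in $A$ (via reallocation along $\C$), raising the possibility that $S \setminus \{x_{l-1}\}$ contains $x_0$, which would break the naive extension through $\bcO|_{\X_r}$. I would resolve this by proving the key auxiliary claim that \emph{in market $B$, the item $x_0$ is legal only to $i_0$}. The argument is by contradiction: an optimal allocation of $B$ with $x_0$ assigned to $i_{l-1}$, extended by $\bcO|_{\X_r}$, would be an optimal allocation of $A$ agreeing with $\bcO$ on $\X_r$, and by Remark~\ref{optimal allocations are related by disjoint cycles} it would be obtained from $\bcO$ by reallocation along disjoint simple cycles in $G$, one of which (call it $\C_1$) passes through $x_0$, avoids $\X_r$, and sends $x_0$ from $i_0$ to $i_{l-1}$. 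After shortening $\C_1$ as in Lemma~\ref{cycle of items allocated to different players} while preserving the $x_0 \to i_{l-1}$ behavior, we may assume $\C_1$ is uniquely assigned. If $\C_1$ has even length, it is itself a removable set of type III, contradicting the type IV hypothesis. If $\C_1$ has odd length, then $\C$ and $\C_1$ share only the vertex $x_0$ and, by the two-legal-players property, exactly the two common players $i_0, i_{l-1}$; this matches Definition~\ref{odd cycle pair definition} of an odd cycle pair with $r = 0$, again contradicting the type IV hypothesis. With the key claim established, $S \setminus \{x_{l-1}\}$ is a bundle in demand for $i_{l-1}$ in $B$, and its extension proceeds as in the other cases. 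The most delicate part of this argument is verifying that $\C_1$ can be taken to share exactly the two common players claimed with $\C$; if a naive shortening of $\C_1$ introduces extra common players via items in $\cO_{i_s} \cap \X_B$ for some $s \in \{1,\ldots,l-2\}$, one should iterate the shortening or select $\C_1$ of minimum length with the desired endpoint behavior to eliminate them.
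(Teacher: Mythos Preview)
Your overall strategy—reducing to the submarket $B$ and isolating an auxiliary claim that $x_0$ is legal only to $i_0$ in $B$—matches the paper's. Two corrections are needed.

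\textbf{Where the claim is used.} Since $l$ is odd, the index $l-1$ is even, so for $\hat i=i_{l-1}$ the demanded bundle is $\{x_{l-1}\}\cup S'$ with $S'$ in demand in $B$, and the \emph{unshifted} extension $\cO_{i_j}^A=\cO_{i_j}^B\cup\{x_j\}$ for $j=1,\ldots,l-1$ is legal regardless of where $x_0$ sits in $\bcO^B$; the claim is not actually needed there. Where it \emph{is} needed is the case $\hat i=i_{\hat j}$ with $\hat j$ odd and $1\le\hat j\le l-2$, which you wave through as ``exactly as in the type~III argument''. There the demanded item from $\X_r$ is $x_{\hat j+1}$, so the extension must shift along the whole cycle, forcing $i_{l-1}$ to receive $x_l=x_0$ and $i_0$ to receive $x_1$ in its place. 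That only works if $x_0$ is guaranteed to lie in $\cO_{i_0}^B$—precisely the content of the claim. A ``shifted reassignment of the items in $\X_r$'' cannot by itself cover $i_{l-1}$, because $x_l\notin\X_r$; this is exactly where the paper invokes the claim.

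\textbf{Proving the claim.} Your suggested fix for the ``delicate part'' does not work: choosing $\C_1$ of minimum length through $x_0$ makes it uniquely assigned, but does nothing to prevent $\C_1$ from passing through a player $i_s$ with $s\in\{1,\ldots,l-2\}$ via some item of $\cO_{i_s}\cap\X_B$, and no further shortening within $\X_B$ removes such a visit. The paper avoids this assumption entirely: take any uniquely assigned cycle $\C_y$ through $x_l$ in $G_B$, locate the \emph{first} index $t^*$ along $\C_y$ whose player coincides with some $i_{j^*}$ on $\C$, and splice into $\C$ there via the edge $y_{t^*}\to x_{j^*+1}$ (present since $y_{t^*}\in\cO_{i_{j^*}}$). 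The spliced cycle $x_l\to y_1\to\cdots\to y_{t^*}\to x_{j^*+1}\to\cdots\to x_l$ is uniquely assigned by minimality of $t^*$. If it has even length it is a removable cycle; if odd, it forms an odd cycle pair with $\C$ sharing the tail $x_{j^*+1},\ldots,x_l$. Either way the type~IV hypothesis is contradicted, with no unverified control on the number of players $\C_1$ shares with $\C$.
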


\begin{proof}
Suppose $x_j\in\cO_{i_j}$ for $j=1,\ldots, l$ and let $I_r=:\{i_j\,:\, j=1,\ldots, l-1\}\subseteq I_A$. Note that 
\[|\cO_{\hat{i}}\cap\X_r|=\begin{cases}
1\quad&\text{if }i\in I_r\\
0\quad&\text{if }i\notin I_r
\end{cases}
\]
implies
\[
k_i^{\p}=\begin{cases}
k_i-1\quad&\text{if }i\in I_r\\
k_i\quad&\text{if }i\notin I_r
\end{cases}\,.\]

Our proof relies on the following claim.

\begin{claim}The item $x_l$ is assigned to player $i_l$ in every optimal allocations of market $B$.
\end{claim}

\begin{proof}
Let $\bcO^B$ be the restriction of $\bcO$ to market $B$ given by $\cO_i^B:=\cO_i\cap\X_B$ for all $i\in I_B$. The allocation $\bcO^B$ is an optimal allocation in market $B$ and $x_l$ is in $\cO_{i_l}^B$. Thus the item $x_l$ is legal to $i_l$ in market $B$. Let $G_B$ be the legality graph for $B$ corresponding  to allocation $\bcO^B$. 

Since $x_l$ cannot be legal to players other than $i_{l-1}$ and $i_l$ in market $B$ by Remark~\ref{legality in B implies legality in A}, it remains to show that $x_l$ is not legal to $i_{l-1}$ in market $B$.

Suppose for contradiction that $x_l$ is legal to $i_{l-1}$ in market $B$.
By Corollary~\ref{cycle of items allocated to different players}, there is a uniquely assigned cycle $\C_y$ 
\[x_l\to y_1\to y_2\to\ldots\to y_{l^{\p}-1}\to y_{l^{\p}}=x_l\] in $G_B$ that contains $x_l$. Suppose $y_t\in\cO_{i_{y,t}}^B$ for $t=1,\ldots, l^{\p}$. By Corollary $\ref{submarket implies subgraph}$, $\C_y$ is also a cycle in the legality graph $G$. If $l$ is even, then $\C_y$ is a removable cycle in market $A$, leading to a contradiction. So $l$ must be odd. Let 
\[t^*:=\min\{t:t=1,\ldots, l^{\p},\,i_{y,t}=i_j\text{ for some }j=1,\ldots, l\}\,.\]
Note that since $y_{l^{\p}}=x_l$, we have $i_{y,l^{\p}}=i_l$. So, the definition of $t^*$ is valid. Suppose $i_{y,t^*}=i_{j^*}$ for some $ j^*=1,\ldots, l$. Then, the edge $y_{t^*}\to x_{j^*+1}$ exists in $G$ and so the cycle $\C_y^*$
\[x_l\to y_1\to y_2\to\ldots\to y_{t^*}\to x_{j^*+1}\to x_{j^*+2}\to\ldots\to x_{l-1}\to x_l\] exists in $G$.  By definition of $t^*$,  the cycle $\C_y^*$ is a uniquely assigned cycle.
If $\C_y^*$ has odd length, then the cycles $\C$ and $\C_y^*$ form an odd cycle pair in $G$ with $r=l-j^*-1=l^{\p}-t^*-1$. If $\C_y^*$ has even length, then $\C_y^*$ is a removable cycle in $G$. So in both cases, we reach a contradiction.
\end{proof}

Consider a player $\hat{i}\in I_A$ and let us show that any bundle in demand for player $\hat{i}$ with respect to the prices $\vp$  can be extended to a legal allocation in market $A$. We consider the following three cases.

\textbf{Case $\hat{i}=i_{\hat{j}}\in I_r$ for some $\hat{j}=1,\ldots, l-1$ and $\hat{j}$ is even.}
This case is analogous to the first case in the proof of Lemma~\ref{prices of removable set of type III}.

\textbf{Case  $\hat{i}=i_{\hat{j}}\in I_r$ for some $\hat{j}=1,\ldots, l-1$  and $\hat{j}$ is odd.}
We use the same notation as in the second case in the proof of Lemma~\ref{prices of removable set of type III}. The argument is the same except that we no longer have $x_l\in\X_r$. By the claim above we have $x_l\in\cO_{i_l}^B$. So, similar to the definition of $\bcO^A$ in the second case in the proof of Lemma~\ref{prices of removable set of type III}, we can define a legal allocation in market $A$ as
 \[
\cO_i^A:=\begin{cases}
\cO_i^B\cup\{x_{j+1}\}\quad&\text{if }i\in I_r\text{ and }i=i_j\text{ for some }j=1,\ldots,l-1\\
\cO_{i_l}^B\setminus\{x_l\}\cup\{x_1\}\quad&\text{if }i= i_l\\
\cO_i^B\quad&\text{if }i\notin I_r\text{  and  }i\neq i_l\,.
\end{cases}
\]

\textbf{Case  $\hat{i}\notin I_r$}
This case is analogous to the third case in the proof of Lemma~\ref{prices of removable set of type III}.\end{proof}

Given a multi-demand market with a set $\X$ of $m$ items and a set $I$ of $n$ players. Suppose each player $i\in I$ is $k_i$-demand and has multi-demand valuation function $v_i:2^{\X}\to\real_{\ge0}^m$. Assume that the market has only two optimal allocations. We summarize the procedure for finding a dynamic pricing in a multi-demand market with two optimal allocations in Algorithm~\ref{algorithm for 2 optimal allocations}. The correctness of the algorithm is proved in the following theorem.

\begin{theorem}\label{correctness of algorithm for 2 optimal allocations}
A dynamic pricing exists in a multi-demand market with at most two optimal allocations. It can be computed using Algorithm~\ref{algorithm for 2 optimal allocations}.
\end{theorem}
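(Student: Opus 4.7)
The plan is to follow the strategy of Theorem~\ref{correctness of algorithm for 4 players}, inducting on the number of items $m$. The base case is when the simplified Market~$A$ of Section~\ref{Fine Price} is empty, which occurs exactly when the original market has a unique optimal allocation; in that case the rough prices $\vp^R$ produced by Algorithm~\ref{rough price} already constitute a dynamic pricing. For the inductive step I assume the original market has exactly two optimal allocations. I first compute $\vp^R$ and pass to Market~$A = (\X^2, I_2, (\hat{k}_i)_{i\in I_2}, \legal)$. Since optimal allocations of the original market correspond bijectively to optimal allocations of Market~$A$ (extended by the forced assignment of each $\R_i$ to player $i$), Market~$A$ also has exactly two optimal allocations; in particular every item of $A$ is legal to at least two players (by construction of $\X^2$) and at most two players (since a third would yield a third optimal allocation). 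By Definition~\ref{definition fine prices} and Section~\ref{Fine Price}, it suffices to produce a dynamic pricing $\vp^F$ of Market~$A$ with entries strictly between $0$ and $\Delta$.

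The structural heart of the proof is to show that Market~$A$ admits a removable set of type~III or type~IV. Lemma~\ref{cycle of items allocated to different players} guarantees that the legality graph $G$ has at least one uniquely assigned cycle, and I will proceed by the following case analysis:
\begin{enumerate}
\item If some uniquely assigned cycle of $G$ has even length, its items form a removable set of type~III directly.
\item Otherwise every uniquely assigned cycle of $G$ is odd. If additionally $G$ contains an odd cycle pair, then Lemma~\ref{odd cycle pair implies removable cycle} furnishes a removable set of type~III in the legality graph associated with a reallocated optimal allocation.
\item If neither an even uniquely assigned cycle nor an odd cycle pair exists in $G$, then Definition~\ref{removable set of type IV definition} applied to any uniquely assigned odd cycle of $G$ gives a removable set of type~IV.
\end{enumerate}
Thus in every case I obtain a removable set $\X_r \subseteq \X_A$.

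With $\X_r$ identified, I apply the induction hypothesis to the submarket $B$ induced by the relevant legal allocation on items $\X_A \setminus \X_r$, which has strictly fewer items and still at most two optimal allocations, yielding a dynamic pricing $\vp^B$; by rescaling I may assume $\mathbf{0} < \vp^B < \Delta/2$ so that legal items enjoy positive utility. I then extend $\vp^B$ to $\vp^F$ on all of $\X_A$ by keeping $\vp^B$ on $\X_A\setminus\X_r$ and, following Lemma~\ref{prices of removable set of type III} or Lemma~\ref{prices of removable set of type IV}, assigning prices in $(\Delta/2, \Delta)$ to the odd-indexed items of the cycle and prices in $(0, \min_{x\in\X_A\setminus\X_r} p^B_x)$ to the even-indexed ones. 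These lemmas confirm that $\vp^F$ is a dynamic pricing of Market~$A$ lying in the required range, so combining with the rough prices as in Section~\ref{Fine Price} via $p^D_x = p^R_x + p^F_x$ on $\X^2$ and $p^D_x = p^R_x$ otherwise yields the desired dynamic pricing of the original market.

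The principal obstacle is the case analysis establishing the removable-set dichotomy. The odd cycle pair branch in particular demands care: one must verify that reallocating along one of the two odd cycles preserves optimality (Remark~\ref{reallocation along cycle}) and produces a genuine uniquely assigned cycle of even length in the new legality graph, precisely the content of Lemma~\ref{odd cycle pair implies removable cycle}. The type~IV branch also relies on the nontrivial auxiliary claim, embedded in Lemma~\ref{prices of removable set of type IV}, that the residual item $x_l$ of a uniquely assigned odd cycle is legal to only its assigned player within the submarket~$B$; this is exactly the place where the hypotheses ``no even uniquely assigned cycle'' and ``no odd cycle pair'' are used.
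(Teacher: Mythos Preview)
Your proposal is correct and follows essentially the same route as the paper's proof: induct on the number of items, pass to the simplified market $A$, locate a removable set of type~III or~IV via exactly the three-way case analysis you describe (even uniquely assigned cycle; odd cycle pair handled by Lemma~\ref{odd cycle pair implies removable cycle}; otherwise type~IV), recurse on the submarket $B$, and extend the resulting $\vp^B$ via Lemmas~\ref{prices of removable set of type III} and~\ref{prices of removable set of type IV}. The only cosmetic difference is the base case---the paper anchors the induction at markets with at most three items (handled by~\cite{Berger2020}) rather than at an empty simplified market---and the paper, like you, uses without separate proof that the submarket $B$ again has at most two optimal allocations.
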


\begin{proof}
We proceed by induction on the number of items $m$ in market. As in Theorem~\ref{correctness of algorithm for 4 players}, the base cases are markets with at most three items. From the induction hypothesis, we may assume that a dynamic pricing can be found in any multi-demand market with less than $m$ items and at most two optimal allocations.

We first compute the rough prices as in Algorithm~\ref{rough price} and let market $A:=(\X^2,I_2,(\hat{k}_i)_{i\in I_2},\legal)$ be the induced simplified market described in Section~\ref{Fine Price}. We now show that the price vector $\vp^F$ defined in line~\ref{lst:line:2 opt. allocs. alg. p^F def} of Algorithm~\ref{algorithm for 2 optimal allocations} is a dynamic pricing in market $A$.

Let $G$ be the legality graph with respect to some legal allocation $\bcO$ of market $A$. By Lemma~\ref{cycle of items allocated to different players} and Definition~\ref{removable set of type IV definition}, we know that one of the following three situations takes place: $G$ has a removable cycle, $G$ has an odd cycle pair, or there is a removable set of type IV in market $A$.

If there does not exist a removable cycle in $G$ and there is an odd cycle pair in $G$, which corresponds to line~\ref{lst:line:2 opt. allocs. alg. check odd cycle pair} in Algorithm~\ref{algorithm for 2 optimal allocations}, we can reallocate items with respect to one of the odd cycles to obtain a new legality graph. Lemma~\ref{odd cycle pair implies removable cycle} guarantees that the new graph contains a removable cycle. So, after running lines~\ref{lst:line:2 opt. allocs. alg. check odd cycle pair}-\ref{lst:line:2 opt. allocs. alg. realloc wrt. odd cycle pair} of Algorithm~\ref{algorithm for 2 optimal allocations}, we have that either $G$ contains a removable cycle, which means that market $A$ has a removable set of type III, or market $A$ has a removable set of type IV.

Then, we can define a removable set $\X_r$ and remove it from market $A$ to obtain a submarket $B$, whose dynamic pricing $\vp^B$ exists by induction hypothesis. With the valuation functions defined as in line~\ref{lst:line:2 opt. allocs. alg. valuation def} of Algorithm~\ref{algorithm for 2 optimal allocations}, the dynamic pricing of market $B$ must be in the range $\mathbf{0}<\vp^B<\Delta/2$ in order to yield positive utility for all items that are legal to a given player. By alternately assigning ``high'' and ``low'' prices to items in $\X_r$, Lemmas~\ref{prices of removable set of type III} and \ref{prices of removable set of type IV} ensure that the prices $\vp^F$ defined in line~\ref{lst:line:2 opt. allocs. alg. p^F def} of Algorithm~\ref{algorithm for 2 optimal allocations} is a dynamic pricing in market $A$.

Finally, since $\mathbf{0}<\vp^F\le\Delta/2<\Delta$ by Definition~\ref{definition fine prices} of fine prices, $\vp^F$ is a vector of fine prices of the original market. From Section~\ref{Fine Price}, we know that the pricing $\vp^D$ defined in line~\ref{lst:line:2 opt. allocs. alg. p^D def} is a desired dynamic pricing.
\end{proof}

\begin{algorithm}[H]
\caption{Finding dynamic pricing in multi-demand markets with at most two optimal allocations.}\label{algorithm for 2 optimal allocations}
\begin{algorithmic}[1]
\State Define $\vp^R$ to be the rough prices of the market.
\State Define $\Delta>0$ as in Section $\ref{Fine Price}$.
\State Define $\X^2:=\X\setminus\left(\bigcup\limits_{i\in I}\R_i\right)$, $I_2:=\{i\in I:\R_i\neq\K_i\}$, and $\hat{k}_i:=k_i-|\R_i|$ for all $i\in I_2$.
\State Define market $A=(\X^2,I_2,(\hat{k}_i)_{i\in I_2},\legal)$.
\State Define
$$\hat{v}_i(x):=\begin{cases}
\Delta/2\quad&\text{if $x$ is legal to $i$}\\
0\quad&\text{if $x$ is not legal to $i$}
\end{cases}$$
for all $i\in I_2$ and $x\in\X^2$.\label{lst:line:2 opt. allocs. alg. valuation def}
\State Define $G$ to be the legality graph with respect to a legal allocation $\bcO$ of market $A$.
\If{there does not exist a removable cycle in $G$ and there is an odd cycle pair $(\C_1,\C_2)$ in $G$}\label{lst:line:2 opt. allocs. alg. check odd cycle pair}
    \State Redefine $\bcO$ to be the reallocation of $\bcO$ with respect to cycle $\C_2$.
    \State Redefine $G$ to be the legality graph with respect to the new legal allocation $\bcO$.\label{lst:line:2 opt. allocs. alg. realloc wrt. odd cycle pair}
\EndIf
\If{there is a removable cycle $x_0\to x_1\ldots x_{l-1}\to x_l=x_0$ in $G$.}
    \State Define $\X_r:=\{x_j:1\le j\le l\}$.
\Else
    \State Define $x_0\to x_1\ldots x_l\to x_{l+1}=x_0$ to be a cycle in $G$ such that $\X_r=\{x_j:1\le j\le l\}$ is a removable set of type IV.
\EndIf
\State Define $\vp^B$ to be a dynamic pricing in the submarket $B$ of $A$ induced by $\bcO$ with items $\X^2\setminus\X_r$.
\State Define $\hat{p}:=\min\{p_x^B:x\in\X^2\setminus\X_r\}$.
\State Define the fine prices $\vp^F$ by $p_x^F:=\begin{cases}
\hat{p}/2\quad&\text{if $x=x_j$ for some $1\le j\le l$, $j$ even}\\
p_x^B\quad&\text{if }x\in\X^2\setminus\X_r\\
\Delta/2\quad&\text{if $x=x_j$ for some $1\le j\le l$, $j$ odd}
\end{cases}$.\label{lst:line:2 opt. allocs. alg. p^F def}
\State Define the dynamic pricing $\vp^D$ by $p_x^D:=\begin{cases}
p_x^R+p_x^F\quad&\text{if }x\in\X^2\\
p_x^R\quad&\text{if }x\in\X\setminus\X^2
\end{cases}$.\label{lst:line:2 opt. allocs. alg. p^D def}
\State \Return $\vp^D$
\end{algorithmic}
\end{algorithm}

\section{Market with Tri-Demand Players}\label{Market with 3-Demand Players}

We consider a saturated market $A=(\X_A,I_A,\mathbf{k},\legal\})$. Let us again use the usual notation for the number of items and players in the market $m:=|\X_A|$ and $n:=|I_A|$. Following the ideas in Section~\ref{Fine Price}, it is enough to consider markets with the following valuation functions for $i\in I_A$, $x\in\X_A$
\[v_i(x)=\begin{cases}
1\quad\text{if }x\in\legal(i)\\
0\quad\text{if }x\notin\legal(i)\,.
\end{cases}\]

 In this section, we focus on the special case where the demand  of every player is at most three, i.e.\ $k_i$ is in $\{1,2,3\}$ for all players $i\in I_A$. We define the following property:
\begin{equation}\tag{P}
\text{An item $x\in\X_A$ is legal to a player $i\in I_A$ if and only if $v_i(x)=1$.}\label{legality-valuation equivalence}
\end{equation}
Notice that market $A$ as a whole satisfies property~\eqref{legality-valuation equivalence}, but since the converse of Remark~\ref{legality in B implies legality in A} is not necessarily true, it is not always the case that every submarket of $A$ satisfies property~\eqref{legality-valuation equivalence}. For saturated tri-demand markets with property~\eqref{legality-valuation equivalence}, instead of simply showing the existence of a dynamic pricing, we prove a stronger statement: For every item $x\in\X_A$, there exists a dynamic pricing in market $A$ such that item $x$ is priced ``the lowest''.

\begin{definition}
Given item $x\in\X_A$ and dynamic pricing $\vp\in\real_{>0}^m$ of market $A$, we say that $\vp$ is a dynamic pricing fixed at $x$ if $p_y>p_x$ for all $y\in\X_A\setminus\{x\}$.
\end{definition}

In this section, we work extensively with ordering of items in terms of prices, so we introduce the following notation. Given a price vector $\vp\in\real^m$, for two disjoint subsets $\X_1,\X_2\subseteq\X_A$, we write $\X_1<_p\X_2$ to indicate that for all $x\in\X_1$ and $y\in\X_2$ we have $p_x<p_y$.

We often need to define the lowest-priced items in submarkets of market $A$ and use them to combine dynamic pricings in different submarkets in a meaningful way. To ensure that such items are well-defined, we make the
assumption that given a dynamic pricing in a submarket of $A$, no two items have the same price. This assumption is
valid because of the following reasoning. Consider a dynamic pricing $\vp$ where the price $p_x$ of an item $x$ is the same for at least two different items in the market. For all players and all items $y$ with strictly lower utility than $x$, we define a positive $\varepsilon$ such that $\varepsilon$ is less than the smallest difference in utility and $p_x+\epsilon$ do not coincide with any existing price in the market. Then, we
redefine the price of $x$ to be $p_x+\varepsilon$. Notice that any item that used to have higher (lower) utility
than $x$ still has higher (lower) utility. Consider a player $i$ for whom some item $y$ previously had
the same utility as $x$. This means that $i$ was indifferent between $x$ and $y$ and $i$ is free to pick either
item. So, increasing the price of $x$ by $\varepsilon$ yields another dynamic pricing. We can repeat this process until every item in the market has a different price.

We would like to use induction on the number of players plus the number of items in the markets. We first prove two base cases, namely a market with two players and a market with two item.

\begin{lemma}\label{base case of 2 players}
If $n=2$ then for every $x\in\X_A$ there exists a dynamic pricing fixed at $x$.
\end{lemma}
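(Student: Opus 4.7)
\noindent\emph{Plan.} With $n=2$ and the standing assumption (carried from Section~\ref{Legality Graph}) that every item of $\X_A$ is legal to at least two players, every item is legal to both players. This forces $\R_1=\R_2=\varnothing$ and $v_1(y)=v_2(y)=1$ for every $y\in\X_A$; together with saturation ($|\X_A|=k_1+k_2$), this makes every partition of $\X_A$ into a $k_1$-set and its $k_2$-set complement a legal allocation of market $A$.

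Given the prescribed item $x$, I would set the price vector $\vp\in\real_{>0}^m$ to have $p_x$ strictly smaller than all other entries and every entry strictly less than $1$; for concreteness, take $p_x:=1/(2m)$ and give the remaining items distinct prices in $(1/(2m),1/2)$. By construction $\vp$ is fixed at $x$.

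To verify that $\vp$ is a dynamic pricing, I would compute, for any player $i\in\{1,2\}$ and any bundle $B\subseteq\X_A$,
\[
u_i(B,\vp) \;=\; \min(|B|,k_i) - \langle\vp,\chi(B)\rangle,
\]
which is legitimate because every item has value $1$ for player $i$. Because every price lies in $(0,1)$, the marginal utility of adding an item to a bundle of size $<k_i$ is strictly positive, while the marginal utility of adding an item to a bundle of size $\ge k_i$ is strictly negative. Hence every bundle in demand for player $i$ has size exactly $k_i$, and such a bundle extends to a legal allocation by assigning the remaining $k_j$ items to the other player $j$.

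I do not foresee a real obstacle; the only point requiring care is the preliminary reduction to ``every item is legal to both players,'' which is immediate from the standing assumption that every item is legal to at least two players.
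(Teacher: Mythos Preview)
Your reduction to ``every item is legal to both players'' is not justified, and this is a genuine gap. The assumption that every item is legal to at least two players is stated in Section~\ref{Legality Graph} for the legality-graph approach of Sections~\ref{Multi-Demand Market with Four Players} and~\ref{Multi-Demand Market with Two Optimal Allocations}, but Section~\ref{Market with 3-Demand Players} sets things up afresh: the only standing hypotheses on $A$ are saturation, $k_i\le 3$, and property~\eqref{legality-valuation equivalence}. More to the point, Lemma~\ref{base case of 2 players} functions as the base case of an induction whose hypothesis (stated right after Corollary~\ref{base case of 2 items}) is applied to \emph{arbitrary} submarkets satisfying~\eqref{legality-valuation equivalence}. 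Those submarkets need not have every item legal to two players. For instance, the markets $B'$ and $C'$ of Lemma~\ref{B' and C' satisfies legal iff valuation = 1 property} (the artificial-player construction) and the market $B$ in the second case of Lemma~\ref{unit-demand, pricing fixed at x} (remove a unit-demand player $\hat{i}$ and one item) can all contain items legal to a single remaining player. When the recursion bottoms out at $n=2$, you must handle such markets.

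The paper's proof does exactly this: it keeps $\legal(1)\setminus\legal(2)$ and $\legal(2)\setminus\legal(1)$ as potentially nonempty sets and orders prices so that $\{x\}<_p\X_A\setminus\{x\}$, and both $\legal(1)\setminus\legal(2)$ and $\legal(2)\setminus\legal(1)$ sit below $(\legal(1)\cap\legal(2))\setminus\{x\}$. Then each player $i$ is forced to pick all of $\legal(i)\setminus\legal(j)$ (there are at most $k_i$ of them, with equality only when $\legal(1)\cap\legal(2)=\varnothing$ by property~\eqref{legality-valuation equivalence}), possibly $x$, and then fills up from the common pool; the complement is always legal for the other player. Your argument is a correct special case of this when $\legal(1)=\legal(2)=\X_A$, but it does not establish the lemma in the generality the induction needs.
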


\begin{proof}
It is enough to select prices $\vp$ such that $\legal(1)\setminus\legal(2)<_p\left(\legal(1)\cap\legal(2)\right)\setminus \{x\}$ and
$\legal(2)\setminus\legal(1)<_p\left(\legal(1)\cap\legal(2)\right)\setminus \{x\}$ and $\{x\}<_p\X_A\setminus \{x\}$. 
\end{proof}

\begin{corollary}\label{base case of 2 items}
If $m=2$ then for every $x\in\X_A$ there exists a dynamic pricing fixed at $x$.
\end{corollary}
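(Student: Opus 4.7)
The plan is to reduce this corollary directly to Lemma~\ref{base case of 2 players} by using the saturation assumption to force $n=2$. Recall that a saturated market satisfies $|I_A|\ge 2$ and $|\X_A|=\sum_{i\in I_A}k_i$. Combined with the hypothesis $m=|\X_A|=2$ and the standing assumption that $k_i\ge 1$ for every player $i\in I_A$, the equality $\sum_{i\in I_A}k_i=2$ forces $n=|I_A|=2$ and $k_1=k_2=1$.

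Once $n=2$ is established, the hypothesis of Lemma~\ref{base case of 2 players} is satisfied, so for every $x\in\X_A$ there already exists a dynamic pricing fixed at $x$, which is exactly the conclusion we need. No further construction is required.

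There is no real obstacle: the only thing to verify is that our assumptions forbid the degenerate possibility of a single player owning both items (which would violate $|I_A|\ge 2$) or a player with $k_i\ge 2$ (which would leave no items for any other player and again violate $|I_A|\ge 2$). Both are ruled out by the definition of ``saturated'' in Definition~\ref{simplified multi-demand market}, so the corollary is indeed an immediate consequence of Lemma~\ref{base case of 2 players}.
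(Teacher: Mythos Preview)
Your proof is correct and matches the paper's own proof, which simply notes that $m=2$ forces $n\le 2$ and invokes Lemma~\ref{base case of 2 players}. Your version is in fact a bit more careful, making explicit that saturation also gives $n\ge 2$ and hence $n=2$ exactly.
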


\begin{proof}
This is immediate from Lemma~\ref{base case of 2 players} since $m=2$ implies that $n\le2$.
\end{proof}

With the two base cases proved, for the rest of this section, we make the following assumption. In any submarket $B=(\X_B,I_B,(k_i)_{i\in I_B},\legal)$ of $A$ with $\X_B\subset\X_A$ or $I_B\subset I_A$, such that property~\eqref{legality-valuation equivalence} holds for $B$, there exists a dynamic pricing fixed at $x$ for any $x\in\X_B$. We will refer to this assumption as the induction hypothesis. 
 
In this section, we adopt a slightly different notion of submarket, which we use for the rest of this paper. 

\begin{definition}
A submarket of market $A$ is a market $B=(\X_B,I_B,(k_i)_{i\in B},\legal)$ such that $\X_B\subseteq\X_A$ and $I_B\subseteq I_A$.
\end{definition}

We also extend the notion of the legality function to subsets of players as follows.

\begin{definition}
We define $\legal:2^{I_A}\times2^{\X_A}\to2^{\X_A}$ to be the function given by 
\[\legal(I,\X):=\{x\in\X:x\text{ is legal to }i\text{ for some }i\in I\}\]
for all $I\subseteq I_A$ and $\X\subseteq\X_A$.
\end{definition}

For simplicity, when $I=\{i\}$ or $\X=\{x\}$ is a singleton, we can omit the set brackets and write $\legal(I,\X)=\legal(i,\X)$ or $\legal(I,\X)=\legal(I,x)$.
Notice that $\legal(i,\X_A)=\legal(i)$ for all $i\in I_A$. So, this new definition is indeed a generalization of the original legality function.

Note that by construction, the legality function $\legal$ satisfies the following properties:
\begin{enumerate}[ref={\thedefinition.\arabic*}]
\item For all $I\subseteq I_A$ and $\X\subseteq\X_A$, we have $\legal(I,\X)\subseteq\X$.\label{L-prop.1}
\item For all $I_1$, $I_2\subseteq I_A$ and $\X_1,\X_2\subseteq\X_A$, we have \label{L-prop.2}
\[\legal(I_1\cup I_2,\X_1\cup \X_2)=\legal(I_1,\X_1)\cup\legal(I_1,\X_2)\cup\legal(I_2,\X_1)\cup\legal(I_2,\X_2)\,.\]
\item For all $I_1$, $I_2\subseteq I_A$, such that $I_1\subseteq I_2$, and $\X\subseteq \X_A$, we have $\legal(I_1,\X)\subseteq\legal(I_2,\X)$.\label{L-prop.3}
\item For $I\subseteq I_A$ and $\X_1$, $\X_2$, such that $\X_2\subseteq \X_A$, $\X_1\subseteq \X_2$, we have $\legal(I,\X_1)\subseteq\legal(I,\X_2)$.\label{L-prop.4}
\end{enumerate}

A \emph{submarket pair} consists of two related submarkets and can be viewed as a partition of market~$A$ with additional properties specified in Defintion~\ref{submarket pair definition}. 

\begin{definition}\label{submarket pair definition}
Suppose $B=(\X_B,I_B,(k_i)_{i\in I_B},\legal)$ and $C=(\X_C,I_C,(k_i)_{i\in I_C},\legal)$ are submarkets of $A$, such that the following conditions are satisfied.
\begin{enumerate}
\item $\X_B\cup\X_C=\X_A$, $\X_B\cap\X_C=\varnothing$, and $\X_B,\X_C\neq\varnothing$.\label{UL-cond.1}
\item $I_B\cup I_C=I_A$, $I_B\cap I_C=\varnothing$ and $I_B,I_C\neq\varnothing$.\label{UL-cond.2}
\item $\legal(I_B,\X_A)\subseteq\X_B$.\label{UL-cond.3}
\item $|\X_B|-1=\sum\limits_{i\in I_B}k_i$ and $|\X_C|+1=\sum\limits_{i\in I_C}k_i$.\label{UL-cond.4}
\item $|\legal(I_C,\X_B)|\ge2$ and for all $\hat{x}\subseteq\legal(I_C,\X_B)$, there is a legal allocation $\bcO$ in market $A$ such that $ \bigcup\limits_{i\in I_B}\cO_i=\X_B\setminus\{\hat{x}\}$ and $\bigcup\limits_{i\in I_C}\cO_i=\X_C\cup\{\hat{x}\}$\label{UL-cond.5}
\end{enumerate}
Then, we say that $(B,C)$ form a submarket pair in market $A$.
\end{definition}

By definition of a legal item, we know that any assignment of a single item can be extended to a legal allocation as long as legality is respected. We define such an assignment as a \emph{legal assignment}.

\begin{definition}\label{legal assignment}
A legal assignment of a subset $\X\subseteq\X_A$ is a way of assigning items in $\X$ to players $I_A$ so that the items received by each player are legal for that player. Such assignment is also called a legal assignment of size $\hat{m}$ when $\hat{m}=|\X|$. \end{definition}

We could think of a legal assignment as a vector $\mathbfcal{A}=(\mathcal{A}_i)_{i\in I_A}$ of disjoint subsets of $\X$, indexed by the players. When the number of items received by every player $i\in I_A$ does not exceed the demand $k_i$, we say that the legal assignment \emph{respects demand}. If there is a legal allocation $\bcO$ where $\mathcal{A}_i\subseteq\cO_i$ for all $i\in I_A$, we say the legal assignment is extendable. Otherwise, we say that the legal assignment is not extendable.

Note that a legal assignment of more than one item is not necessarily extendable. Let us consider a legal assignment of size two that is not extendable. 

\begin{lemma}\label{legal assignment of size 2 non extendable implies submarket pair}
Suppose there exist two items $x_1,x_2\in\X_A$ such that some legal assignment of $\{x_1,x_2\}$ that respects demand is not extendable, then there is a submarket pair $(B,C)$ in market $A$.
\end{lemma}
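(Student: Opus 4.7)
The plan is to extract the submarket pair from a Hall-type deficiency analysis on the bipartite legality graph. View legal allocations as perfect $b$-matchings in the bipartite graph with one side $I_A$ (capacity $k_i$ at vertex $i$) and the other side $\X_A$ (each vertex of capacity $1$); saturation of $A$ means a legal allocation is exactly a perfect $b$-matching. Let $(i_1,i_2)$ be the players receiving $(x_1,x_2)$ in the given non-extendable legal assignment.

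First I would apply the defect form of Hall's theorem to the reduced bipartite graph with item set $\X_A\setminus\{x_1,x_2\}$ and modified demands $k'_i:=k_i-|\{j\in\{1,2\}:i_j=i\}|$. Non-extendability produces a set $I'\subseteq I_A$ with $|\legal(I',\X_A\setminus\{x_1,x_2\})|<\sum_{i\in I'}k'_i$. Writing $\alpha:=|\{x_1,x_2\}\cap\legal(I',\X_A)|$, $\beta:=\sum_{i\in I'}(k_i-k'_i)$, and $d(I'):=|\legal(I',\X_A)|-\sum_{i\in I'}k_i$, the Hall violation rearranges into the deficiency bound $d(I')\le\alpha-\beta-1$.

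Next I would combine this with two further consequences: $d(I')\ge 0$ always, because a legal allocation exists in $A$ and hence Hall's condition holds in the original market; and whenever $x_j\in\legal(I',\X_A)$ but $i_j\notin I'$, the legal allocation witnessing $x_j\to i_j$ places an $I'$-legal item into the $I_A\setminus I'$ side of the partition, forcing $d(I')\ge 1$. A short case analysis over the possibilities $(\alpha,\beta)\in\{0,1,2\}^2$ with $\alpha-\beta\ge 1$ shows that only $(\alpha,\beta)=(2,0)$ with $d(I')=1$ is consistent, so $i_1,i_2\notin I'$ and $\{x_1,x_2\}\subseteq\legal(I',\X_A)$. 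This case analysis is the main obstacle, since one must juxtapose the Hall-deficiency upper bound against the legality-forcing lower bound to rule out the sub-cases $(\alpha,\beta)\in\{(1,0),(2,1)\}$.

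Finally, I set $I_B:=I'$, $\X_B:=\legal(I_B,\X_A)$, $I_C:=I_A\setminus I_B$, $\X_C:=\X_A\setminus\X_B$. Conditions (1)--(4) of Definition~\ref{submarket pair definition} follow immediately from $d(I_B)=1$, saturation of $A$, $I_B\neq\varnothing$ (else $d(I_B)=0$), and $\{i_1,i_2\}\subseteq I_C\neq\varnothing$. For condition (5), since $x_1,x_2\in\X_B$ are legal to $i_1,i_2\in I_C$, both items lie in $\legal(I_C,\X_B)$, so $|\legal(I_C,\X_B)|\ge 2$. Given any $\hat x\in\legal(I_C,\X_B)$, choose $i_{\hat x}\in I_C$ to which $\hat x$ is legal and any legal allocation $\bcO$ with $\hat x\in\cO_{i_{\hat x}}$; because $\bigcup\limits_{i\in I_B}\cO_i\subseteq\X_B$ has size $\sum_{i\in I_B}k_i=|\X_B|-1$, the unique element of $\X_B$ missing from this union must be $\hat x$, yielding $\bigcup\limits_{i\in I_B}\cO_i=\X_B\setminus\{\hat x\}$ and $\bigcup\limits_{i\in I_C}\cO_i=\X_C\cup\{\hat x\}$ as required.
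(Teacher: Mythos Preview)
Your proof is correct and follows essentially the same route as the paper: apply Hall's theorem to the bipartite legality graph after fixing the assignment $x_1\to i_1$, $x_2\to i_2$, argue that the violating set $I'$ cannot contain $i_1$ or $i_2$, and take $\X_B=\legal(I',\X_A)$. Your $(\alpha,\beta,d)$ bookkeeping is a tidy repackaging of the paper's case split on which of $i_1,i_2$ lie in $I_B$, and you are in fact more explicit than the paper in pinning down $d(I')=1$ and in verifying condition~(5), which the paper leaves as ``straightforward to check.''
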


\begin{proof}
Suppose $x_1$ and $x_2$ are assigned to player $i_1$ and $i_2$, respectively. Note that it is possible that $i_1=i_2$. Consider a bipartite graph $H$, with the two parts being the set of items $\X_A$ and the set of players $I_A$. For every player $i\in I_A$, the player $i$ is represented by $k_i$ copies in $H$. There is an edge between a player $i$ and an item $x$ if and only if $x$ is legal to $i$. Thus, every perfect matching in $H$ corresponds to a legal allocation, and every legal allocation can be represented by some perfect matching in $H$.

 Let us remove $x_1$, $x_2$ and one copy of each of $i_1$, $i_2$  from $H$ to get a subgraph $H^*$. Since the above assignment of $x_1$ and $x_2$ is not extendable,  we have that $H^*$ has no perfect matchings. Thus using Hall's marriage theorem, there exists $I_B\subseteq I_A$ such that $|\X^*|<|I_B^*|$, where $\X^*:=\legal(I_B,\X_A\setminus\{x_1,x_2\})$ and $|I_B^*|$ denotes the total number of copies of players in $I_B$ that are in $H^*$. That is
\[|I_B^*|=\begin{cases}
\sum\limits_{i\in I_B}k_i\quad&\text{if }i_1,i_2\notin I_B\\
\sum\limits_{i\in I_B}k_i-1\quad&\text{if exactly one of $i_1,i_2$ is in $I_B$}\\
\sum\limits_{i\in I_B}k_i-2\quad&\text{if }i_1\in I_B\text{ and }i_2\in I_B\,.
\end{cases}\]
We can show that the last two cases are not possible. Suppose $i_1\in I_B$ and $i_2\notin I_B$. Then,
\begin{equation}\label{i_1 in I_B leads to contradiction}
|\legal(I_B,\X_A\setminus\{x_2\})|=|\X^*\cup\{x_1\}|=|\X^*|+1<|I_B^*|+1=\sum\limits_{i\in I_B}k_i\,.
\end{equation}
This means that there is no perfect matching in $H$ when $x_2$ is assigned to $i_2$. This is a contradiction since a legal assignment of a single item is always extendable. The other cases of $i_1\notin I_B,i_2\in I_B$ and $i_1,i_2\in I_B$ are similar. Thus, the set $I_B$ is a strict subset of $I_A$.

Define $\X_B=\X^*\cup\{x_1,x_2\}$, $\X_C=\X_A\setminus\X_B$ and $I_C=I_A\setminus I_B$. Let $B=(\X_B,I_B,(k_i)_{i\in I_B},\legal)$ and $C=(\X_C,I_C,(k_i)_{i\in I_C},\legal)$ so that $B$ and $C$ are submarkets of market A. Notice that $i_1$ is in $I_C$.  It is straightforward to check that $(B,C)$ forms a submarket pair.
\end{proof}

To make use of the property~\eqref{legality-valuation equivalence}, we would like the submarket $B$ in the above lemma to be inclusion-wise maximal in the following sense.

\begin{definition}\label{maximal submarket pair definition}
Let $(B,C)$ be a submarket pair of $A$ where $B$ has items $\X_B$. We say the pair $(B,C)$ is maximal if for all submarket pairs $(B^*,C^*)$ where market $B^*$ has items $\X_B^*$, we have that $\X_B\subseteq\X_B^*$ implies $\X_B=\X_B^*$.
\end{definition}

The next lemma reveals connections between maximal submarket pairs and the property~\eqref{legality-valuation equivalence}.

\begin{lemma}\label{B' and C' satisfies legal iff valuation = 1 property}
Suppose $(B,C)$ is a maximal submarket pair of $A$, with $B=(\X_B,I_B,(k_i)_{i\in I_B},\legal)$ and $C=(\X_C,I_C,(k_i)_{i\in I_C},\legal)$. Then the following holds.
\begin{enumerate}[ref={\thedefinition.\arabic*}]
\item Let $i^\p$ be an artificial unit-demand player such that $\legal(i^{\p})\coloneqq\legal(I_C,\X_B)$\footnote{Note that $\legal$ is a function that defines legality for players in market $A$, here we formally extend the definition of $\legal$ to one artificial player. So, $\legal(i^{\p})$ simply defines a $\{0,1\}$-valued valuation function on $\X_B$. $\legal$ provides no information about the legality of items with respect to market $B$.} and \[v_{i^\p}(x):=\begin{cases}
1\quad\text{if }x\in\legal(i^\p)\\
0\quad\text{if }x\notin\legal(i^\p)\,.
\end{cases}\]
Let $I_B^{\p}:=I_B\cup\{i^{\p}\}$. Then, the market $B^{\p}\coloneqq(\X_B,I_B^{\p},(k_i)_{i\in I_B^{\p}},\legal)$ satisfies property~\eqref{legality-valuation equivalence}.\label{legal in B' iff valuation is 1}
\item For all $x_1\in\legal(I_C,\X_B)$, market $C^{\p}:=(\X_C\cup\{x_1\},I_C,(k_i)_{i\in I_C},\legal)$ satisfies property~\eqref{legality-valuation equivalence}.\label{legal in C' iff valuation is 1}

\end{enumerate}
\end{lemma}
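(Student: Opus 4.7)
The proof splits into the two claims, and for each I verify both directions of property~\eqref{legality-valuation equivalence} separately. The ``only if'' implication in each case is essentially automatic once one identifies the maximum achievable social welfare: condition~\ref{UL-cond.5} of Definition~\ref{submarket pair definition}, applied with any $\hat{x}$ for market $B^{\p}$ and with $\hat{x}=x_1$ for market $C^{\p}$, produces restrictions of legal allocations of $A$ that witness social welfare $|\X_B|$ in $B^{\p}$ and $|\X_C|+1$ in $C^{\p}$, matching the respective total demands. Because valuations in these submarkets take values in $\{0,1\}$, every optimal allocation must assign each item to a player who values it~$1$, which immediately gives ``legal in the submarket implies $v_i(x)=1$''.

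For the reverse implication in part~\ref{legal in B' iff valuation is 1}, the case $i=i^{\p}$ follows directly from condition~\ref{UL-cond.5}. In the case $i\in I_B$ with $x\in\legal(i)$, I would start from any legal allocation $\bcO^A$ of $A$ with $x\in\cO_i^A$, observe via condition~\ref{UL-cond.3} that $\bigcup_{j\in I_B}\cO_j^A\subseteq\X_B$, and use the cardinality identity in condition~\ref{UL-cond.4} to conclude that exactly one item $x^{\ast}\in\X_B$ is assigned by $\bcO^A$ to some player in $I_C$; this $x^{\ast}$ lies automatically in $\legal(I_C,\X_B)=\legal(i^{\p})$, so reassigning it to $i^{\p}$ yields the required legal allocation in $B^{\p}$. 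Note that maximality of $(B,C)$ is not invoked in part~\ref{legal in B' iff valuation is 1}. For part~\ref{legal in C' iff valuation is 1}, the case $x=x_1$ is handled by the same type of count: any legal allocation of $A$ with $x_1\in\cO_i^A$ (which exists because $x_1\in\legal(i)$ and $i\in I_C$) automatically satisfies $\bigcup_{\ell\in I_C}\cO_\ell^A=\X_C\cup\{x_1\}$, so its restriction to $C^{\p}$ is legal.

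The genuine difficulty, and the step where the maximality of $(B,C)$ enters, is the case $x\in\X_C$ with $x\in\legal(i)$ in part~\ref{legal in C' iff valuation is 1}: here one needs a legal allocation of $A$ that simultaneously places $x$ onto $i$ and sends $x_1$ into $\bigcup_{\ell\in I_C}\cO_\ell^A$. Starting from any legal allocation of $A$ achieving $x\in\cO_i^A$, the cardinality argument produces a unique item $x_2\in\X_B$ on the $I_C$-side, and the task reduces to modifying the allocation so that $x_2=x_1$. I would argue by contradiction: if no such modification exists, then for every $j\in I_C$ with $x_1\in\legal(j)$ the size-two legal assignment that sends $x$ to $i$ and $x_1$ to $j$ respects demand yet is not extendable, and Lemma~\ref{legal assignment of size 2 non extendable implies submarket pair} yields an auxiliary submarket pair $(B^{\ast},C^{\ast})$ with $\{x,x_1\}\subseteq\X_{B^{\ast}}$ and $\{i,j\}\subseteq I_{C^{\ast}}$. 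The plan is to merge $(B,C)$ with this $(B^{\ast},C^{\ast})$ into a new submarket pair whose first coordinate strictly contains $\X_B$, contradicting the maximality of $(B,C)$. The main obstacle will be verifying all five conditions of Definition~\ref{submarket pair definition} for the merged pair; in particular, the cardinality equality~\ref{UL-cond.4} and the legality containment~\ref{UL-cond.3} require careful accounting that exploits the tightness supplied by condition~\ref{UL-cond.5} applied to both pairs together with the $0$-$1$ structure of valuations in the tri-demand setting.
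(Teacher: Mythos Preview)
Your proposal is essentially correct and follows the same route as the paper: the easy directions and the case $i=i^{\p}$ come from condition~\ref{UL-cond.5}, the case $i\in I_B$ comes from the counting argument via conditions~\ref{UL-cond.3} and~\ref{UL-cond.4}, and the hard case $x\in\X_C$ in part~\ref{legal in C' iff valuation is 1} is handled by merging the given pair with an auxiliary pair produced by Lemma~\ref{legal assignment of size 2 non extendable implies submarket pair}, then invoking maximality. The paper organizes the last step slightly differently: it first isolates an intermediate claim that \emph{every} legal assignment of $\{x_1,x_2\}$ with $x_2\in\X_C$ to players in $I_C$ that respects demand is extendable in $C^{\p}$, and only afterwards derives the implication $x\in\legal(i)\Rightarrow x$ legal to $i$ in $C^{\p}$. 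Your direct argument is equivalent.

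There is one point you gloss over. You assert that the assignment sending $x$ to $i$ and $x_1$ to $j$ ``respects demand'', but this can fail in the corner case where $x_1$ is legal in $I_C$ only to the player $i$ itself and $k_i=1$: then no choice of $j$ works and Lemma~\ref{legal assignment of size 2 non extendable implies submarket pair} cannot be invoked. The paper addresses exactly this by splitting off, in its final claim, the situation where $x_1$ is legal to a unique unit-demand player $i_1\in I_C$, and arguing separately for players $i_2\neq i_1$. You should include this case distinction (or note that in the intended application unit-demand players have already been eliminated). Beyond this, your plan for verifying the five conditions of Definition~\ref{submarket pair definition} for the merged pair is on target; the key non-obvious step, which the paper works out, is that the inequalities $|\X_B\cup\hat\X_B|\ge\sum_{i\in I_B\cup\hat I_B}k_i$ and $|\X_B\cap\hat\X_B|\ge\sum_{i\in I_B\cap\hat I_B}k_i$ are both \emph{strict}, which is forced by the presence of $x_1\in\legal(I_C^*,\X_B\cap\hat\X_B)$.
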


\begin{proof}
By~\ref{UL-cond.4} in Definition~\ref{submarket pair definition}, both $B^{\p}$ and $C^{\p}$ are saturated markets. For any allocation $\bcO^B$ in market $B^{\p}$ and allocation $\bcO^C$ in market $C^{\p}$, since the valuation function takes a value of either 0 or 1, we have an upper bounds for the social welfare
$$\SW(\bcO^B)=\sum\limits_{i\in I_B^{\p}}v_i(\cO_i^B)\le\sum\limits_{i\in I_B^{\p}}k_i\quad\text{and}\quad\SW(\bcO^C)=\sum\limits_{i\in I_C}v_i(\cO_i^C)\le\sum\limits_{i\in I_C}k_i\,.$$

\begin{claim}\label{legal in A implies legal in B'}
For every player $i\in I_B$ and item $x\in\X_B$, if $x\in\legal(i)$ then $x$ is legal to $i$ in market~$B^{\p}$.
\end{claim}
\begin{proof}
Consider a player $i^*\in I_B$ and an item $x\in\legal(i^*)$. Since $x$ is legal to $i^*$ in market $A$, there exists an allocation $\bcO^A$ in market $A$ such that $x\in\cO_{i^*}^A$. Thus  by~\ref{UL-cond.3}  in Definition~\ref{submarket pair definition} we have
\[\bigcup\limits_{i\in I_B}\cO_i^A\subseteq\legal(I_B,\X_A)\subseteq\X_B\]
and by~\ref{UL-cond.4}  in Definition~\ref{submarket pair definition} we have
\[|\bigcup\limits_{i\in I_B}\cO_i^A|=\sum\limits_{i\in I_B}k_i=|\X_B|-1.\]
So,
\[|\X_B\setminus\bigcup\limits_{i\in I_B}\cO_i^A|=1\,.\]

Let $x^{\p}$ be the single item in $\X_B\setminus\bigcup\limits_{i\in I_B}\cO_i^A$, then $x^{\p}$ must be assigned to some player in $I_C$ by $\bcO^A$. So, we have $x^{\p}\in\legal(I_C,\X_B)=\legal(i^{\p})$. Let $\bcO^B$ be the allocation such that
$$\cO_i^B:=\begin{cases}
\cO_i^A\quad&\text{if }i\in I_B\\
\{x^{\p}\}\quad&\text{if }i=i^{\p}\,.
\end{cases}$$
Note that $x\in\cO_{i^*}^B$ and $\bcO^B$ is an optimal allocation in market $B^{\p}$ since $\SW(\bcO^B)=\sum\limits_{i\in I_B^{\p}}k_i$. 
\end{proof}

\begin{claim} For every player $i\in I_B$ and item $x\in\X_B$, if  $x$ is legal to $i$ in market $B^{\p}$ then $x\in\legal(i)$.
\end{claim}
\begin{proof}
Consider a player $i^*\in I_B$ and item $x\in\X_B$. Suppose $x$ is legal to $i^*$ in market $B^{\p}$, then there exists an optimal allocation $\bcO^B$ in market $B^{\p}$ such that $x\in\cO_{i^*}^B$. Let $x^{\p}$ be the single item in $\cO_{i^{\p}}^B$, then $x^{\p}\in\legal(i^{\p})=\legal(I_C,\X_B)$. By property~\ref{UL-cond.5}  in Definition~\ref{submarket pair definition}, there exists a legal allocation~$\bcO^A$ in market $A$ such that \[\bigcup\limits_{i\in I_C}\cO_i^A=\X_C\cup\{x^{\p}\}\,.\] The allocation given by $\cO_i^B$ for every player $i\in I_B$ and $\cO_i^A$ for every player $i\in I_C$ is a legal allocation in market $A$. So,  we have $x\in\legal(i^*)$.
\end{proof}

\begin{claim}\label{legal in A iff legal in B'}
For every item $x^{\p}\in\X_B$, $x^{\p}\in\legal(i^\p)$ if and only if $x^{\p}$ is legal to $i^\p$ in market $B^{\p}$.
\end{claim}
\begin{proof}
Consider an item $x^{\p}\in\legal(i^\p)=\legal(I_C,\X_B)$. By property~\ref{UL-cond.5} in Definition~\ref{submarket pair definition}, there exists a legal allocation $\bcO^A$ in $A$ such that \[\bigcup\limits_{i\in I_B}\cO_i^A=\X_B\setminus\{x^{\p}\}\,.\] We define $\bcO^B$ in the same way as in Claim~\ref{legal in A implies legal in B'}, then $x^{\p}\in\cO_{i^{\p}}^B$ and $\bcO^B$ is again a legal allocation in market $B^{\p}$. Conversely, suppose an item $x^{\p}\in\X_B$ is legal to player $i^{\p}$ in market $B$. Then there exists an optimal allocation $\bcO^B$ in market $B^{\p}$ such that $x^{\p}\in\cO_{i^{\p}}^B$. Since the social welfare is optimal, we must have $v_i(\cO_i)=k_i$ for all $i\in I_B^{\p}$. In particular, $v_{i^{\p}}(x^{\p})=v_{i^{\p}}(\cO_{i^{\p}})=k_{i^{\p}}=1$, so we have $x\in\legal(i^{\p})$.
\end{proof}

Therefore, we conclude that an item $x\in\X_B$ is legal to a player $i\in I_B^{\p}$ in market $B^{\p}$ if and only if $x\in\legal(i)$ if and only if $v_i(x)=1$. Thus, market $B^{\p}$ satisfies property~\eqref{legality-valuation equivalence}, proving the first part of the statement.

Next, let us consider an item ${x_1}\in\legal(I_C,\X_B)$ and consider the market $C^{\p}:=(\X_C^{\p},I_C,(k_i)_{i\in I_C},\legal)$ where $\X_C^{\p}=\X_C\cup\{x_1\}$.

\begin{claim}\label{legal in C' implies legal in A}
For every player $i\in I_C$ and item $x\in\X_C^{\p}$, if $x$ is legal to $i$ in market $C^{\p}$ then $x\in\legal(i)$.
\end{claim}

\begin{proof}
Consider a player $i\in I_C$ and item $x\in\X_C^{\p}$. By property~\ref{UL-cond.5} in Definition~\ref{submarket pair definition} there is a legal allocation $\bcO^A$ in market $A$ such that \[\bigcup\limits_{i\in I_C}\cO_i^A=\X_C^{\p}\,,\] so $C^{\p}$ is a submarket of $A$ induced by the optimal allocation $\bcO$. It then follows from Corollary~\ref{legality in B implies legality in A} that $x\in\legal(i)$.
\end{proof}

\begin{claim}\label{legal assignment in A is extendable in C'}
For every $x_2\in\X_C$, any legal assignment with respect to market $A$ of $\{x_1,x_2\}$ to $I_C$ that respects demand is extendable with respect to market $C^{\p}$.
\end{claim}

\begin{proof}
Suppose for contradiction that there exists an item $x_2\in\X_C$, such that a legal assignment of $\{x_1,x_2\}$ to $I_C$ with respect to market $A$ that respects demand is not extendable to a legal allocation in market $C^{\p}$. Suppose $x_1$ and $x_2$ are assigned to players $i_1$ and $i_2$ respectively. This implies that there also does not exist a legal allocation in market $A$ with $x_1$ and $x_2$ assigned to $i_1$ and $i_2$. Indeed, if such a legal allocation $\bcO^A$ exists in $A$, then  $(\cO_i^A)_{i\in I_C}$ is a legal allocation in market~$C^{\p}$. 

By Lemma~\ref{legal assignment of size 2 non extendable implies submarket pair}, there exists a submarket pair $(\hat{B},\hat{C})$ in $A$. We write $\hat{B}=(\hat{\X}_B,\hat{I}_B,(k_i)_{i\in\hat{I}_B},\legal)$ and $\hat{C}=(\hat{\X}_C,\hat{I}_C,(k_i)_{i\in\hat{I}_C},\legal)$. From the proof of Lemma~\ref{legal assignment of size 2 non extendable implies submarket pair}, we can assume that $x_1,x_2\in\hat{\X}_B$ and $i_1,i_2\in\hat{I}_C$. Let us define two new submarkets.
\begin{align*}
&B^*:=(\X_B^*,I_B^*,(k_i)_{i\in I_B^*},\legal)\quad\text{where} \quad\X_B^*=\X_B\cup\hat{\X}_B,I_B^*=I_B\cup\hat{I}_B,\\
&C^*:=(\X_C^*,I_C^*,(k_i)_{i\in I_C^*},\legal)\quad\text{where} \quad\X_C^*=\X_C\cap\hat{\X}_C,I_C^*=I_C\cap\hat{I}_C.
\end{align*}
Since $x_2\in\hat{\X}_B\setminus\X_B\subseteq\X_B^*\setminus\X_B$, we have that $\X_B\subset\X_B^*$. Now, we would like to show that $(B^*,C^*)$ form a submarket pair. This would imply that $(B,C)$ is not maximal and we have a contradiction.

First of all, since $(B,C)$ and $(\hat{B},\hat{C})$ satisfies properties~\ref{UL-cond.1} and~\ref{UL-cond.2} in Definition~\ref{submarket pair definition}, we have the following.
\begin{align*}
&\X_B^*\cup\X_C^*=\X_A,\quad\X_B^*\cap\X_C^*=\varnothing,\quad\X_B^*\neq\varnothing,\\
&I_B^*\cup I_C^*=I_A,\quad I_B^*\cap I_C^*=\varnothing,\quad I_B^*\neq\varnothing.
\end{align*}
Since $i_1,i_2\in I_C\cap\hat{I}_C=I_C^*$, we also have $I_C^*\neq\varnothing$.

Since $(B,C)$ and $(\hat{B},\hat{C})$ satisfies property~\ref{UL-cond.3} in Definition~\ref{submarket pair definition} we have
$$\legal(I_B^*,\X_A)=\legal(I_B,\X_A)\cup\legal(\hat{I}_B,\X_A)\subseteq\X_B\cup\hat{\X}_B=\X_B^*$$
and so, $(B^*,C^*)$ satisfies property~\ref{UL-cond.3} as well. 
 
Now, to show property~\ref{UL-cond.4} in Definition~\ref{submarket pair definition}, let us compute the size of $\X_B^*$ in two ways.
\begin{align}
|\X_B^*|=|\X_B\cup\hat{\X}_B|&=|\X_B|+|\hat{\X}_B|-|\X_B\cap\hat{\X}_B|=\sum\limits_{i\in I_B}k_i+1+\sum\limits_{i\in\hat{I}_B}k_i+1-|\X_B\cap\hat{\X}_B|\nonumber\\
&=\sum\limits_{i\in I_B^*}k_i+\sum\limits_{i\in I_B\cap\hat{I}_B}k_i+2-|\X_B\cap\hat{\X}_B|\label{size of X_B^*}\\
|\X_B^*|=|\X_B\cup\hat{\X}_B|&\ge|\legal(I_B,\X_A)\cup\legal(\hat{I}_B,\X_A)|=|\legal(I_B\cup\hat{I}_B,\X_A)|\ge\sum\limits_{i\in I_B\cup\hat{I}_B}k_i=\sum\limits_{i\in I_B^*}k_i\label{lower bound on X_B union X_B hat}\\
|\X_B\cap\hat{\X}_B|&\ge|\legal(I_B,\X_A)\cap\legal(\hat{I}_B,\X_A)|\geq|\legal(I_B\cap\hat{I}_B,\X_A)|\ge\sum\limits_{i\in I_B\cap\hat{I}_B}k_i\label{lower bound on X_C intersect X_C hat}
\end{align}
If ($\ref{lower bound on X_B union X_B hat}$) holds with equality, then in any legal allocations of market $A$, items in $\X_B^*$ are all allocated to $I_B^*$, so none of the items in $\X_B^*$ can be legal to players in $I_C^*$. This contradicts the fact that $x_1\in\legal(i_1)$ where $x_1\in\X_B\cap\hat{\X}_B\subseteq\X_B^*$ and $i_1\in I_C^*$. So, ($\ref{lower bound on X_B union X_B hat}$) is a strict inequality. Similarly, ($\ref{lower bound on X_C intersect X_C hat}$) is also a strict inequality. Combine these two inequalities with equation ($\ref{size of X_B^*})$, we conclude that
$$|\X_B^*|=\sum\limits_{i\in I_B^*}k_i+1\,.$$
It is then straightforward to check that $(B^*,C^*)$ satisfies property~\ref{UL-cond.4}.

Since $i_1,i_2\in I_C^*$, we have
\begin{equation}\label{X_C^* is nonempty}
|\X_C^*|=\sum\limits_{i\in I_C^*}k_i-1\ge2-1=1\end{equation} and so $$\quad\X_C^*\neq\varnothing\,.
$$
So $(B^*,C^*)$ satisfies properties~\ref{UL-cond.1} and~\ref{UL-cond.2} in Definition~\ref{submarket pair definition}.
 
Finally, since $\{x_1,x_2\}\subseteq\legal(\{i_1,i_2\},\X_B^*)\subseteq\legal(I_C^*,\X_B^*)$ we have that $(B^*,C^*)$ satisfies property~\ref{UL-cond.5}. So, $(B^*,C^*)$ is indeed a submarket pair, leading to the desired contradiction.
\end{proof}

\begin{claim}\label{legal in market A implies legal in market C'}
For every player $i\in I_C$ and item $x\in\X_C^{\p}$, if $x\in\legal(i)$ then $x$ is legal to $i$ in market~$C^{\p}$.
\end{claim}

\begin{proof}
First, let us consider the case where there exists a unique $i_1\in I_C$ such that $x_1\in\legal(i_1)$ and $i_1$ is unit-demand. For every player $i_2\in I_C$ such that $i_2\neq i_1$ and for every item $x_2\in\legal(i_2,\X_C)$, the previous claim implies that assigning $x_1$ to $i_1$ and $x_2$ to $i_2$ respects demand and extends to a legal allocation in market $C^{\p}$. So, $x_2$ is legal to $i_2$ in market $C^{\p}$.

Otherwise, consider a player $i_2\in I_C$ and item $x_2\in\legal(i_2,\X_C^{\p})$. Consider a player $i_1\in I_C$ such that $x_1\in\legal(i_1)$ and either $i_1\neq i_2$ or $k_{i_1}>1$. Using the same extendable legal assignment as above, we have that $x_1$ is legal to $i_1$ and $x_2$ is legal to $i_2$ in market $C^{\p}$.
\end{proof}

Therefore, we conclude that an item $x\in\X_C^{\p}$ is legal to a player $i\in I_C$ in market $C^{\p}$ if and only if $x\in\legal(i)$ if and only if $v_i(x)=1$. So, market $C^{\p}$ satisfies property~\eqref{legality-valuation equivalence}. This proves the second part of the statement.

\end{proof}

Now, we can show that when there is a legal assignment of size two that respects demand but is not extendable, there is a dynamic pricing that is fixed at $x$ for any item $x\in\X_A$.

\begin{lemma}\label{legal assignment of size 2}
Suppose there exist two items $x_1,x_2\in\X_A$ such that a legal assignment of $\{x_1,x_2\}$ that respects demand is not extendable, then for every $x\in\X_A$, there is a dynamic pricing that is fixed at $x$.
\end{lemma}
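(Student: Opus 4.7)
The plan is to invoke the induction hypothesis on two strictly smaller markets obtained from a maximal submarket pair. By Lemma~\ref{legal assignment of size 2 non extendable implies submarket pair} the hypothesis guarantees a submarket pair $(B, C)$ in $A$; we pick one whose item set $\X_B$ is inclusion-wise maximal. By Lemma~\ref{B' and C' satisfies legal iff valuation = 1 property}, the augmented market $B^{\p} = (\X_B, I_B \cup \{i^{\p}\}, (k_i)_{i \in I_B^{\p}}, \legal)$ satisfies property~\eqref{legality-valuation equivalence}, and for each bridge item $x_1 \in \legal(I_C, \X_B)$ so does $C^{\p} = (\X_C \cup \{x_1\}, I_C, (k_i)_{i \in I_C}, \legal)$. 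Both markets have strictly fewer items-plus-players than $A$ since $\X_B, \X_C, I_B, I_C$ are all nonempty, so the induction hypothesis yields a dynamic pricing fixed at any prescribed item.

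Given the target $x \in \X_A$, pick the bridge item $x_1$ and the fixing points as follows: if $x \in \X_C$, let $x_1 \in \legal(I_C, \X_B)$ be arbitrary, fix $\vp^B$ at $x_1$ and $\vp^C$ at $x$; if $x \in \legal(I_C, \X_B)$, set $x_1 := x$ and fix both $\vp^B$ and $\vp^C$ at $x$; if $x \in \X_B \setminus \legal(I_C, \X_B)$, fix $\vp^B$ at $x$, take $x_1$ to be the unique minimizer of $p^B_y$ over $y \in \legal(I_C, \X_B)$, and fix $\vp^C$ at $x_1$. In every case $x_1$ minimizes $p^B_y$ over $\legal(I_C, \X_B)$, so the gap $\alpha := \min_{y \in \legal(I_C, \X_B) \setminus \{x_1\}} p^B_y - p^B_{x_1}$ is strictly positive (using $|\legal(I_C, \X_B)| \ge 2$ and distinctness of prices). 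After rescaling $\vp^C$ by a small positive constant so that $\max_y p^C_y - \min_y p^C_y$ is less than both $\alpha$ and $p^B_{x_1}$, define
\[
p_y := \begin{cases} p^B_y & \text{if } y \in \X_B, \\ p^C_y + p^B_{x_1} - p^C_{x_1} & \text{if } y \in \X_C. \end{cases}
\]
A short case-analysis using the fixing conditions of $\vp^B$ and $\vp^C$ shows $\vp$ is positive and satisfies $p_x < p_y$ for every $y \neq x$.

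To verify that $\vp$ is a dynamic pricing in $A$, fix a player $i$ and a bundle $\X^{\p}$ in demand for $i$ under $\vp$. If $i \in I_B$ then $\X^{\p} \subseteq \X_B$ (illegal items contribute negative utility), and because $\vp$ agrees with $\vp^B$ on $\X_B$, the bundle $\X^{\p}$ is also in demand for $i$ in $B^{\p}$ under $\vp^B$; it therefore extends to a legal allocation $\bcO^B$ of $B^{\p}$ in which $i^{\p}$ receives some $\hat{x} \in \legal(I_C, \X_B)$. By Lemma~\ref{B' and C' satisfies legal iff valuation = 1 property} the $I_B$-part of $\bcO^B$ is legal in $A$, and condition~\ref{UL-cond.5} of Definition~\ref{submarket pair definition} applied to $\hat{x}$ supplies a legal assignment of $\X_C \cup \{\hat{x}\}$ to $I_C$, yielding a legal allocation of $A$ with $\cO_i = \X^{\p}$. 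If $i \in I_C$, the gap condition forces every item of $\legal(i) \cap (\X_B \setminus \{x_1\}) \subseteq \legal(I_C, \X_B) \setminus \{x_1\}$ to be priced strictly above every item of $(\legal(i) \cap \X_C) \cup (\{x_1\} \cap \legal(i))$, so $\X^{\p} \subseteq \X_C \cup \{x_1\}$. Since $\vp$ on $\X_C \cup \{x_1\}$ is an order-preserving shift of $\vp^C$, the set $\X^{\p}$ is also in demand for $i$ in $C^{\p}$ under $\vp^C$; it extends to a legal allocation $\bcO^C$ of $C^{\p}$ whose $I_C$-part is legal in $A$ by Lemma~\ref{B' and C' satisfies legal iff valuation = 1 property}, and condition~\ref{UL-cond.5} with $\hat{x} = x_1$ supplies the $I_B$-side.

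The main obstacle is the case $x \in \X_B \setminus \legal(I_C, \X_B)$, where three constraints interact: $\vp$ must be fixed at $x$, its induced order on $\X_B$ must match $\vp^B$ (so $I_B$-bundles transfer from $B^{\p}$), and the $I_C$-legal items of $\X_B \setminus \{x_1\}$ must remain priced above the shifted $\X_C \cup \{x_1\}$ (so $I_C$-bundles transfer from $C^{\p}$). Fixing $\vp^B$ at $x$, selecting $x_1$ as the $\vp^B$-minimizer in $\legal(I_C, \X_B)$, and rescaling $\vp^C$ to fit strictly inside the positive gap $\alpha$ reconciles all three requirements simultaneously; the hypotheses $|\legal(I_C, \X_B)| \ge 2$ and $x \notin \legal(I_C, \X_B)$ are precisely what ensure $x_1 \neq x$ and $\alpha > 0$.
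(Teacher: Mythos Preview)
Your proof is correct and follows the same strategy as the paper: choose a maximal submarket pair, apply the induction hypothesis to the augmented markets $B^{\p}$ and $C^{\p}$ (via Lemma~\ref{B' and C' satisfies legal iff valuation = 1 property}), and splice the two pricings together according to where the target item $x$ lies. The only difference is cosmetic---the paper uses case-specific block orderings (e.g.\ $\{x\}<_p\X_{B,1}<_p\{y\}<_p\X_C<_p\X_{B,2}$ in the middle case), whereas you keep $\vp^B$ intact on $\X_B$ and rescale $\vp^C$ to fit into the price gap $(p^B_{x_1},\,p^B_{x_1}+\alpha)$---but both constructions lead to the same verification.
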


\begin{proof}
By Lemma~\ref{legal assignment of size 2 non extendable implies submarket pair}, we can select $(B,C)$ to be a maximal submarket pair in $A$. Take an item $x$ and let us consider the submarket that item $x$ belongs and the legality of $x$.

\textbf{Case $x\in\legal(I_C,\X_B)$.}
Define artificial player $i^{\p}$ and market $B^{\p}$ as in Lemma~\ref{legal in B' iff valuation is 1}. Define $C^{\p}:=(\X_C\cup\{x\},I_C,(k_i)_{i\in I_C},\legal)$. By property~\ref{UL-cond.4} in Definition~\ref{submarket pair definition}, $B^{\p}$ and $C^{\p}$ are saturated markets. By Lemma~\ref{B' and C' satisfies legal iff valuation = 1 property}, $B^{\p}$ and $C^{\p}$ satisfies property~\eqref{legality-valuation equivalence}. By property~\ref{UL-cond.1} and \ref{UL-cond.2} in Definition~\ref{submarket pair definition}, we have $|\X_B|<|\X_A|$ and $|I_C|<|I_A|$. By induction hypothesis, there is a dynamic pricing $\vp^B$ in market $B^{\p}$ that is fixed at $x$ and there is a dynamic pricing $\vp^C$ in market $C^{\p}$ that is fixed at $x$. Then, we order the items so that $\X_C\cup\{x\}<_p\X_B\setminus\{x\}$. Note that the items in $\X_C\cup\{x\}$ (or $\X_B\setminus\{x\}$) are ordered according to $\vp^C$ (or $\vp^B$). Then such a pricing is a dynamic pricing in market $A$ and it is fixed at $x$.

\textbf{Case 
$x\in\X_B\setminus\legal(I_C,\X_B)$.}
Define artificial player $i^{\p}$ and market $B^{\p}$ as in Lemma~\ref{legal in B' iff valuation is 1}. As argued in the previous case, there is a dynamic pricing $\vp^B$ in market $B^{\p}$ that is fixed at $x$. By property~\ref{UL-cond.5}  in Definition~\ref{submarket pair definition}, we can define
\[y:=\arg\min\{p_{\hat{x}}^B:\hat{x}\in\legal(I_C,\X_B)\}\,.\]
Let $C^{\p}:=(\X_C\cup\{y\},I_C,(k_i)_{i\in I_C},\legal)$. As argued in the previous case, there is a dynamic pricing $\vp^C$ in market $C^{\p}$ that is fixed at $y$. Define
\begin{align*}
\X_{B,1}&:=\{\hat{x}\in\X_B\setminus\{x\}:p_{\hat{x}}^B<p_y^B\}\\
\X_{B,2}&:=\{\hat{x}\in\X_B\setminus\{x\}:p_{\hat{x}}^B>p_y^B\}\,.
\end{align*}
We define a price vector $\vp^A$ with the prices of items in the following order
\[\{x\}<_p\X_{B,1}<_p\{y\}<_p\X_C<_p\X_{B,2}\,.\]
The items in the sets $\X_{B,1}$, $\X_C$, and $\X_{B,2}$ are ordered according to $\vp^B$, $\vp^C$, and $\vp^B$, respectively. By definition of $y$, we have $\legal(I_C,\X_{B,1})=\varnothing$. Then $\vp^A$ is a dynamic pricing in market $A$ and it is fixed at $x$.

\textbf{Case $x\in\X_C$.}
By property~\ref{UL-cond.5} in Definition~\ref{submarket pair definition}, we can select an item $y\in\legal(I_C,\X_B)$. Define artificial player $i^{\p}$ and market $B^{\p}$ as in Lemma~\ref{legal in B' iff valuation is 1}. Let $C^{\p}:=(\X_C\cup\{y\},I_C,(k_i)_{i\in I_C},\legal)$. As argued in the first case above, there is a dynamic pricing $\vp^B$ in market $B^{\p}$ that is fixed at $y$ and a dynamic pricing $\vp^C$ in market $C^{\p}$ that is fixed at $x$. 

We define a price vector $\vp^A$ with the prices of items in the following order
\[\{x\}<_p\X_C\setminus\{x\}\cup\{y\}<_p\X_B\setminus\{y\}\,.\]
The items in the sets $\X_C\setminus\{x\}\cup\{y\}$ and $\X_B\setminus\{y\}$ are ordered according to $\vp^C$ and $\vp^B$, respectively. Then $\vp^A$ is a dynamic pricing in market $A$ and it is fixed at $x$.

\end{proof}

By Lemma~\ref{legal assignment of size 2}, we know that whenever there is a non-extendable legal assignment of two items that respects demand, we are able to reduce the problem to submarkets. So, from now on, we assume that any legal assignment of two items that respects demand is extendable. With this assumption, we can show that whenever there is a unit-demand player in market $A$, we can always find a dynamic pricing fixed at any item $x\in\X_A$.

\begin{lemma}\label{unit-demand, pricing fixed at x}
Suppose $\hat{i}\in I_A$ is unit-demand, then for every item $x\in\X_A$, there is a dynamic pricing fixed at $x$.
\end{lemma}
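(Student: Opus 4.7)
The plan is to reduce to a strictly smaller submarket by removing $\hat{i}$ together with one of its legal items, apply the induction hypothesis, and then lift the resulting dynamic pricing back to $A$ by placing $x$ (and possibly one additional item) at the cheap end of the ordering. As a preliminary step, if some legal assignment of two items in $A$ respecting demand is not extendable, then Lemma~\ref{legal assignment of size 2} already yields a dynamic pricing fixed at $x$, so I may henceforth assume every such assignment is extendable.

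Under this assumption, I would split into two cases. If $x\in\legal(\hat{i})$, set $B=(\X_A\setminus\{x\},I_A\setminus\{\hat{i}\},(k_i)_{i\ne \hat{i}},\legal)$; since $k_{\hat{i}}=1$ this is a saturated submarket, and the extendability of $2$-item legal assignments lets one verify, in the spirit of the claims inside the proof of Lemma~\ref{B' and C' satisfies legal iff valuation = 1 property}, that $B$ inherits property~\eqref{legality-valuation equivalence}. The induction hypothesis then supplies a dynamic pricing $\vp^B$ on $B$ fixed at some item $w\in\X_B$ to be chosen, and I would define $\vp^A$ by $\{x\}<_p\X_B$, using the ordering of $\vp^B$ on $\X_B$. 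If instead $x\notin\legal(\hat{i})$, pick any $y\in\legal(\hat{i})$, form $B=(\X_A\setminus\{y\},I_A\setminus\{\hat{i}\},(k_i)_{i\ne\hat{i}},\legal)$, verify property~\eqref{legality-valuation equivalence} the same way, apply the induction hypothesis to obtain $\vp^B$ fixed at $x$, and set $\{x\}<_p\{y\}<_p\X_B\setminus\{x\}$ with the tail ordering inherited from $\vp^B$.

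It then remains to check that $\vp^A$ is a dynamic pricing fixed at $x$. Since $\hat{i}$ is unit-demand, its unique bundle in demand under $\vp^A$ is $\{x\}$ in Case~1 or $\{y\}$ in Case~2, and this bundle extends because any legal allocation of $B$, combined with the corresponding removed item assigned to $\hat{i}$, is legal in $A$ by property~\eqref{legality-valuation equivalence} for $B$. A player $i\ne\hat{i}$ whose legal set avoids the removed item picks exactly the same bundle as it would in $B$, so extendibility is inherited directly from $\vp^B$. The main obstacle is the remaining case, where a player $i\ne\hat{i}$ has $x$ (respectively $y$) in $\legal(i)$, so its bundle in demand has the form $\{x\}\cup T$ (respectively $\{y\}\cup T$) for the set $T$ of the $k_i-1$ cheapest items of $\legal(i)\cap\X_B$ under $\vp^B$; extending this bundle requires that $\hat{i}$ receive a legal item outside $\{x\}\cup T$. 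The choice of $w$ in Case~1 (and of $y$ in Case~2) must therefore be made so that the cheapest items of $\X_B$ do not exhaust $\legal(\hat{i})$: whenever $\X_B\setminus\legal(\hat{i})$ is nonempty one takes $w\in\X_B\setminus\legal(\hat{i})$, and the $2$-item extendability assumption then supplies the required legal allocation. The degenerate situation in Case~1 where $\legal(\hat{i})=\X_A$ is handled separately, exploiting that $\hat{i}$'s legality to every item makes any assignment to $\hat{i}$ valid.
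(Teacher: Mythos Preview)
Your case split and your placement of the removed item differ from the paper's, and this difference is where the gap lies. The paper distinguishes the cases $\legal(\hat{i})=\{x\}$ versus $\legal(\hat{i})\setminus\{x\}\neq\varnothing$; in the second case it removes some $\hat{x}\in\legal(\hat{i})\setminus\{x\}$ and prices it \emph{highest}, i.e.\ $\X_A\setminus\{\hat{x}\}<_p\{\hat{x}\}$. That single move is the whole trick: since every $i\neq\hat{i}$ has at least $k_i$ legal items in $\X_B$ (because $B$ satisfies \eqref{legality-valuation equivalence}), no such player ever selects $\hat{x}$, so its bundle in demand is exactly its bundle in demand in $B$, which extends there and combines with $\hat{x}\to\hat{i}$. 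Only the unit-demand player $\hat{i}$ might dip into $\X_B$, and a single legal item for $\hat{i}$ is trivially extendable.

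By contrast, you place the removed item near the \emph{bottom}, and then a player $i\neq\hat{i}$ with $k_i=3$ can pick a bundle that swallows all of $\legal(\hat{i})$. Your proposed fix, choosing $w\in\X_B\setminus\legal(\hat{i})$ and invoking $2$-item extendability, does not close this: you control only the single cheapest item of $\X_B$, not the next few, and the obstruction is a $3$-item assignment. Concretely, take $I_A=\{\hat{i},i,j\}$ with $k_{\hat{i}}=1$, $k_i=3$, $k_j=2$, $\X_A=\{x,p,q,r,a,b\}$, $\legal(\hat{i})=\{p,q,r\}$, $\legal(i)=\{p,q,r,a\}$, $\legal(j)=\{x,b\}$. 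This market satisfies \eqref{legality-valuation equivalence} and every $2$-item legal assignment respecting demand is extendable. You are in your Case~2 ($x\notin\legal(\hat{i})$); for any choice of $y\in\{p,q,r\}$, the submarket $B$ on $\X_A\setminus\{y\}$ has the unique dynamic pricing class and in particular $\vp^B$ with $x<q<r<a<b$ (say $y=p$) is a legitimate output of the induction hypothesis. Your $\vp^A$ is then $x<p<q<r<a<b$, under which player $i$ demands $\{p,q,r\}=\legal(\hat{i})$, leaving $\hat{i}$ no legal item; so $\vp^A$ is not a dynamic pricing. The same phenomenon arises in your Case~1 whenever $|\legal(\hat{i})|=3$, $\legal(\hat{i})\subseteq\legal(i)$ for some $3$-demand $i$, and the chosen $w$ lies outside $\legal(i)$. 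The paper sidesteps all of this by pricing the removed item at the top rather than the bottom.
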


\begin{proof}
Consider an item $x\in\X_A$ and let us consider two cases below.

\textbf{Case  $\legal(\hat{i})=\{x\}$.}
Then, item $x$ is only legal to player $\hat{i}$. Let $B:=(\X_A\setminus\{x\},I_A\setminus\{\hat{i}\},(k_i)_{i\in I_A\setminus\{\hat{i}\}},\legal)$, so $B$ is saturated. Since $x$ is assigned to $\hat{i}$ in any legal allocation in market $A$, we have that an item $y\in\X_A$ is legal to player $i\in I_A\setminus\{\hat{i}\}$ in market $A$ if and only if $y$ is legal to $i$ in market $B$. So, $B$ satisfies property~\eqref{legality-valuation equivalence}. By induction hypothesis, there is a dynamic pricing $\vp^B$ in market $B$. We can extend $\vp^B$ to prices $\vp^A$  by pricing $x$ so that $\{x\}<_p\X_A\setminus\{x\}$. This yields a dynamic pricing in $A$ that is fixed at $x$.

\textbf{Case  $\legal(\hat{i})\setminus\{x\}\neq\varnothing$.}
Consider $\hat{x}\in\legal(\hat{i})\setminus\{x\}$. We define a new market $B:=(\X_A\setminus\{\hat{x}\},I_A\setminus\{\hat{i}\},(k_i)_{i\in I_A\setminus\{\hat{i}\}},\legal)$ so $B$ is saturated. If there exists an item $x^*\in\X_A\setminus\{\hat{x}\}$ and player $i^*\in I_A\setminus\{\hat{i}\}$ such that a legal assignment of $\hat{x}$ to $\hat{i}$ is not extendable, then a dynamic pricing in $A$ fixed at $x$ exists by Lemma~\ref{legal assignment of size 2}. Otherwise, if all such legal assignments are extendable, then $B$ satisfies property~\eqref{legality-valuation equivalence}. By induction hypothesis, there is a dynamic pricing $\vp^B$ in market $B$ that is fixed at~$x$. We can extend $\vp^B$ to prices $\vp^A$  by pricing $x$ so that $\X_A\setminus\{\hat{x}\}<_p\{\hat{x}\}$. This yields a dynamic pricing in $A$ that is fixed at $x$.
\end{proof}

We now assume that every player is either bi-demand or tri-demand and consider legal assignments of three items. Notice that we do not need to go further than three, since every player has demand at most three. As long as there is a pricing such that every legal assignment of three items is extendable, we are sure that any legal bundle chosen by a player can be extended. Instead of considering the legal assignment of any three items, we assume that one of the three items is the item at which we want to fix the dynamic pricing. We will later show that this is sufficient. We again use the concept of a submarket pair, with slight modifications as in the following definition.

\begin{definition}\label{UL-cond new}
We say that submarkets $B=(\X_B,I_B,(k_i)_{i\in I_B},\legal)$ and $C=(\X_C,I_C,(k_i)_{i\in I_C},\legal)$ form a generalized submarket pair $(B,C)$ in market $A$ if the following conditions are satisfied.
\begin{enumerate}[ref={\arabic*}]
\item $(B,C)$ satisfies conditions \ref{UL-cond.1}, \ref{UL-cond.2} and \ref{UL-cond.3} in Definition~\ref{submarket pair definition}.
\item $|\X_B|-2=\sum\limits_{i\in I_B}k_i$ and $|\X_C|+2=\sum\limits_{i\in I_C}k_i$.\label{UL-cond.4 new}
\item $|\legal(I_C,\X_B)|\ge3$ and for all $\hat{x}_1,\hat{x}_2\in\legal(I_C,\X_B)$, there is a legal allocation $\bcO$ in market $A$ such that $ \bigcup\limits_{i\in I_B}\cO_i=\X_B\setminus\{\hat{x}_1,\hat{x}_2\}$ and $\bigcup\limits_{i\in I_C}\cO_i=\X_C\cup\{\hat{x}_1,\hat{x}_2\}$.\label{UL-cond.5 new}
\end{enumerate}
\end{definition}

\begin{lemma}\label{legal assignment of size 3}
Suppose in $A$ any legal assignment of two items can be extended to a legal allocation.
For an item $x_1\in\X_A$, suppose there exist two items $x_2,x_3\in\X_A$ such that a legal assignment of $\{x_1,x_2,x_3\}$ that respects demand is not extendable, then there is a dynamic pricing fixed at $x_1$.
\end{lemma}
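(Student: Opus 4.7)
The plan is to mimic the proof strategy of Lemma~\ref{legal assignment of size 2} one level higher: first extract a \emph{generalized} submarket pair (in the sense of Definition~\ref{UL-cond new}) from the non-extendability of the size-$3$ assignment, and then combine dynamic pricings on the two smaller halves via induction.

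First, I would run a Hall-type argument. Let $x_1, x_2, x_3$ be assigned to (not necessarily distinct) players $i_1, i_2, i_3$. Build the same bipartite graph $H$ used in Lemma~\ref{legal assignment of size 2 non extendable implies submarket pair} (items vs.\ $k_i$ copies of each player $i$), remove $x_1, x_2, x_3$ and one copy of each $i_j$, and observe that non-extendability means the residual graph has no perfect matching. Hall's theorem yields $I_B \subseteq I_A$ with a strictly deficient neighborhood inside $\X_A \setminus \{x_1,x_2,x_3\}$. The standing hypothesis that every legal assignment of size two is extendable is used exactly as inequality~\eqref{i_1 in I_B leads to contradiction}: it rules out the cases where only zero, one, or two of $i_1,i_2,i_3$ belong to $I_B$, because reinserting the other items would give a deficient Hall witness for a size-$2$ legal assignment. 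Hence all three target players lie in $I_B^c = I_C$ and the deficiency is exactly $2$. Setting $\X_B$ to be the deficient neighborhood together with $x_1,x_2,x_3$, the pair $(B,C)$ with $B = (\X_B, I_B, (k_i)_{i \in I_B}, \legal)$ and $C = (\X_C, I_C, (k_i)_{i \in I_C}, \legal)$ satisfies conditions \ref{UL-cond.1}--\ref{UL-cond.3} of Definition~\ref{submarket pair definition} and \ref{UL-cond.4 new}--\ref{UL-cond.5 new} of Definition~\ref{UL-cond new}; the lower bound $|\legal(I_C,\X_B)|\ge 3$ follows from $\{x_1,x_2,x_3\} \subseteq \legal(I_C,\X_B)$.

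Second, I would select such a generalized pair $(B,C)$ maximal with respect to $\X_B$ and prove the analog of Lemma~\ref{B' and C' satisfies legal iff valuation = 1 property}: introducing an artificial \emph{bi}-demand player $i^{\p}$ with $\legal(i^{\p}) \coloneqq \legal(I_C,\X_B)$ and $v_{i^{\p}}(x)=1$ iff $x\in\legal(i^{\p})$, the market $B^{\p} := (\X_B, I_B\cup\{i^{\p}\}, (k_i)_{i\in I_B^{\p}}, \legal)$ satisfies property~\eqref{legality-valuation equivalence}, and for every pair of distinct items $y_1, y_2 \in \legal(I_C,\X_B)$ the market $C^{\p} := (\X_C\cup\{y_1,y_2\}, I_C, (k_i)_{i\in I_C}, \legal)$ satisfies property~\eqref{legality-valuation equivalence} as well. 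The four claims inside Lemma~\ref{B' and C' satisfies legal iff valuation = 1 property} transfer nearly verbatim, with only one change: in the analog of Claim~\ref{legal assignment in A is extendable in C'}, I need every \emph{size-2} legal assignment inside $C^{\p}$ to extend. If some size-$2$ assignment of $\{y_1,y_2\}$ with $y_1\in\legal(I_C,\X_B)$ failed, Lemma~\ref{legal assignment of size 2 non extendable implies submarket pair} would produce an ordinary submarket pair $(\hat B, \hat C)$; taking the union as in the old proof and appealing to $|\legal(I_C,\X_B)|\ge 3$ (so a third item is always available to witness non-degeneracy of the union) would contradict the maximality of $(B,C)$.

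Third, I would conclude by casework on where $x_1$ lies, in complete parallel with the three cases of Lemma~\ref{legal assignment of size 2}. If $x_1\in\legal(I_C,\X_B)$, pick any $y\in\legal(I_C,\X_B)\setminus\{x_1\}$, apply induction to $B^{\p}$ to obtain $\vp^B$ fixed at $x_1$ and to $C^{\p}=(\X_C\cup\{x_1,y\},I_C,\ldots)$ to obtain $\vp^C$ fixed at $x_1$, and merge them via the ordering $\{x_1\} <_p \X_C\cup\{y\} <_p \X_B\setminus\{x_1\}$, where the internal orderings come from $\vp^C$ and $\vp^B$. If $x_1\in\X_B\setminus\legal(I_C,\X_B)$, first fix a dynamic pricing $\vp^B$ on $B^{\p}$ at $x_1$, select the two cheapest items $y_1,y_2$ of $\legal(I_C,\X_B)$ with respect to $\vp^B$, obtain $\vp^C$ on $C^{\p}=(\X_C\cup\{y_1,y_2\},I_C,\ldots)$ fixed at (say) $y_1$, and order items $\{x_1\} <_p \X_{B,1} <_p \{y_1,y_2\} <_p \X_C <_p \X_{B,2}$, where $\X_{B,1}, \X_{B,2}$ are the items of $\X_B\setminus\{x_1\}$ cheaper resp.\ more expensive than $y_2$ under $\vp^B$. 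If $x_1\in\X_C$, pick any two items $y_1,y_2\in\legal(I_C,\X_B)$ and order $\{x_1\} <_p (\X_C\setminus\{x_1\})\cup\{y_1,y_2\} <_p \X_B\setminus\{y_1,y_2\}$. In each case, property~\ref{UL-cond.5 new} of the generalized pair ensures that whenever a player in $I_C$ grabs up to two cheap items from $\legal(I_C,\X_B)$ alongside items in $\X_C$, the resulting partial assignment extends to a legal allocation of $A$.

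The main obstacle will be the second step, specifically the adaptation of the non-extendability contradiction in Lemma~\ref{B' and C' satisfies legal iff valuation = 1 property} to the bi-demand artificial player: the invoked pair $(\hat B,\hat C)$ could interact badly with $i^{\p}$'s two-fold demand, and one must carefully check that the union $(B^*, C^*)$ still satisfies the generalized submarket pair conditions (in particular \ref{UL-cond.4 new}), which is precisely where the $|\legal(I_C,\X_B)|\ge 3$ clause in~\ref{UL-cond.5 new} is indispensable.
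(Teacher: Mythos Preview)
Your high-level plan matches the paper's proof: build a generalized submarket pair via Hall's theorem (Claim~\ref{legal assignment of size 3 non extendable implies generalized submarket pair}), prove property~\eqref{legality-valuation equivalence} for the augmented halves $B'$ and $C'$ (Claim~\ref{generalized B' and C' satisfies legal iff valuation = 1 property}), and combine pricings by induction (Claim~\ref{dynamic pricing for legal assignment of size 3}). Two places in your write-up need repair, however.

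\textbf{Step 2: the size confusion is fatal as written.} To show $C' = (\X_C\cup\{y_1,y_2\},I_C,\ldots)$ satisfies~\eqref{legality-valuation equivalence}, you must show that for any $y_3\in\X_C$ and $i_3\in I_C$ with $y_3\in\legal(i_3)$, there is a legal allocation of $C'$ assigning $y_3$ to $i_3$. Restricting a legal allocation of $A$ to $I_C$ only lands in $C'$ if the two ``crossover'' items from $\X_B$ are exactly $y_1,y_2$; hence you need the \emph{size-three} assignment $y_1\to i_1,\;y_2\to i_2,\;y_3\to i_3$ to extend in $A$. Your argument instead supposes a non-extendable \emph{size-two} assignment and invokes Lemma~\ref{legal assignment of size 2 non extendable implies submarket pair} --- but the standing hypothesis of the lemma is precisely that every size-two assignment extends, so that scenario never arises and your contradiction is vacuous. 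The paper's fix: assume a size-three assignment fails, apply the Hall argument of Claim~\ref{legal assignment of size 3 non extendable implies generalized submarket pair} to get a second \emph{generalized} pair $(\hat B,\hat C)$, and take unions. The bookkeeping changes accordingly (the ``$+2$'' in~\eqref{size of X_B^*} becomes ``$+4$'', and the size-two-extendability hypothesis upgrades~\eqref{lower bound on X_B union X_B hat} and~\eqref{lower bound on X_C intersect X_C hat} by an extra $+1$), yielding $|\X_B^*|=\sum_{I_B^*}k_i+2$, i.e.\ a strictly larger generalized pair.

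\textbf{Step 3: the casework and block orderings do not preserve both pricings.} First, case~$x_1\in\X_C$ cannot occur: your own Hall step places $x_1\in\X_B$, and choosing the maximal pair with $\X_B\subseteq\X_B^*$ keeps $x_1\in\X_B^*$. Second, in your remaining two cases the merged ordering does not simultaneously respect $\vp^B$ for $I_B$-players and $\vp^C$ for $I_C$-players. For instance, in your case $x_1\in\legal(I_C,\X_B)$ you place $y$ before all of $\X_B\setminus\{x_1,y\}$; but items of $\X_B\setminus\legal(I_C,\X_B)$ cheaper than $y$ under $\vp^B$ are now pushed after $y$, so an $I_B$-player for whom $y$ is legal can face a bundle in demand different from any bundle in demand under $\vp^B$. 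Similarly, putting $\{y_1,y_2\}$ in one block before all of $\X_C$ breaks $\vp^C$'s ordering for $I_C$-players whenever some $z\in\X_C$ has $p_z^C<p_{y_2}^C$. The paper avoids casework entirely with one five-block interleaving
\[
\X_{B,1}<_p\X_{C,1}<_p\X_{B,2}<_p\X_{C,2}<_p\X_{B,3},
\]
where the $\X_{B,\ell}$ are cut at $p_{y_1}^B$ and $p_{y_2}^B$ and the $\X_{C,\ell}$ at $p_{y_2}^C$. Because $y_1,y_2$ are the two $\vp^B$-cheapest elements of $\legal(I_C^*,\X_B^*)$, the only items of $\X_B^*$ legal to $I_C^*$ in the first four blocks are $y_1\in\X_{B,1}$ and $y_2\in\X_{B,2}$; thus an $I_C^*$-player sees exactly the items of $C'$ in $\vp^C$ order, while an $I_B^*$-player (who ignores $\X_C^*$) sees $\X_B^*$ in $\vp^B$ order. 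This single construction handles both $x_1\in\legal(I_C^*,\X_B^*)$ (then $y_1=x_1$) and $x_1\notin\legal(I_C^*,\X_B^*)$.
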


\begin{proof}
Suppose $x_1,x_2,x_3$ are assigned to players $i_1,i_2,i_3\in I_A$, respectively, where $i_1,i_2,i_3$ are not necessarily distinct. The following claim and proof is analogous to Lemma~\ref{legal assignment of size 2 non extendable implies submarket pair}.

\begin{claim}\label{legal assignment of size 3 non extendable implies generalized submarket pair}
There is a generalized submarket pair $(B,C)$ in market $A$.
\end{claim}

\begin{proof}
We consider the same bipartite graph $H$ as defined in Lemma~\ref{legal assignment of size 2 non extendable implies submarket pair}. We remove $x_1,x_2,x_3$ and one copy of $i_1,i_2,i_3$ from $H$ to obtain graph $H^*$, which has no perfect matching. Using Hall's marriage theorem, there exists $I_B\subseteq I_A$ such that $|\X^*|<|I_B^*|$, where $|\X^*|=|\legal(I_B,\X_A\setminus\{x_1,x_2,x_3\})|$ and $|I_B^*|$ denotes the total number of copies of players in $I_B$ that are in $H^*$. By a similar argument as in Lemma~\ref{legal assignment of size 2 non extendable implies submarket pair}, since we assume that any legal assignment of two items can be extended to a legal allocation, we have
\[i_1,i_2,i_3\notin I_B\quad\text{and}\quad|I_B^*|=\sum\limits_{i\in I_B}k_i.\]
Moreover, if we only assign $x_2$ and $x_3$ to $i_2$ and $i_3$, then by Hall's marriage theorem, we have
\[
|\X^*|<|I_B^*|\le|\legal(I_B,\X_A\setminus\{x_2,x_3\})|=|\X^*\cup\{x_1\}|
\]
and so
\begin{equation}|\X^*\cup\{x_1\}|=|I_B^*|=\sum\limits_{i\in I_B}k_i\,.\label{submarket B has 2 extra items}
\end{equation}
We define $\X_B:=\X^*\cup\{x_1,x_2,x_3\}$, $\X_C:=\X_A\setminus\X_B$, and $I_C:=I_A\setminus I_B$. Let $B:=(\X_B,I_B,(k_i)_{i\in I_B},\legal)$ and $C:=(\X_C,I_C,(k_i)_{i\in I_C},\legal)$ so that $B$ and $C$ are submarkets of market $A$. Notice that properties~\ref{UL-cond.1},~\ref{UL-cond.2} and~\ref{UL-cond.3} in Definition~\ref{submarket pair definition}  are still satisfied. We can also show that the property~\ref{UL-cond.4 new} in Definition~\ref{UL-cond new} holds in a similar way as in the proof of Lemma~\ref{legal assignment of size 2 non extendable implies submarket pair}.\\
From
\[\{x_1,x_2,x_3\}\subseteq\legal(\{i_1,i_2,i_3\},\X_B)\subseteq\legal(I_C,\X_B),\]
we get $|\legal(I_C,\X_B)|\ge3$. Consider two items $\hat{x}_1,\hat{x}_2\in\legal(I_C,\X_B)$. Suppose $\hat{x}_1\in\legal(\hat{i}_1)$ and $\hat{x}_2\in\legal(\hat{i}_2)$. Since any legal assignment of two items can be extended to a legal allocation, there is a legal allocation $\bcO$ in market $A$ such that $\hat{x}_1\in\cO_{\hat{i}_1}$ and $\hat{x}_2\in\cO_{\hat{i}_2}$. Thus, we can also show that the property~\ref{UL-cond.5 new} in Definition~\ref{UL-cond new} is satisfied.
By definition, $(B,C)$ is a generalized submarket pair.
\end{proof}

Now, consider a maximal generalized submarket pair $(B^*,C^*)$, where $B^*=(\X_B^*,I_B^*,(k_i)_{i\in I_B^*},\legal)$, $C^*=(\X_C^*,I_C^*,(k_i)_{i\in I_C^*},\legal)$, and $\X_B\subseteq\X_B^*$. Here maximal is in the sense of Definition~\ref{maximal submarket pair definition}. Note, that $\X_B\subseteq\X_B^*$ guarantees $x_1\in\X_B^*$. We now prove the following claim that is analogous to Lemma~\ref{B' and C' satisfies legal iff valuation = 1 property}.

\begin{claim}\label{generalized B' and C' satisfies legal iff valuation = 1 property}
The following holds.
\begin{enumerate}
\item Let $i^\p$ be an artificial bi-demand player such that $\legal(i^{\p})\coloneqq\legal(I_C,\X_B)$ and \[v_{i^\p}(x)=\begin{cases}
1\quad\text{if }x\in\legal(i^\p)\\
0\quad\text{if }x\notin\legal(i^\p)\,.
\end{cases}\]
Let $I_B^{\p}:=I_B^*\cup\{i^{\p}\}$. Then, the market $B^{\p}\coloneqq(\X_B^*,I_B^{\p},(k_i)_{i\in I_B^{\p}},\legal)$ satisfies property~\eqref{legality-valuation equivalence}.
\item Let $y_1,y_2$ be items in $\legal(I_C^*,\X_B^*)$ such that $y_1\neq y_2$, define $C^{\p}:=(\X_C^{\p},I_C^*,(k_i)_{i\in I_C^*},\legal)$ where $\X_C^{\p}=\X_C^*\cup\{y_1,y_2\}$. Then, the market $C^{\p}$ satisfies property~\eqref{legality-valuation equivalence}.
\end{enumerate}
\end{claim}

\begin{proof}
The first part of the claim follows from an argument similar to Claims~\ref{legal in A implies legal in B'} to \ref{legal in A iff legal in B'}. The only difference is that Definition~\ref{UL-cond new} should be used instead of Definition~\ref{submarket pair definition}. We now focus on the second part of the claim.

Similar to Claim~\ref{legal in C' implies legal in A}, using property~\ref{UL-cond.5 new} of Definition~\ref{UL-cond new}, one can show that if $x\in\X_C^{\p}$ is legal to $i\in I_C^*$ in $C^\p$ then $x\in\legal(i)$.

Similar to Claim~\ref{legal assignment in A is extendable in C'}, for all $y_3\in\X_C^*$, any legal assignment of $\{y_1,y_2,y_3\}$ to $I_C^*$ that respects demand with respect to market $A$ is extendable in market $C^{\p}$. Notice that the constant term ``2'' in equation~\eqref{size of X_B^*} would become ``4''. Since every legal allocation of size two is extendable, the right-hand sides of equations~\eqref{lower bound on X_B union X_B hat} and~\eqref{lower bound on X_C intersect X_C hat} now have an extra $+1$. The proof is otherwise analogous to the proof of Claim~\ref{legal assignment in A is extendable in C'}.

Lastly, let us consider the analogous of Claim~\ref{legal in market A implies legal in market C'}. First, let us consider the case where both $y_1$ and $y_2$ are only legal to the same bi-demand player $i_1\in I_C^*$. For every player $i_3\in I_C^*$ such that $i_3\neq i_1$ and for every item $y_3\in\legal(i_3,\X_C^*)$, results from the previous paragraph imply that assigning $y_1,y_2$ to $i_1$ and $y_3$ to $i_3$ respects demand and extends to a legal allocation in $C^\p$. So, $y_3$ is legal to $i_3$ in market $C^\p$.

Otherwise, consider a player $i_3\in I_C^*$ and item $y_3\in\legal(i_3,\X_C^{\p})$. Consider players $i_1,i_2\in I_C^*$ such that $y_1\in\legal(i_1)$, $y_2\in\legal(i_2)$ and either $|\{i_1,i_2,i_3\}|\ge2$ or $i_1$ is 3-demand. Again, assigning $y_1,y_2,y_3$ to $i_1,i_2,i_3$ is a legal assignment that respects demand and hence extends a legal allocation in market $C^{\p}$. So, $y_1,y_2,y_3$ are legal to $i_1,i_2,i_3$, respectively, in market $C^\p$.

Therefore, we conclude that an item $y\in X_C^\p$ is legal to a player $i\in I_C^*$ in market $C^\p$ if and only if $y\in\legal(i)$ if and only if $v_i(y)=1$. So, market $C^\p$ satisfies property~\eqref{legality-valuation equivalence}.
\end{proof}

\begin{claim}\label{dynamic pricing for legal assignment of size 3}
There is a dynamic pricing fixed at $x_1$ in market $A$.
\end{claim}

\begin{proof}
By induction hypothesis, there is a dynamic pricing $\vp^B$ in $B^{\p}$ that is fixed at $x_1$.  We define the two lowest priced items in $\legal(I_C^*,\X_B)$ as follows.
\begin{align*}
y_1&=\arg\min\{p_x^B:x\in\legal(I_C,\X_B)\}\\
y_2&=\arg\min\{p_x^B:x\in\legal(I_C,\X_B)\setminus\{y_1\}\}.
\end{align*}
Let $C^{\p}=(\X_C^*\cup\{y_1,y_2\},I_C^*,(k_i)_{i\in I_C^*},\legal)$ so that by the previous claim, $C^{\p}$ is a saturated market and satisfies property~\eqref{legality-valuation equivalence}. By induction hypothesis, there is a dynamic pricing $\vp^C$ in market $C^{\p}$ that is fixed at $y_1$. We define the following sets by comparing the prices of items to the price of $y_1$ and $y_2$.
\begin{align*}
\X_{B,1}&=\{x\in\X_B^*:p_x^B\le p_{y_1}^B\},\\
\X_{B,2}&=\{x\in\X_B^*:p_{y_1}<p_x^B\le p_{y_2}^B\},\\
\X_{B,3}&=\{x\in\X_B^*:p_x^B>p_{y_2}^B\},\\
\X_{C,1}&=\{x\in\X_C^*:p_x^C<p_{y_2}^C\},\\
\X_{C,2}&=\{x\in\X_C^*:p_x^C>p_{y_2}^C\}.
\end{align*}
We define a price vector $\vp^A$ with prices of items in the following order
$$\X_{B,1}<_p\X_{C,1}\setminus\{y_1\}<_p\X_{B,2}<_p\X_{C,2}<_p\X_{B,3}.$$
The items in the sets $\X_{B,1},\X_{C,1},\X_{B,2},\X_{C,2},\X_{B,3}$ are ordered according to $\vp^B,\vp^C,\vp^B,\vp^C,\vp^B$, respectively. By definition of $y_1$ and $y_2$, we have $\legal(I_C^*,\X_{B,1}\cup\X_{B,2})=\{y_1,y_2\}$ and by property~\ref{UL-cond.3} of Definition~\ref{submarket pair definition}, $\legal(I_B^*,\X_C)=\varnothing$. It is then straightforward to check that $\vp^A$ is a dynamic pricing in market $A$. Moreover, since $\vp^B$ is fixed at $x_1$, we always have $x_1\in\X_{B,1}$. So, $\vp^A$ is fixed at $x_1$.

\end{proof}
\end{proof}

In the above lemma, one of the hypotheses is that the item at which we would like to fix the dynamic pricing is in a non-extendable legal assignment of size 3. We now consider the case when this is not true. Notice that we now work with the following three assumptions regarding market $A$ and some item $x\in\X_A$. There is no unit-demand player. All legal assignments of size two are extendable. All legal assignments of size three that contains $x$ are extendable. We show in the following lemma that there exists a dynamic pricing fixed at $x$. The idea of the proof is that by removing $x$ from the market, we can drop to a submarket that satisfies property~\eqref{legality-valuation equivalence}. Then, the hypothesis that all legal assignments of size three that contains $x$ are extendable ensures that no non-extendable legal assignments are introduced when we assign the lowest price to $x$.

\begin{lemma}\label{dynamic pricing fixed at an item outside non-extendable legal assignments of size 3}
Suppose there is no unit-demand player in market $A$ and all legal assignments of size two are extendable. Consider an item $x\in\X_A$. If every legal assignment of size three is extendable as soon as it contains $x$ and respects demand, then there is a dynamic pricing fixed at $x$.
\end{lemma}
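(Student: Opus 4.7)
The plan is to reduce to a smaller market $B$ obtained by deleting $x$ and decrementing by one the demand of some player $\hat{i}$ with $x \in \legal(\hat{i})$, apply the induction hypothesis to produce a dynamic pricing $\vp^B$ in $B$, and then set $x$ as the cheapest item to produce a dynamic pricing $\vp^A$ on $\X_A$ fixed at $x$. Such $\hat{i}$ exists because every item is legal to at least one player. Defining $k_{\hat{i}}' \coloneqq k_{\hat{i}} - 1$ (which is at least $1$ since $A$ has no unit-demand players) and $k_i' \coloneqq k_i$ for $i \neq \hat{i}$, I would set $B \coloneqq (\X_A \setminus \{x\}, I_A, (k_i')_{i \in I_A}, \legal)$, a saturated market whose combined size is strictly less than that of $A$, so the induction hypothesis becomes applicable once property~\eqref{legality-valuation equivalence} is established for $B$.

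The first nontrivial step is verifying that $B$ satisfies property~\eqref{legality-valuation equivalence}. The forward direction is immediate from the $0/1$ nature of the valuations. For the backward direction, given $y \in \X_B$ with $y \in \legal(i)$, I would consider the legal assignment $\{(y, i), (x, \hat{i})\}$ of size two, which respects demand since either $i \neq \hat{i}$ or $i = \hat{i}$ with $k_{\hat{i}} \geq 2$. By the hypothesis that every size-two legal assignment respecting demand is extendable, this extends to a legal allocation $\bcO$ in $A$, and removing $x$ from $\cO_{\hat{i}}$ yields a legal allocation in $B$ that assigns $y$ to $i$, showing $y$ is legal to $i$ in $B$. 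The induction hypothesis then provides a dynamic pricing $\vp^B$ in $B$, whose prices we may assume pairwise distinct after a small perturbation.

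I would then define $\vp^A$ by copying $\vp^B$ on $\X_B$ and choosing $p_x^A$ strictly less than every price in $\vp^B$, so $\vp^A$ is fixed at $x$. To verify $\vp^A$ is a dynamic pricing, note that for every player $i$ the unique bundle in demand $\X'$ consists of the $k_i$ cheapest items of $\legal(i)$. When $i = \hat{i}$, the bundle has the form $\{x\} \cup \X''$ where $\X''$ is the bundle in demand for $\hat{i}$ in $B$ (with demand $k_{\hat{i}}'$), and we extend via $\vp^B$ and add $x$ back to $\hat{i}$'s bundle to get a legal allocation in $A$. When $i \neq \hat{i}$ and $x \notin \legal(i)$, the bundle $\X'$ lies entirely in $\X_B$ and is exactly the bundle in demand for $i$ in $B$, so the same device works.

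The main obstacle is the remaining case $i \neq \hat{i}$ with $x \in \legal(i)$, where $\X' = \{x, y_1, \ldots, y_{k_i - 1}\}$ is \emph{not} the bundle in demand for $i$ in $B$ since $i$'s demand was unchanged from $A$ to $B$. Here I would invoke the new hypothesis of this lemma: assigning all items of $\X'$ to the single player $i$ is a legal assignment that respects demand (since $|\X'| = k_i$ and each item is legal to $i$), has size $k_i \in \{2, 3\}$, and contains $x$. By the size-three extendability hypothesis when $k_i = 3$, and by the general size-two extendability when $k_i = 2$, this legal assignment extends to a legal allocation in $A$. This completes the verification and produces the desired dynamic pricing fixed at $x$.
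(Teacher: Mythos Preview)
Your proof is correct and follows essentially the same approach as the paper's: remove $x$, decrement the demand of a player $\hat{i}$ with $x\in\legal(\hat{i})$, verify property~\eqref{legality-valuation equivalence} for the resulting market $B$ via the size-two extendability hypothesis, apply the induction hypothesis, and price $x$ lowest. Your case analysis in the final verification is more explicit than the paper's terse ``straightforward to check,'' but the underlying argument is the same.
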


\begin{proof}
Consider a player $i_x\in I_A$ such that $x\in\legal(i_x)$. Let us define $k^{\p}_i:=k_i$ for all  $i\in I_A$, $i\neq i_x$ and $k^\p_{i_x}:=k_{i_x}-1$. Notice that $k_{i_x}^{\p}>0$ as there are no unit-demand players in $A$. Define a saturated submarket $B=(\X_B,I_A,\mathbf{k}^{\p},\legal)$ with $\X_B=\X_A\setminus\{x\}$. For all $i\in I_A$ and $y\in\legal(i,\X_B)$, the legal assignment of $x$ and $y$ to $i_x$ and $i$ extends to some legal allocation $\bcO$ in market $A$. The restriction of $\bcO$ to market $B$ is a legal allocation of market $B$. It follows that $B$ satisfies property~\eqref{legality-valuation equivalence}.

By induction hypothesis, there is a dynamic pricing $\vp^B$ in market $B$. We define a price vector $\vp^A$ with the prices of items in the following order
$$\{x\}<_p\X_B.$$
The items in the sets $\X_B$ are ordered according to $\vp^B$. Since any legal assignment of three items that contains $x$ is extendable in market $A$ and any legal assignment of three items that does contains $x$ is extendable in market $B$, it is straightforward to check that $\vp^A$ is a dynamic pricing in market $A$ and it is fixed at $x$.

\end{proof}

Given a multi-demand market with a set $\X$ of items and a set $I$ of players. Let us suppose that each player $i\in I$ is $k_i$-demand where $k_i\le3$, and has multi-demand valuation function. We summarize the procedure for finding a dynamic
pricing in a tri-demand market in Algorithm~\ref{algorithm for tri-demand markets with legality function}. The correctness of the algorithm is proved in the following theorem.

\begin{theorem}\label{correctness of algorithm for tri-demand market}
A dynamic pricing exists in a tri-demand market. It can be computed using Algorithm~\ref{algorithm for tri-demand markets with legality function}.
\end{theorem}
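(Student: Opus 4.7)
The plan is to combine the rough prices from Algorithm~\ref{rough price} with an induction that proves the stronger ``dynamic pricing fixed at $x$'' statement on the reduced simplified market. First I would compute the rough prices $\vp^R$ and the threshold $\Delta>0$ exactly as in Section~\ref{Fine Price}, and pass to the induced simplified market $A=(\X^2,I_2,(\hat{k}_i)_{i\in I_2},\legal)$. By construction $A$ is a saturated simplified multi-demand market with demands in $\{1,2,3\}$ that satisfies property~\eqref{legality-valuation equivalence}, and we may assign valuations equal to $\Delta/2$ on legal items and $0$ otherwise. It then suffices to produce a dynamic pricing $\vp^F$ in $A$ with $\mathbf{0}<\vp^F<\Delta$, since Definition~\ref{definition fine prices} and the discussion in Section~\ref{Fine Price} imply that the combined vector $\vp^D$, obtained by adding $\vp^F$ on $\X^2$ and keeping $\vp^R$ on $\X\setminus\X^2$, is a dynamic pricing in the original market.

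For the existence of $\vp^F$ I would prove, by induction on $|\X_A|+|I_A|$, the stronger statement set up throughout the section: every saturated tri-demand market satisfying~\eqref{legality-valuation equivalence} admits, for every item $x$, a dynamic pricing fixed at $x$. The base cases are covered by Lemma~\ref{base case of 2 players} when $n=2$ and by Corollary~\ref{base case of 2 items} when $m=2$. The existence assertion of the theorem then follows by applying the fixed-at-$x$ statement to an arbitrary item.

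The inductive step is a case analysis on how extendability of small legal assignments can fail. Fix $x\in\X_A$ and consider: if some legal assignment of two items respecting demand is not extendable, Lemma~\ref{legal assignment of size 2} supplies a dynamic pricing fixed at $x$; otherwise, if some player is unit-demand, Lemma~\ref{unit-demand, pricing fixed at x} applies; otherwise every player is bi- or tri-demand, and if some legal assignment of three items that contains $x$ and respects demand is not extendable, Lemma~\ref{legal assignment of size 3} gives a dynamic pricing fixed at $x$; in the remaining case, Lemma~\ref{dynamic pricing fixed at an item outside non-extendable legal assignments of size 3} concludes. In each branch the recursive pricings live in $(0,\Delta/2)$ and the constructions only interleave such pricings with new values in the same range, so the resulting $\vp^F$ still satisfies $\mathbf{0}<\vp^F<\Delta$; moreover Algorithm~\ref{algorithm for tri-demand markets with legality function} is designed to walk through exactly the same case distinction, invoking the recursive dynamic-pricing subroutine on the strict submarkets identified in the corresponding lemma.

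The main obstacle, already addressed in the preparatory lemmas, is ensuring that the strict submarket fed to the induction still satisfies~\eqref{legality-valuation equivalence}: the naive restriction of $A$ to a subset of items or players can lose this equivalence because legality in a submarket need not coincide with legality in $A$. This is resolved by introducing the artificial players~$i^{\p}$ in the submarkets $B^{\p}$ and $C^{\p}$ of Lemma~\ref{B' and C' satisfies legal iff valuation = 1 property} for the size-two case and in the analogous construction inside Lemma~\ref{legal assignment of size 3} for the size-three case, with maximal (generalized) submarket pairs chosen precisely so that~\eqref{legality-valuation equivalence} is restored for the recursive call. Once these pieces are in place the induction closes, and both the existence of a dynamic pricing and the correctness of Algorithm~\ref{algorithm for tri-demand markets with legality function} follow.
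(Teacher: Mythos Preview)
Your proposal is correct and follows essentially the same approach as the paper: reduce to the simplified market via rough prices, then prove the stronger ``fixed at $x$'' statement by induction on $|\X_A|+|I_A|$ using the base cases and the four-way case split handled by Lemmas~\ref{legal assignment of size 2}, \ref{unit-demand, pricing fixed at x}, \ref{legal assignment of size 3}, and~\ref{dynamic pricing fixed at an item outside non-extendable legal assignments of size 3}. The only discrepancy is cosmetic: Algorithm~\ref{algorithm for tri-demand markets with legality function} tests for a unit-demand player \emph{before} testing for a non-extendable size-two assignment (the reverse of your ordering), but since Lemma~\ref{unit-demand, pricing fixed at x} internally falls back on Lemma~\ref{legal assignment of size 2} anyway, both orderings are valid and the argument goes through unchanged.
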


\begin{proof}
We first compute the rough prices as in Algorithm~\ref{rough price} and let market $A:=(\X_A,I_A,(\hat{k}_i)_{i\in I_A},\legal)$ be the induced market described in Section~\ref{Fine Price}. We define the $\{0,1\}$-valued valuation function on market $A$, as in line~\ref{lst:line:tri-demand alg. valuation def}, so that market $A$ is a tri-demand saturated market that satisfies property~\eqref{legality-valuation equivalence}. We now proceed by induction on the number of items $m$ plus the number of players $n$ in market $A$.

Base cases of $n\le2$ or $m\le2$ are covered by Lemma~\ref{base case of 2 players} and Corollary~\ref{base case of 2 items}, which corresponds to lines~\ref{lst:line:tri-demand alg. base cases start}-\ref{lst:line:tri-demand alg. base cases end} in Algorithm~\ref{algorithm for tri-demand markets with legality function}. As the induction hypothesis, we assume that in any saturated tri-demand market that satisfies property\eqref{legality-valuation equivalence} and has less than $m$ items or less than $n$ players, there is a dynamic pricing fixed at any item. We now pick an item $x_f$ and consider the following cases to show that the price vector $\vp^F$ defined in each case of Algorithm~\ref{algorithm for tri-demand markets with legality function} is a dynamic pricing fixed at~$x_f$ in market $A$.

First, we consider the case where there is a unit-demand player $\hat{i}\in I_A$. This corresponds to lines~\ref{lst:line:tri-demand alg. case unit-demand player}-\ref{lst:line:tri-demand alg. case unit-demand player case 2 p^F def} in Algorithm~\ref{algorithm for tri-demand markets with legality function}. If $x_f$ is the only legal item for $\hat{i}$, then we can remove $\hat{i}$ and $x_f$ and find a dynamic pricing in the remaining market, which exists by the induction hypothesis. Lemma~\ref{unit-demand, pricing fixed at x} ensures that by assigning $x_f$ the lowest price, as in line~\ref{lst:line:tri-demand alg. case unit-demand player case 1 p^F def}, we obtain a dynamic pricing fixed at $x_f$ in market $A$. Otherwise, we can remove player $\hat{i}$ and some item $\hat{x}$ in $\legal(\hat{i})\setminus\{x_f\}$ and find a dynamic pricing in the remaining market, which again exists by the induction hypothesis. Lemma~\ref{unit-demand, pricing fixed at x} ensures that by assigning $\hat{x}$ the highest price, as in line~\ref{lst:line:tri-demand alg. case unit-demand player case 2 p^F def}, we obtain a dynamic pricing fixed at $x_f$ in market $A$.

Next, we may assume that there is no unit-demand player. Consider the case where there exists a non-extendable legal assignment of size two that respects demand. This corresponds to lines~\ref{lst:line:tri-demand alg. case non-ext. asgmt. of size 2}-\ref{lst:line:tri-demand alg. case non-ext. asgmt. of size 2 case in C p^F def} in Algorithm~\ref{algorithm for tri-demand markets with legality function}. Lemma~\ref{legal assignment of size 2 non extendable implies submarket pair} guarantees the existence of a maximal submarket pair $(B,C)$ defined in line~\ref{lst:line:tri-demand alg. maximal submarket pair}. Lemma~\ref{B' and C' satisfies legal iff valuation = 1 property} ensures that by introducing a unit-demand artificial player with legal items $\legal(I_C,\X_B)$ and $\{0,1\}$-valued valuation function, we obtain a saturated market $B^\p$ that satisfies property~\eqref{legality-valuation equivalence}. Lemma~\ref{B' and C' satisfies legal iff valuation = 1 property} also ensures that by adding any item in $\legal(I_C,\X_B)$ to market $C$, we obtain a saturated market $C^\p$ that satisfies property~\eqref{legality-valuation equivalence}. By induction hypothesis, both markets $B^\p$ and $C^\p$ have dynamic pricings fixed at any item in the respective market. Then, depending on whether $x_f$ is in market $B$ and whether $x_f$ is legal for any player in $I_C$, Algorithm~\ref{algorithm for tri-demand markets with legality function} defines the prices $\vp^F$ differently in lines~\ref{lst:line:tri-demand alg. case non-ext. asgmt. of size 2 case in B and legal to C p^F def}, \ref{lst:line:tri-demand alg. case non-ext. asgmt. of size 2 case in B but not legal to C p^F def} and \ref{lst:line:tri-demand alg. case non-ext. asgmt. of size 2 case in C p^F def}, each of these lines corresponds to one of the three cases in Lemma~\ref{legal assignment of size 2}.

Next, we may assume that there is no unit-demand player in the market and that all legal allocations of two items that respect demand are extendable. Consider the case where there exists a non-extendable legal assignment of three items that respects demand and contains $x_f$. This corresponds to lines~\ref{lst:line:tri-demand alg. case non-ext. asgmt. of size 3}-\ref{lst:line:tri-demand alg. case non-ext. asgmt. of size 3 case not legal to C p^F def} in Algorithm~\ref{algorithm for tri-demand markets with legality function}. Claim~\ref{legal assignment of size 3 non extendable implies generalized submarket pair} guarantees the existence of a maximal generalized submarket pair $(B,C)$ with $x_f\in\X_B$, as defined in line~\ref{lst:line:tri-demand alg. maximal generalized submarket pair}. Lemma~\ref{generalized B' and C' satisfies legal iff valuation = 1 property} ensures that by introducing a bi-demand artificial player with legal items $\legal(I_C,\X_B)$ and $\{0,1\}$-valued valuation function, we obtain a saturated market $B^\p$ that satisfies property~\eqref{legality-valuation equivalence}. By the induction hypothesis, market $B^\p$ has a dynamic pricing fixed at $x_f$. Lemma~\ref{generalized B' and C' satisfies legal iff valuation = 1 property} then ensures that by adding the two lowest-priced items $y_1$ and $y_2$ from $\legal(I_C,\X_B)$ to market $C$, we obtain a saturated market $C^\p$ that satisfies property~\eqref{legality-valuation equivalence}. By induction hypothesis, market $C^\p$ has dynamic pricing fixed at $y_1$. Then, we can apply Claim~\ref{dynamic pricing for legal assignment of size 3} to get a dynamic pricing fixed at $x_f$ in market $A$, as defined in line \ref{lst:line:tri-demand alg. case non-ext. asgmt. of size 3 case not legal to C p^F def} of Algorithm~\ref{algorithm for tri-demand markets with legality function}.

Lastly, consider the case where there is no unit-demand players, all legal assignment of size two are extendable and all legal assignment of size three that contain $x_f$ are extendable. This corresponds to lines~\ref{lst:line:tri-demand alg. case Else}-\ref{lst:line:tri-demand alg. case Else p^F def} in Algorithm~\ref{algorithm for tri-demand markets with legality function}. According to Lemma~\ref{dynamic pricing fixed at an item outside non-extendable legal assignments of size 3}, we can remove $x_f$ and adjust the demands accordingly to obtain a saturated tri-demand submarket $B$ that satisfies property~\eqref{legality-valuation equivalence}. By induction hypothesis, there is a dynamic pricing in market $B$. Lemma~\ref{dynamic pricing fixed at an item outside non-extendable legal assignments of size 3} ensures that assigning $x_f$ the lowest price yields a dynamic pricing fiexed at $x_f$ in market $A$.

In all cases, we have that $\vp^F$ that is a dynamic pricing in market $A$. With the valuation functions defined as in line~\ref{lst:line:tri-demand alg. valuation def} of Algorithm~\ref{algorithm for 2 optimal allocations}, the dynamic pricing of submarket $B^\p$, $C^\p$ or $B$ as defined in Algorithm~\ref{algorithm for tri-demand markets with legality function} must be in the range $\mathbf{0}<\vp^B<\mathbf{1}$ and $\mathbf{0}<\vp^C<\mathbf{1}$ in order to yield positive utility for all items that are legal to a given player. It is straightforward to check that in all cases $\mathbf{0}<\vp^F<\mathbf{1}$. So, $\Delta\vp^F$ preserves the order of a dynamic pricing in market $A$ and $\mathbf{0}<\Delta\vp^F<\Delta$. By Definition~\ref{definition fine prices} of fine prices, $\Delta\vp^F$ is a vector of fine prices of the original market. From Section~\ref{Fine Price}, we know that the pricing $\vp^D$ defined in line~\ref{lst:line:tri-demand alg. p^D def} is the desired dynamic pricing.
\end{proof}

\begin{algorithm}
\caption{Finding dynamic pricing in tri-demand markets.}\label{algorithm for tri-demand markets with legality function}
\begin{algorithmic}[1]
\State Define $\vp^R$ to be the rough prices of the market.
\State Define $\Delta>0$ as in Section $\ref{Fine Price}$.
\State Define $\X_A:=\X\setminus\left(\bigcup\limits_{i\in I}\R_i\right)$, $I_A:=\{i\in I:\R_i\neq\K_i\}$, and $\hat{k}_i:=k_i-|\R_i|$ for all $i\in I_A$.
\State Define market $A:=(\X_A,I_A,(\hat{k}_i)_{i\in I_A},\legal)$.
\State Define
$$\hat{v}_i(x):=\begin{cases}
1\quad&\text{if $x$ is legal to $i$}\\
0\quad&\text{if $x$ is not legal to $i$}
\end{cases}$$
for all $i\in I_A$ and $x\in\X_A$.\label{lst:line:tri-demand alg. valuation def}
\State Define $m:=|\X_A|$ and $n:=|I_A|$.
\State Define $x_f$ to be an item in $\X_A$.
\If{$n=2$}\label{lst:line:tri-demand alg. base cases start}

    \State Define the dynamic pricing $\vp^F$ fixed at $x_f$ by 
    $$p_x^F:=\begin{cases}
    0.2\quad&\text{if }x=x_f\\
    0.5\quad&\text{if }x\in(\legal(i_1)\setminus\legal(i_2))\cup(\legal(i_2)\setminus\legal(i_1))\setminus\{x_f\}\\
    0.8\quad&\text{if }x\in(\legal(i_1)\cap\legal(i_2))\setminus\{x_f\}
    \end{cases}.$$
\ElsIf{$m\le2$}
    \State Define the dynamic pricing $\vp^F$ fixed at $x_f$ by
    $$p_x^F:=\begin{cases}
    0.2\quad\text{if }x=x_f\\
    0.8\quad\text{if }x\neq x_f
    \end{cases}.$$\label{lst:line:tri-demand alg. base cases end}

\ElsIf{$\hat{k}_{\hat{i}}=1$ for some $\hat{i}\in I_A$}\label{lst:line:tri-demand alg. case unit-demand player}
    \If{$\legal(\hat{i})=\{x_f\}$}
        \State Define $\vp^B$ to be a dynamic pricing in market $(\X_A\setminus\{x_f\},I_A\setminus\{\hat{i}\},(\hat{k}_i)_{i\in I_A\setminus\{\hat{i}\}},\legal)$.
        \State Define the dynamic pricing $\vp^F$ fixed at $x_f$ by
        $$p_x^F:=\begin{cases}
        0.2\quad&\text{if }x=x_f\\
        0.2+0.8p_x^B\quad&\text{if }x\in\X_A\setminus\{x_f\}
        \end{cases}.$$\label{lst:line:tri-demand alg. case unit-demand player case 1 p^F def}
    \Else
        \State Define $\hat{x}$ to be an item in $\legal(\hat{i})\setminus\{x_f\}$.
        \State Define $\vp^B$ to be a dynamic pricing fixed at $x_f$ in market $(\X_A\setminus\{\hat{x}\},I_A\setminus\{\hat{i}\},(\hat{k}_i)_{i\in I_A\setminus\{\hat{i}\}},\legal)$.
        \State Define the dynamic pricing $\vp^F$ fixed at $x_f$ by
        $$p_x^F:=\begin{cases}
        0.8p_x^B\quad&\text{if }x\in\X_A\setminus\{\hat{x}\}\\
        0.8\quad&\text{if }x=\hat{x}\\
        \end{cases}.$$\label{lst:line:tri-demand alg. case unit-demand player case 2 p^F def}
    \EndIf
\algstore{3-demand alg p1}
\end{algorithmic}
\end{algorithm}

\begin{algorithm}[H]
\begin{algorithmic}
\algrestore{3-demand alg p1}
\ElsIf{there exists a non-extendable legal assignment of size 2 that respects demand}\label{lst:line:tri-demand alg. case non-ext. asgmt. of size 2}
    \State Define $(B,C)$ to be a maximal submarket pair with $B:=(\X_B,I_B,(\hat{k}_i)_{i\in I_B},\legal)$ and $C:=(\X_C,I_C,(\hat{k}_i)_{i\in I_C},\legal)$.\label{lst:line:tri-demand alg. maximal submarket pair}
    \State Define a unit-demand artificial player $i^{\p}$ with $\legal(i^{\p}):=\legal(I_C,\X_B)$ and
    $$v_{i^\p}(x):=\begin{cases}
    1&\text{if }x\in\legal(i^\p)\\
    0&\text{if }x\notin\legal(i^\p)
    \end{cases}\,.$$
    \State Define $B^{\p}:=(\X_B,I_B^{\p},(\hat{k}_i)_{i\in I_B^{\p}},\legal)$ where $I_B^{\p}:=I_B\cup\{i^{\p}\}$.
    \If{$x_f\in\legal(I_C,\X_B)$}
        \State Define $\vp^B$ to be a dynamic pricing fixed at $x_f$ in market $B^{\p}$.

        \State Define $\vp^C$ to be a dynamic pricing fixed at $x_f$ in market $(\X_C\cup\{x_f\},I_C,(\hat{k}_i)_{i\in I_C},\legal)$.
        \State Define the dynamic pricing $\vp^F$ fixed at $x_f$ by
        $$p_x^F:=\begin{cases}
        0.5p_x^C\quad&\text{if }x\in\X_C\cup\{x_f\}\\
        0.5+0.5p_x^B\quad&\text{if }x\in\X_B\setminus\{x_f\}
        \end{cases}.$$\label{lst:line:tri-demand alg. case non-ext. asgmt. of size 2 case in B and legal to C p^F def}
    \ElsIf{$x_f\in\X_B\setminus\legal(I_C,\X_B)$}
        \State Define $\vp^B$ to be a dynamic pricing fixed at $x_f$ in market $B^{\p}$.
        \State Define $y:=\arg\min\{p_x^B:x\in\legal(I_C,\X_B)\}$.

        \State Define $\vp^C$ to be a dynamic pricing fixed at $y$ in market $(\X_C\cup\{y\},I_C,(\hat{k}_i)_{i\in I_C},\legal)$.
        \State Define the dynamic pricing $\vp^F$ fixed at $x_f$ by
        $$p_x^F:=\begin{cases}
        0.4p_x^B\quad&\text{if }x\in\{x\in\X_B:p_x^B\le p_y^B\}\\
        0.4+0.4p_x^C\quad&\text{if }x\in\X_C\\
        0.8+0.2p_x^B\quad&\text{if }x\in\{x\in\X_B:p_x^B>p_y^B\}
        \end{cases}.$$\label{lst:line:tri-demand alg. case non-ext. asgmt. of size 2 case in B but not legal to C p^F def}
    \Else
        \State Define $y$ to be an item in $\legal(I_C,\X_B)$.
        \State Define $\vp^B$ to be a dynamic pricing fixed at $y$ in market $B^{\p}$.

        \State Define $\vp^C$ to be a dynamic pricing fixed at $x_f$ in market $(\X_C\cup\{y\},I_C,(\hat{k}_i)_{i\in I_C},\legal)$.
        \State Define the dynamic pricing $\vp^F$ fixed at $x_f$ by
        $$p_x^F:=\begin{cases}

        0.5p_x^C\quad&\text{if }x\in\X_C\cup\{y\}\\
        0.5+0.5p_x^B\quad&\text{if }x\in\X_B\setminus\{y\}
        \end{cases}.$$\label{lst:line:tri-demand alg. case non-ext. asgmt. of size 2 case in C p^F def}
    \EndIf
\algstore{3-demand alg p2}
\end{algorithmic}
\end{algorithm}

\begin{algorithm}[H]
\begin{algorithmic}
\algrestore{3-demand alg p2}
\ElsIf{there exists a non-extendable legal assignment of size 3 that respects demand and contains $x_f$}\label{lst:line:tri-demand alg. case non-ext. asgmt. of size 3}
    \State Define $(B,C)$ to be a maximal generalized submarket pair with $B:=(\X_B,I_B,(\hat{k}_i)_{i\in I_B},\legal)$ and $C:=(\X_C,I_C,(\hat{k}_i)_{i\in I_C},\legal)$, such that $x_f\in\X_B$.\label{lst:line:tri-demand alg. maximal generalized submarket pair}
    \State Define a bi-demand artificial player $i^{\p}$ with $\legal(i^{\p}):=\legal(I_C,\X_B)$ and
    $$v_{i^\p}(x):=\begin{cases}
    1&\text{if }x\in\legal(i^\p)\\
    0&\text{if }x\notin\legal(i^\p)
    \end{cases}.$$
    \State Define $B^{\p}:=(\X_B,I_B^{\p},(\hat{k}_i)_{i\in I_B^{\p}},\legal)$ where $I_B^{\p}=I_B\cup\{i^{\p}\}$.
    \State Define $\vp^B$ to be a dynamic pricing fixed at $x_f$ in market $B^{\p}$.
    \State Define $y_1:=\arg\min\{p_x^B:x\in\legal(I_C,\X_B)\}$.
    \State Define $y_2:=\arg\min\{p_x^B:x\in\legal(I_C,\X_B)\setminus\{y_1\}\}$.

        \State Define $\vp^C$ to be a dynamic pricing fixed at $y_1$ in market $(\X_C\cup\{y_1,y_2\},I_C,(\hat{k}_i)_{i\in I_C},\legal)$.
        \State Define the dynamic pricing $\vp^F$ fixed at $x_f$ by
        $$p_x^F:=\begin{cases}
        0.2p_x^B\quad&\text{if }x\in\{x\in\X_B:p_x^B\le p_{y_1}^B\}\\
        0.2+0.2p_x^C\quad&\text{if }x\in\{x\in\X_C:p_x^C<p_{y_2}^C\}\\
        0.4+0.2p_x^B\quad&\text{if }x\in\{x\in\X_B:p_{y_1}^B<p_x^B\le p_{y_2}^B\}\\
        0.6+0.2p_x^C\quad&\text{if }x\in\{x\in\X_C:p_x^C>p_{y_2}^C\}\\
        0.8+0.2p_x^B\quad&\text{if }x\in\{x\in\X_B:p_x^B>p_{y_2}^B\}
        \end{cases}\,.$$\label{lst:line:tri-demand alg. case non-ext. asgmt. of size 3 case not legal to C p^F def}
\Else\label{lst:line:tri-demand alg. case Else}
    \State Take $i_x\in I_A$ such that $x_f\in\legal(i_x)$.
    \State Let $\vp^B$ be a dynamic pricing in market $(\X_A\setminus\{x_f\},I_A,\hat{\mathbf{k}}-\chi(\{i_x\}),\legal)$.
    \State Define the dynamic pricing $\vp^F$ fixed at $x_f$ by
        $$p_x^F:=\begin{cases}
        0.2\quad&\text{if }x=x_f\\
        0.2+0.8p_x^B\quad&\text{if }x\in\X_A\setminus\{x_f\}
        \end{cases}\,.$$\label{lst:line:tri-demand alg. case Else p^F def}
\EndIf
\State Define the dynamic pricing $\vp^D$ by $p_x^D:=\begin{cases}
\vp^R+\Delta\vp^F\quad&\text{if }x\in\X_A\\
\vp^R\quad&\text{if }x\in\X\setminus\X_A
\end{cases}$.\label{lst:line:tri-demand alg. p^D def}
\State \Return $\vp^D$
\end{algorithmic}
\end{algorithm}

\newpage

\bibliographystyle{unsrt}
\bibliography{Literature}

\begin{thebibliography}{10}

\bibitem{Walras}
L\'{e}on Walras.
\newblock \'{E}l\'{e}ments d'\'{e}conomie politique pure, ou, th\'{e}orie de la
  richesse sociale.
\newblock {\em F. Rouge}, 1926.

\bibitem{Hsu2016}
Justin Hsu, Jamie Morgenstern, Ryan Rogers, Aaron Roth, and Rakesh Vohra.
\newblock Do prices coordinate markets?
\newblock In {\em S{TOC}'16---{P}roceedings of the 48th {A}nnual {ACM} {SIGACT}
  {S}ymposium on {T}heory of {C}omputing}, pages 440--453. ACM, New York, 2016.

\bibitem{Cohen2016}
Vincent Cohen-Addad, Alon Eden, Michal Feldman, and Amos Fiat.
\newblock The invisible hand of dynamic market pricing.
\newblock In {\em Proceedings of the 2016 ACM Conference on Economics and
  Computation}, EC '16, pages 383--400, New York, NY, USA, 2016. Association
  for Computing Machinery.

\bibitem{Ezra2019}
Tomer Ezra, Michal Feldman, Tim Roughgarden, and Warut Suksompong.
\newblock Pricing multi-unit markets.
\newblock {\em ACM Trans. Econ. Comput.}, 7(4):Art. 20, 29, 2019.

\bibitem{Berger2020}
Ben {Berger}, Alon {Eden}, and Michal {Feldman}.
\newblock {On the Power and Limits of Dynamic Pricing in Combinatorial
  Markets}.
\newblock {\em arXiv e-prints}, page arXiv:2002.06863, February 2020.

\bibitem{Berczi2021}
Krist{\'{o}}f B{\'{e}}rczi, Erika~R. B{\'{e}}rczi{-}Kov{\'{a}}cs, and Evelin
  Sz{\"{o}}gi.
\newblock A dual approach for dynamic pricing in multi-demand markets.
\newblock {\em CoRR}, abs/2107.05131, 2021.

\bibitem{Kakimura2021}
Krist{\'o}f B{\'e}rczi, Naonori Kakimura, and Yusuke Kobayashi.
\newblock Market pricing for matroid rank valuations.
\newblock {\em SIAM Journal on Discrete Mathematics}, 35(4):2662--2678, 2021.

\bibitem{Hall}
Philip Hall.
\newblock {On Representatives of Subsets}.
\newblock {\em Journal of the London Mathematical Society}, s1-10(1):26--30, 01
  1935.

\bibitem{Gul1999}
Faruk Gul and Ennio Stacchetti.
\newblock Walrasian equilibrium with gross substitutes.
\newblock {\em J. Econom. Theory}, 87(1):95--124, 1999.

\bibitem{PaesLeme2020}
Renato Paes~Leme and Sam Chiu-wai Wong.
\newblock Computing {W}alrasian equilibria: fast algorithms and structural
  properties.
\newblock In {\em Proceedings of the {T}wenty-{E}ighth {A}nnual {ACM}-{SIAM}
  {S}ymposium on {D}iscrete {A}lgorithms}, pages 632--651. SIAM, Philadelphia,
  PA, 2017.

\end{thebibliography}

\end{document}